\newcommand{\appendixnumberline}[1]{Appendix #1.\space}
\newcommand{\vecind}[1]{\mbox{\scriptsize \boldmath $#1$}}
\newcommand{\pd}{\partial}
\let\oldappendix\appendix
\renewcommand{\appendix}{%
  \addtocontents{toc}{\let\protect\numberline\protect\appendixnumberline}%
  \renewcommand{\@seccntformat}[1]{
  \bfseries Appendix \Alph{section}. }%
  \oldappendix
}
\theoremstyle{plain}
\newtheorem{thm}{Theorem}
\definecolor{navycol}{RGB}{100,150,160}
   \definecolor{pinkcol}{RGB}{242,55,55}
   \definecolor{bluecol}{RGB}{50,205,50}
   \definecolor{bluecol}{RGB}{30,144,255}
   \definecolor{yellowcol}{RGB}{255,252,134}
   \definecolor{lbluecol}{RGB}{214,252,168}
\def\theequation{\arabic{section}.\arabic{equation}}
\newcommand{\be}{\begin{eqnarray}}
\newcommand{\red}{\color{red}}
\newcommand{\blue}{\color{blue}}
\newcommand{\ee}{\end{eqnarray}}
\newcommand{\nn}{\nonumber \\}
\newcommand{\lb}{\label}
\newcommand{\p}[1]{(\ref{#1})}
\newcommand{\ga}{\lower.7ex\hbox{$
\;\stackrel{\textstyle>}{\sim}\;$}}
\newcommand{\vecg}[1]{\mbox{\boldmath $#1$}}
\newcommand{\la}{\lower.7ex\hbox{$
\;\stackrel{\textstyle<}{\sim}\;$}}
\newcommand{\bc}{\begin{center}}
\newcommand{\ec}{\end{center}}
\let\OLDthebibliography\thebibliography
\renewcommand\thebibliography[1]{
  \OLDthebibliography{#1}
  \setlength{\parskip}{4pt}
  \setlength{\itemsep}{0pt plus 0.3ex}
}
\begin{document}

\begin{titlepage}

\vspace*{0.2cm}

\renewcommand{\thefootnote}{\star}
\begin{center}

{\LARGE\bf  Witten index in $4d$ supersymmetric gauge theories}\\

\vspace{0.5cm}

\vspace{1.5cm}
\renewcommand{\thefootnote}{$\star$}

\quad {\large\bf Andrei~Smilga} 
 \vspace{0.5cm}

{\it SUBATECH, Universit\'e de Nantes,}\\
{\it 4 rue Alfred Kastler, BP 20722, Nantes 44307, France;}\\
\vspace{0.1cm}

{\tt smilga@subatech.in2p3.fr}\\

\end{center}
\vspace{0.2cm} \vskip 0.6truecm \nopagebreak

\begin{abstract}
\noindent  
We present a review of Witten index calculations in different supersymmetric gauge theories in four dimensions: supersymmetric electrodynamics, pure ${\cal N} = 1$ supersymmetric Yang-Mills theories and also SYM theories including matter multiplets --- both with chirally symmetric and asymmetric content.
 \end{abstract}

\vspace{1cm}
\bigskip

\newpage

\end{titlepage}

\tableofcontents

\setcounter{footnote}{0}

\setcounter{equation}0

\section{Introduction} 

Supersymmetric theories in four dimensions are characterized by the algebra\footnote{The supercharges may carry an additional flavor index; in this case we are dealing with an {\it extended} supersymmetric theory. But we focus in this review on  ${\cal N} = 1$ theories involving only one pair of complex supercharges.}
\be
\lb{SUSY-alg}
\{\hat Q_\alpha, \hat {\bar Q}_{\dot\alpha} \} \ =\ 2(\sigma^\mu)_{\alpha\dot\alpha} \hat P_\mu,
 \ee
where $\hat Q_\alpha$ and $\hat {\bar Q}_{\dot\alpha}$  are  Hermitially conjugated supercharges belonging to the representations $(\frac 12, 0)$ and $(0, \frac 12)$ of the Lorentz group and $\hat P_\mu$ is the operator of 4-momentum. The anticommutators $\{\hat Q_\alpha, \hat Q_\beta\}$ and $\{\hat{\bar Q}_{\dot\alpha}, \hat {\bar Q}_{\dot\beta}\}$ vanish.

In all such theories, the spectrum of the excited states in the Hamiltonian $\hat P_0$ is split into degenerate quartets. Each such quartet includes a state $\Psi$ annihilated by both supercharges $\hat {\bar Q}_{\dot\alpha}$. It is {\it bosonic} in a sense that it carries an integer spin.  Then there are two {\it fermionic} states $\Psi_\alpha 
= \hat Q_\alpha \Psi$ carrying a half-integer spin and a state $\Psi' = \hat Q_1 \hat Q_2 \Psi$. The latter state is annihilated by $\hat Q_\alpha$ and is also bosonic.  The states in the quartets have positive energies.

But the spectrum may include also the {\it vacuum} states that are annihilated by all the supercharges and, in a consequence of the algebra \p{SUSY-alg}, have zero energies. If such states are present, people say that supersymmetry is not broken. There are also theories where the zero-energy states are absent and, for any eigenstate $\Psi$ of the Hamiltonian, $\hat Q_\alpha \Psi$ and 
 $\hat {\bar Q}_{\dot\alpha} \Psi$ do not vanish simultaneously. If this happens, supersymmetry is spontaneously broken.

The numbers of bosonic and fermionic states in any excited quartet or in a set of such quartets is  equal. But it need not to be so for the vacuum states.  The  Witten index is defined as the difference between the number of bosonic and fermionic vacuum states:
\be
\lb{IW}
I_W \ =\ n_B^{(0)} -  n_F^{(0)}\,.
 \ee
The importance of this object stems from the fact that it is {\it invariant} under a smooth deformation of the theory that keeps the algebra \p{SUSY-alg}. Indeed, some vacuum states may be shifted from zero and acquire a positive energy, but, bearing in mind the degeneracy mentioned above, this is possible only for a {\it quartet} of states leaving the difference \p{IW} intact.

This suggests a clever way to evaluate the index in a complicated theory \cite{Wit82}: deform it to make it as simple as possible and find the index in the deformed theory. In the original theory, it must be the same. 

If the index thus determined is nonzero, one can be sure that supersymmetry is not broken. If the index is zero, one can say that supersymmetry is {\it probably}  broken, but cannot be sure of it --- counterexamples are known. An additional analysis is required in this case. Note that to understand whether supersymmetry is broken or not in a given theory is important for hypothetic phenomenological applications. In fact, this is the main motivation of Witten index studies.

One of the ways to deform the theory is to put it in a small spatial box. For some theories, one can get rid of the spatial field dependence whatsoever and reduce a field theory problem to a quantum mechanical problem, which is, of course, much simpler.

Bearing in mind the double  degeneracy of the excited states (four-fold for $4D$ theories and their reductions), the Witten index \p{IW} can be represented as a {\it supersymmetric partition function}, 
\be
\lb{SUSY-Z}
I_W \ =\ {\rm Tr} \left\{ (-1)^{\hat F} e^{-\beta \hat H} \right\}\,
 \ee
where $\hat F$ is the  operator of the fermion charge, which can be well defined in almost all theories\footnote{There are exceptions, but they are beyond the scope of our review. The {\it fermion parity} $(-1)^F$
is well defined in all supersymmetric systems.}, being even for bosonic and odd for fermionic states, and $\beta$ is an arbitrary real number having the meaning of inverse temperature. The RHS of Eq.\p{SUSY-Z} can be represented as a
functional integral:

\be
   \lb{path-Grassmann}
   I_W   \ \propto  \ 
   \int \prod_{j\tau} dq_j(\tau) \prod_{a\tau}   d\psi_a(\tau) d \bar \psi_a (\tau) 
 \exp\left\{\!\! - \!\!\int_0^\beta \!\! L_E[q_j(\tau), \psi_a(\tau), \bar \psi_a (\tau)]  d\tau \right \}\!,
    \ee
where  $L_E$ is the Euclidean Lagrangian depending on the  bosonic ($q_j$) and fermionic ($ \psi_a,  \bar \psi_a$) dynamic variables in the reduced mechanical system, which satisfy    the  periodic boundary conditions in the imaginary time $\tau$:
 \be
 \lb{bc-SUSY}
 q_j(\beta) = q_j(0); \quad \psi_a(\beta) = \psi_a(0); \quad  \bar \psi_a(\beta) = \bar \psi_a(0)\,.
  \ee
 For many  systems (though not for all of them --- see the discussion in   Sec. 6),  the higher Fourier harmonics in the expansion,
\be
q_j(\tau) \ =\ \sum_n q_j^{(n)} e^{2\pi i n \tau/\beta}, \qquad \psi_a(\tau) \ =\ \sum_n \psi_a^{(n)} e^{2\pi i n \tau/\beta},
\ee
can be, for small $\beta$,  effectively integrated over, and
the functional integral is then reduced for to an {\it ordinary} phase space integral  over the {\it constant} field configurations \cite{Cecotti}:
\begin{empheq}[box=\fcolorbox{black}{white}]{align}
\lb{IW-int-final}
I_W \ =\ \lim_{\beta \to 0} \int \prod_j \frac {dp_j dq_j}{2\pi} \prod_a d\psi_a d \bar \psi_a \,  e^{- \beta H(p_j, q_j; \,\bar\psi_a, \psi_a) } \, ,
 \end{empheq}
Calculating this integral allows one to evaluate the index in the original theory.

Finally, we would like to mention that the notion of the Witten index represents, if you will, a supersymmetric avatar of some notions known in pure mathematics.  A very similar object is   the so-called {\it equivariant index} introduced by Cartan back in 1950 \cite{Cartan}. The Witten index is also closely related to the Atiyah-Singer index \cite{At-Sin}.
 Supersymmetric language allows one to describe in a rather transparent way some known facts of differential geometry and derive also new results in this field \cite{Wit-Morse, glasses}. 

But all these fascinating issues are not the subject of our review. We will limit the scope of our discussion to supersymmetric gauge theories in four dimensions. With rare exceptions, we will report the results on the evaluation of the index in these theories performed at the end of the last millennium, before 2000. But to the best of our knowledge, a detailed review, where all these results were comprehensively described and explained, has not been written yet. And better late than never...

We use the notations  that are close to the notations of Wess and Bagger \cite{WB} with the main distinction being the  signature of the Minkowski metric, which we choose $(+---)$ rather than $(-+++)$. We define $A^\mu = (A_0, \vecg{A}) \Rightarrow A_\mu = (A_0, -\vecg{A})$, but $\pd_\mu = (\pd_0, \vecg{\pd})$.

The fermion fields are the Weyl 2-component spinors $\psi_\alpha$ and the complex conjugated $\bar\psi_{\dot \alpha}$. The spinor indices are lifted and lowered according to 
\be \lb{low-raise}
X_\alpha \ =\ \varepsilon_{\alpha\beta} X^\beta, \quad X^\alpha \ =\ \varepsilon^{\alpha\beta} X_\beta
 \ee
with $\varepsilon_{\alpha\beta} = -\varepsilon^{\alpha\beta}$ and $\varepsilon_{12} = 1$, and the same for the dotted indices of complex conjugated spinors.
The same concerns the superspace coordinates $\theta_\alpha$ and $\bar \theta_{\dot \alpha}$.
It is convenient to introduce the shortcuts
\be
\lb{prod-spin}
\eta \chi = \eta^\alpha \chi_\alpha =  \chi^\alpha\eta_\alpha, \qquad  \bar\eta \bar\chi = \bar \eta_{\dot\alpha} \bar\chi^{\dot\alpha} =  \bar\chi_{\dot\alpha}  \bar\eta^{\dot\alpha}
\ee
for the products of two Grassmann spinors. 
We will also meet a
matrix-valued  4-vector,
 \be
\lb{sigma-mu}
\sigma^\mu_{\alpha\dot\alpha} \ =\ (\mathbb{1}, \vecg{\sigma})_{\alpha\dot\alpha}, 
 \ee
where $\vecg{\sigma}$ are the Pauli matrices.

That were 4-dimensional notations. We will often perform a dimensional reduction reducing a supersymmetric field theory  to a supersymmetric quantum mechanical (SQM) system. In that case, the convenient notations are slightly different. 

 In (0+1) dimensions, there is no point to distinguish the dotted and undotted indices. The ${\cal N} = 1$ supersymmetric algebra \p{SUSY-alg} goes over into an {\it extended}  SQM algebra: 
 \be
\lb{SQM-alg}
\{\hat Q_\alpha, \hat Q_\beta\} \ = \ \{\hat {\bar Q}^\alpha, \hat {\bar Q}^\beta\} \ =\  0, \nn
\{\hat Q_\alpha, \hat {\bar Q}^{\beta} \} \ =\ 2\delta_\alpha^\beta  \hat H \,,
 \ee
where $\hat {\bar Q}^{\alpha} = (\hat Q_\alpha)^\dagger$.

\vspace{1mm}

In the next two sections, I will expose the results of Witten's analysis \cite{Wit82}.

\setcounter{equation}0
\section{Supersymmetric electrodynamics}
The superfield Lagrangian includes a real vector supermultiplet $V(x, \theta, \bar \theta)$ and a couple of left chiral superfields 
 \be 
S \ =\ s(x_L)  + \sqrt{2} \,\theta \psi(x_L) + \theta^2 F(x_L), \nn 
T \ =\ t(x_L)  + \sqrt{2} \,\theta \xi(x_L) + \theta^2 G(x_L)
 \ee
of opposite electric charges (with $x^\mu_L = x^\mu - i\theta \sigma^\mu \bar \theta$). It reads
 \begin{empheq}[box=\fcolorbox{black}{white}]{align}
\lb{L-SQED}
{\cal L}_{SQED} \ =\ \frac 14 \!\!\int d^2\theta d^2 \bar \theta (\overline S e^{eV} S + \overline T e^{-eV} T) + \nn
\left( \frac 18 \int  d^2\theta \,W^\alpha W_\alpha   + \frac m2 \int \!\! d^2\theta \,ST \ +\ {\rm c.c.} \right),
 \end{empheq}
where 
\be
\lb{W-alpha}
W_\alpha \ =\ \frac 18 \bar D \bar D D_\alpha V, \qquad \overline W_{\dot\alpha} \ =\  \frac 18 D  D \bar D_{\dot\alpha} V 
 \ee
with
\be
&&D_\alpha \ =\ \frac \pd{\pd \theta^\alpha} - i (\sigma_\mu)_{\alpha \dot \alpha} \bar \theta^{\dot\alpha} \pd_\mu, \nn 
&&\bar D_{\dot\alpha} \ =\ -\frac \pd{\pd \theta_{\dot\alpha}} + i \theta^\alpha (\sigma_\mu)_{\alpha \dot \alpha} \pd_\mu \,.
\ee
$m$ is the mass of the matter fields and $\pm e$ are their electric charges.
In the Wess-Zumino gauge,
 \be
\lb{WZ-gauge}
V = - 2\theta \sigma^\mu \bar \theta \, A_\mu(x) + 2i (\theta \theta) \bar \theta \bar \lambda(x) -
2i (\bar\theta \bar\theta) \theta \lambda(x) +  (\theta \theta) (\bar\theta \bar\theta) D(x)
  \ee
and
\be
\lb{W-alpha-WZ}
W_\alpha \ =\  i\lambda_\alpha(x_L) + i \theta^\beta F_{\alpha\beta}(x_L)- \theta_\alpha D(x_L) + \theta^2 [\sigma^\mu \partial_\mu \bar \lambda(x_L)]_\alpha\,.  
\ee
  Integrating \p{L-SQED} over $d^2\theta$ and $d^2 \bar \theta$ with the convention
\be \int d^2\theta \, \theta^2 \ =\ \int d^2\bar\theta \, \bar\theta^2 \ =\ 2
\ee
and excluding   the auxiliary fields $F,F^*, G, G^*$ and $ D$, we arrive at the component Lagrangian:
\be
\lb{L-SQED-comp}
&&{\cal L} \ =\ - \frac 14 F_{\mu\nu}^2 +  i \lambda \sigma^\mu \partial_\mu \bar \lambda + i\psi \sigma^\mu (\partial_\mu + ie A_\mu ) \bar \psi +
i\xi \sigma^\mu (\partial_\mu - ie A_\mu ) \bar \xi +
\nn
&& |(\partial_\mu - ie A_\mu)  s|^2 \, + \,  |(\partial_\mu + ie A_\mu)  t|^2  
-m^2( s^* s +  t^* t) - m(\psi \xi + \bar \psi \bar \xi) \nn
&& +  ie\sqrt{2}[ (\bar \xi \bar\lambda) t - (\bar \psi \bar \lambda) s - (\xi \lambda) t^*  + (\psi \lambda) s^* ] - \frac {e^2}2 (s^*  s - t^* t)^2\,.
\ee

The physical fields in this theory are the gauge potentials [their dynamical components are $A_{j=1,2,3}(\vecg{x}, t)$],  photino $\lambda_\alpha$,  matter scalars $s,t$ and matter fermions $\psi_\alpha, \chi_\alpha$. 
This Lagrangian was first derived in the pioneer paper \cite{Golfand}. The first two lines in \p{L-SQED-comp} describe interactions of the massive matter fields with the gauge field $A_\mu$, while  the third line (it is indispensible to make the Lagrangian  supersymmetric) includes the Yukawa terms and the quartic scalar potential. 
Note that this Lagrangian is invariant under a discrete (charge conjugation) symmetry:
 \be
 \lb{C-conj}
{\rm C}: \qquad A \to -A, \, \ \lambda \to -\lambda, \, \ \psi \leftrightarrow \xi, \,\  s \leftrightarrow t\,.
 \ee
Now, we deform the theory puttting  it  in a finite spatial box of size $L$, imposing on all the fields the periodic boundary conditions,
\be
\lb{period}
A_j(x+L, y,z) &=&  A_j(x,y+L,z)  =  A_j(x, y,z+L) =  A_j(x, y,z), \nn
s(x+L, y,z) &=&  s(x,y+L,z)  =  s(x, y,z+L) =  s(x, y,z),
\ee
etc., and expanding them into the Fourier series. The characteristic excitation energy of all  nonzero Fourier harmonics is then of order $1/L$. For the zero Fourier modes of the matter fields, it is of order of their mass $m$, which we assume to be large enough (at the level  $\sim 1/L$ or larger). We assume also the coupling constant $e^2$ to be small. Under these conditions, the low-energy dynamics of the theory is determined by  the  zero Fourier modes of the gauge field and its fermion superpartner.

Note that, at finite $L$, the constant potentials $A_j^{(0)}$ cannot be completely gauged away, as they can in the infinite volume. The admissible gauge transformations,
\be
\lb{large-gauge-QED}
\!\!\!\!\!\!\!\! A_j \to A_j + \partial_j \chi(\vecg{x}),  \quad (s, \psi) \to (s,\psi)e^{-ie \chi(\vecind{x})}, \quad (t,\xi) \to (t,\xi)e^{ie \chi(\vecind{x})}, 
\ee
should respect the periodicity conditions \p{period}. This implies\footnote{Implies in a certain sense --- see the discussion in   Sec.~2.1.} that the admissible domain of $A_j^{(0)}$ is the dual torus, $A_j^{(0)} \in 
[0, \frac {2\pi}{eL})$. \lb{domain} As we will shortly see, the characteristic excitation energy for this mode is 
 \be
\lb{low-En-scale}
E_{\rm char} \sim e^2/L\,,
 \ee
which is small compared to $1/L$ and $m$.
  
The low-energy dynamics of the system is then described by the effective Lagrangian including only three bosonic variables $c_j = A_j^{(0)}$ and two Grassmann variables $\eta_\alpha = 
\lambda_\alpha^{(0)}$. 
In the physical slang (which goes back to the classic paper \cite{BO} devoted to  the spectrum of the hydrogen molecule --- the positions of the electrons  were treated there as {\it fast} variables and the position of the protons as {\it slow} ones), the  variables $c_j$ and $\eta_\alpha$ are slow and all other variables are fast to be ``integrated out".

The effective Lagrangian  has a very simple form:\footnote{Just delete in \p{L-SQED-comp} all the terms including the matter fields, neglect nonzero modes of $A_j$ and $\lambda_\alpha$ and multiply the result by the spatial volume $V = L^3$ (the $L$ in the RHS is not the Lagrangian, of course).}
  \be
L^{\rm eff} \ = \ L^3 \left(\frac {\dot c_j^2}2  + i \eta \dot {\bar \eta} \right)\,.
 \ee
The corresponding quantum effective Hamiltonian, 
 \begin{empheq}[box=\fcolorbox{black}{white}]{align}
\lb{Heff-SQED}
\hat H^{\rm eff} \ = \ \frac {\hat P_j^2}{2L^3} \ =\ -\frac 1{2L^3}\triangle_c\,,
 \end{empheq}
describes free motion over the dual torus of length $a = 2\pi/(eL)$. The characteristic momentum (not the physical momentum!) for the low-energy states is therefore $P_{\rm char} \sim eL$. Bearing this in mind, we reproduce the estimate \p{low-En-scale}.

The Hamiltonian \p{Heff-SQED} is supersymmetric: it satisfies the algebra \p{SQM-alg} with the supercharges 
\be
\lb{superzar}
\hat Q_\alpha \propto (\sigma_j)_\alpha{}^\beta \eta_\beta \hat P_j, \qquad  \hat {\bar Q}^\alpha \propto (\sigma_j)_\beta{}^\alpha \bar \eta^\beta \hat P_j \,.
 \ee
 It acts on the wave functions $\Psi(c_j, \eta_\alpha)$. 
The full spectrum includes four sectors:
\be
\lb{sectors}
\!\!\!\!\!\Psi_0(\vecg{c}, \eta_\alpha) = f_0(\vecg{c}), \qquad \Psi_{1,2}(\vecg{c}, \eta_\alpha) = \eta_{1,2} f_{1,2}(\vecg{c}), \qquad
\Psi_3(\vecg{c}, \eta_\alpha) = \eta_1\eta_2 f_3(\vecg{c}) \,.
 \ee
We  impose on all $f_k(\vecg{c})$ periodic  boundary conditions:\footnote{More general boundary conditions will be considered in   Sec. 2.1.}
   \be
\lb{bc-wave}
f_k (\vecg{c} + a\vecg{n}) = f_k (\vecg{c})
 \ee 
with  integer $\vecg{n}$. Then 
 \be
f_k(\vecg{c}) \ =\ \sum_{\vecind{m}} b_{k\vecind{m}} e^{ieL \vecind{c} \cdot{\vecind{m}}}
 \ee
with integer $\vecg{m}$. 
The spectrum splits into degenerate quartets with energies
 \be
E_{\vecind{m}} \ =\ \frac {e^2}{2L} \vecg{m}^2\,.
 \ee
It involves four zero-energy states:
\be
\Psi_0^{(0)} = 1, \qquad \Psi_1^{(0)} = \eta_1, \qquad \Psi_2^{(0)} = \eta_2, \qquad \Psi_3^{(0)} = \eta_1 \eta_2 \,.
 \ee
The states $\Psi_{0,3}^{(0)}$ are bosonic, while the states $\Psi_{1,2}^{(0)}$  are fermionic. It follows that the Witten index vanishes:
 \be
I_W \ =\ 2-2 = 0\,.
 \ee
Generically, this would not allow us to conclude anything about the vacuum structure of the original undeformed Hamiltonian. The supersymmetric quartet of lowest states  could in principle move up from zero. They do not do so, however, and there is a {\it special reason} for that.

As was mentioned, the Lagrangian \p{L-SQED-comp} has a discrete symmetry \p{C-conj}. This allows us to consider alongside with the supersymmetric partition function \p{SUSY-Z} an object 
  \be
\lb{IC}
I_C \ =\ {\rm Tr} \left\{\hat C (-1)^{\hat F} e^{-\beta \hat H}\right\}
 \ee
 where the states with negative $C$-parity enter the sum with an extra minus. For the effective Hamiltonian \p{Heff-SQED}, the symmetry \p{C-conj} boils down to
\be
\lb{C-eff}
\vecg{c} \to - \vecg{c}, \qquad \eta_\alpha \to -\eta_\alpha\,.
 \ee
The states  $\Psi_{0,3}^{(0)}$ are even under this transformation, whereas the states  $\Psi_{1,2}^{(0)}$ are odd. As a result, the contribution of the vacuum states in $I_C$ is $2 + 2 = 4 \neq 0$.

And the excited states of the Hamiltonian do not contribute there. The reason is simple: the supercharges \p{superzar} commute with $\hat C$ and hence all the states in a quartet including the states that are related  by the action of the supercharges have the same $C$-parity and do not contribute in  \p{IC}.

We can now proceed with the same reasoning as for the ordinary Witten index. Given that only the vacuum states contribute in $I_C$, its value, being integer, cannot depend on smooth deformations, and we may say that $I_C = 4$ also for the original undeformed system. And this implies that the latter has four supersymmetric zero-energy vacuum states and supersymmetry is not broken. 

The following important remark is, however, in order. The validity of this calculation for the original field theory may look not so obvious because the result \p{Heff-SQED} for the effective Hamiltonian is valid only in the {\it leading} Born-Oppenheimer approximation. There are perturbative corrections which become large in the regions close to the corner of the box where  the dimensionless parameter
 \be
\lb{kappa-BO}
\kappa  \ = \ \frac 1{e|L\vecg{c} |^3}
 \ee
ceases to be small. These corrections 
have been evaluated  \cite{BOcorners} in the first order in $\kappa$ both for SQED and for the pure SYM theory to be discussed in the next section. They do not bring about an effective potential, but modify the kinetic term in \p{Heff-SQED} to endow it with a nontrivial metric. The full supersymmetric effective Hamiltonian acquires
the form
\be 
\lb{Heff-corrections}
\hat  H^{\rm eff} \ = \ -\frac 12 f \stackrel{\longrightarrow}{\partial^2_k} f + i \varepsilon_{kpl} \, \hat{\bar \eta} \sigma_l \eta \,f (\pd_p f) \pd_k + \frac 16 f (\pd_k^2 f)\, 
 (\hat{\bar\eta} \sigma_l \eta)^2,
\ee
where we have set $L = 1$, $\hat {\bar \eta}^\alpha = \pd/\pd \eta_\alpha$. The derivatives $\pd_k$  in the first term as well as $\pd/\pd \eta_\alpha$  act on whatever they find on the right, and $f(\vecg{c}) = 1 - \kappa/4 + o(\kappa)$. 
 One can observe that the account of the corrections does not change the estimate for the index.

Furthermore, one can argue that the results are not changed in any order in $\kappa$. Indeed, though we do not know the behaviour of the vacuum wave functions at the vicinity of the corner, this region represents a small fraction $\sim e^2$ of the full configuration space, and we can evaluate quite reliably the form of the wave functions in its main part, which are just constants to the leading order, and they stay close to constants when the corrections in $\kappa$ are taken into account. 
One can say that the BO corrections {\it deform} the effective Hamiltonian, and the Witten index must stay invariant under such deformation.

\subsection{Universes}

There is a subtlety which we have ignored up to now in our analysis. 
On p. \pageref{domain} we wrote that 
the admissible domain of $A_j^{(0)}$ is 
\be
\lb{dual}
A_j^{(0)} \in 
\left[0, \frac {2\pi}{eL}\right)\,.
\ee 
 In a certain sense, it is correct. Indeed, the field configurations $A_j$ and $A_j +  2\pi n_j /(eL)$ are related by  admissible gauge transformations that respect the periodicity condition \p{period}. If we require for the wave function to be invariant under gauge transformations, the conditions \p{bc-wave} follow and the statement above  is correct without reservations.
However, strictly speaking, the Gauss law constraints, which should be imposed on wave functions, only imply that the latter are invariant under {\it infinitesimal} gauge transformations and hence under topologically trivial finite gauge transformations that can be reduced to trivial ones by a set of continuous deformations. And a gauge transformation of the type
 \be
\lb{omega-nj}
\hat \omega(n_j): && A_j \to A_j + \frac {2\pi}{eL}n_j, \nn
&& (s, \psi) \to e^{-2\pi i \vecind{n}\cdot \vecind{x} /L}(s, \psi), \qquad (t, \xi) \to e^{2\pi i  \vecind{n} \cdot \vecind{x} /L}(t, \xi) \,,
 \ee
which can bring any $U(1)$ gauge field to a field whose zero Fourier mode lies in the dual torus \p{dual}, is not topologically trivial   but represents an uncontractible loop. 

In such a case, we may impose  generalized constraints:
\be
\lb{sdvig}
\hat \omega(n_j) \Psi  \ =\ e^{ i  \vecind{\vartheta}\cdot \vecind{n}} \Psi, \qquad \qquad \vartheta_j \in [0,2\pi)\,.
 \ee
The wave functions in a {\it universe}\footnote{A more traditional term is {\it superselection sector}, but recently the name ``universe" has gone into use  (see e.g. Ref. \cite{Zohar}) and we are following the crowd.}  with given $\vartheta_j$  do not ``talk" to the functions in some other universe in a sense that the result of the action of a local physically relevant operator  on a function   from a given  universe belongs to the same universe.

It follows from \p{omega-nj}, \p{sdvig} that the effective wave functions $\Psi^{\rm eff}(c_j)$ in the universe $\vecg{\vartheta}$ satisfy the constraints
\be
\lb{cons-shift}
\Psi^{\rm eff}_{\vecind{\vartheta}} \left( c_j + \frac {2\pi}{eL}n_j \right) \ =\ e^{ i \vecind{\vartheta} \cdot \vecind{n}} \, \Psi^{\rm eff}_{\vecind{\vartheta}} (c_j) \,.
 \ee
Then any such function has the form
 \be
\Psi^{\rm eff}_{\vecind{\vartheta}} (c_j)  \ =\ e^{ie L \vecind{\vartheta} \cdot \vecind{c}/(2\pi)} \, \Psi^{\rm eff}_{\vecind{0}} (c_j)\,,
  \ee
where $ \Psi^{\rm eff}_{\vecind{0}} (c_j)$ is periodic on the dual torus. The ground states in the spectrum of the effective Hamiltonian \p{Heff-SQED} have the energy
\be
\lb{En-SQED-vartheta}
E_0(\vecg{\vartheta}) \ =\ \frac {e^2 \vecg{\vartheta}^2}{8\pi^2L}\,.
 \ee
Thus, the true supersymmetric vacua dwell in the universe $\vecg{\vartheta} = 0$. Their wave functions are invariant under all gauge transformations --- topologically trivial or not. The results $I_W = 0$ and $I_C = {\rm Tr}\{\hat C(-1)^F e^{-\beta \hat H}\} = 4$  for the index remain intact. 

Looking at \p{En-SQED-vartheta}, a reader may think that the system has continuous spectrum, recall that we said on several occasions that the notion of index is ill-defined in such systems, and get confused.

However, the spectrum becomes continuous only if the spectra of all the universes are brought together in a common heap. One should not do that. A correct interpretation 
of Eqs. \p{cons-shift} -- \p{En-SQED-vartheta} is the following:

\vspace{1mm}

{\it When we put SQED in a finite box and impose periodic boundary conditions, infinitely many quantum systems corresponding to infinitely many universes can be defined.
In all the universes with nonzero $\vecg{\vartheta}$, supersymmetry is spontaneously broken (indeed, in that case, the ground states are not annihilated by the action of 
the supercharges). In the universe $\vecg{\vartheta} = 0$, supersymmetry is not broken
 and the spectrum includes two bosonic and two fermionic zero-energy states.}

\subsection{Fayet-Illiopoulos model}

Let us modify the model \p{L-SQED} by adding to the Lagrangian the extra term \cite{FI}
 \be
\lb{FI} 
{\cal L}_{FI} \ =\ \frac \xi 4 \int d^4\theta \,V\,.
 \ee
This model is still  gauge-invariant and supersymmetric, but supersymmetry is {\it spontaneously broken} in this case.
Indeed, excluding the auxiliary fields, we arrive at the following expression for the classical potential:
 \be
V(s,s^*,t,t^*) \ =\ m^2(s^* s + t^* t) + \frac {[\xi + e(s^* s - t^* t)]^2}2 \,.
 \ee
If $\xi \neq 0$, this potential is positive definite, the vacuum energy is not zero, and supersymmetry is broken.

This conforms with the Witten index analysis. As we have seen, in the undeformed theory, the index is zero, but supersymmetry  is  still not broken because of the nonzero value of  the generalized index \p{IC}. However, the latter does not have much meaning in the deformed theory, because the FI term {\it breaks} the charge conjugation symmetry \p{C-conj}. We are only left with the ordinary Witten index, it is zero, and there is no wonder that supersymmetry  breaks down. 

\setcounter{equation}0
\section{Supersymmetric Yang-Mills theory: unitary  gauge groups}
We start our discussion of non-Abelian gauge theories with supersymmetric YM theories not including matter superfields and described by the superfield Lagrangian
 \be
\lb{SYM}
{\cal L}_{\rm SYM} \ =\  \frac 1{4} {\rm Tr} \int d^2\theta \, \hat W^\alpha \hat W_\alpha \ + \ {\rm c.c.} 
 \ee 
with 
\be
\hat W_\alpha \ =\ \frac 1{8g} \bar D^2 (e^{-g\hat V} D_\alpha e^{g\hat V} )
 \ee
($\hat V = V^a t^a$ and $\hat W_\alpha = W_\alpha^a t^a$). In components:
\begin{empheq}[box=\fcolorbox{black}{white}]{align}
\lb{SYM-comp}
{\cal L}_{\rm SYM} \ =\ - \frac 14  F^a_{\mu\nu}  F^a_{\mu\nu}  + i  \lambda^a \sigma^\mu {\cal D}_\mu {\bar \lambda^a} + \frac 
{D^a D^a}2 \,,
 \end{empheq}
 where $ F_{\mu\nu}^a = \pd_\mu A_\nu^a - \pd_\nu A_\mu^a + g f^{abc} A_\mu^b A_\nu^c$ and
 ${\cal D}_\mu \bar\lambda^a =  \partial_\mu \bar \lambda^a + g f^{abc} A^b_\mu \bar \lambda^c$. 
$D^a$ are the auxiliary fields that are equal to zero due to equations of motion.

 We also assume in this section that the gauge group ${\cal G}$ is $SU(N)$. More complicated systems based on other classical Lie groups will be considered later. 

To regularize the system in the infrared and make the spectrum of the Hamiltonian discrete, 
we proceed as before and put the system in a finite  spatial box. One can then impose periodic boundary conditions like in \p{period}, but in the non-Abelian case, one can also impose the so-called {\it twisted} boundary conditions to be discussed in the second part of the section. We say right away that the two
ways of handling the system bring about the same physical answer: $I_W  = N$ and supersymmetry is {\it not} broken.  

\subsection{Periodic boundary conditions}
We take the Lagrangian \p{SYM-comp}, eliminate the auxiliary field $D^a$  using the equation of motion $D^a =0$, impose the boundary conditions
\be
\lb{period-nab}
\!\!\!\! A_j^a(x+L, y,z) &=&  A_j^a(x,y+L,z)  =  A_j^a(x, y,z+L) =  A_j^a(x, y,z), \nn
\!\!\!\!  \lambda_\alpha^a(x+L, y,z) &=&  \lambda_\alpha^a(x,y+L,z)  =  \lambda_\alpha^a(x, y,z+L) =  \lambda_\alpha^a(x, y,z),
\ee
and expand $A_j^a(\vecg{x})$ and  $\lambda_\alpha^a(\vecg{x})$ in the Fourier series.

In the Abelian case, the vacuum dynamics was associated with the constant gauge field modes, because the configurations $A_j(x,y,z) = c_j$ (and only them) corresponded to the zero field strength and zero vacuum energy. One could interprete the motion over the dual torus discussed in the previous section as the motion on the {\it moduli space} of classical vacua. 

What are classical vacuum field configurations in the non-Abelian case? Locally, any configuration $\hat A_j(\vecg{x})$ with zero field strength (a {\it flat connection} in the mathematical language) can be represented as
\be
\lb{flat-A}
\hat A_j = -\frac ig (\pd_j U) U^{-1}\,,
 \ee
where $U(\vecg{x})$ is an element of the gauge group $\cal G$. To find a global solution to the equation 
\p{flat-A}, we pick out a particular point 
in our cube, say, the vertex $(0,0,0)$ and define a set of 
holonomies (Wilson loops along nontrivial cycles of the torus)
 \be
 \label{hol}
\Omega_1 \ =\ P \exp \left\{ i g \int_0^L \hat A_1(x,0,0) \, dx \right\},
\nonumber \\ 
\Omega_2 \ =\ P \exp \left\{ i g \int_0^L \hat A_2(0,y,0) \, dy \right\},
\\ 
\Omega_3 \ =\ P \exp \left\{ i g \int_0^L \hat A_3(0,0,z) \, dz \right\},
\nonumber 
  \ee
where $P\exp\{\cdots\}$ is a path-ordered exponential --- a product of an infinite number of factors:
 \be
\lb{Omega-prod}
\Omega_1 \ =\ \lim_{n \to \infty} \prod_{l=0}^{n-1} \left[\mathbb{1} + i g \frac Ln \hat A_1 \left(\frac {lL}n,0,0\right)  \right]
\ee
and similarly for $\Omega_2$ and $\Omega_3$.

Let us prove a simple theorem (we follow Ref. \cite{KRS}).

\begin{thm}
For a periodic flat connection, the holonomies \p{hol} commute. Inversely,
 for any set of commuting matrices $\Omega_{1,2,3}$ belonging to a  connected, simply connected group $\cal G$, a periodic flat 
connection on the 3-torus exists such that $\Omega_j$ are the holonomies \p{hol}.  
\end{thm}
\begin{proof}
The proof of the direct theorem is simple.
Infinitesimally,
$$
\mathbb{1} + \frac ig A_j(\vecg{x}) dx^j  \ =\  \mathbb{1} + U^{-1}(\vecg{x}) \,\pd_j U(\vecg{x}) \,dx^j $$
and hence $ U(x+dx, y, z) \ =\ U(\vecg{x}) \left[\mathbb{1} + \frac ig A_j(\vecg{x}) dx^j \right] $, etc. 
 Multiplying over all the factors in Eq. \p{Omega-prod}, we deduce 
 that the function $U(x,y,z)$ 
 satisfies the boundary conditions. 
\be
 \label{bcU}
U(x+L,y,z) \ = U(x,y,z)  \Omega_1\,, \nn
U(x,y+L,z) \ = U(x,y,z)  \Omega_2\,, \nn
U(x,y,z+L) \ = U(x,y,z)  \Omega_3 
 \ee
with constant commuting $\Omega_i$  (commutativity of $\Omega_i$ is 
{\it necessary} for the matrix $U$ to be uniquely defined). 

\vspace{1mm}

 To prove the second part of the theorem, we have to construct 
the matrix $ U(x,y,z)$ satisfying the boundary conditions \p{bcU} for an arbitrary commuting triple $(\Omega_1, \Omega_2, \Omega_3)$. We do so in several steps.

If one chooses $U(0,0,0) = 1$, the matrices 
 $\Omega_j$ are the holonomies (\ref{hol}).
 We construct 
the matrix $ U(x,y,z)$ in several steps. 
 \begin{itemize}
\item
At the first step, we define
 \be
U(x,0,0) \ =\ \exp\left\{i T_1 \frac xL \right\}, \nonumber \\
U(0,y,0) \ =\ \exp\left\{i T_2 \frac yL \right\}, \\
U(0,0,z) \ =\ \exp\left\{i T_3 \frac zL \right\}, \nonumber 
  \ee
where $\Omega_j = \exp\{i T_j\}$. (The choice of $T_j$ once $\Omega_j$ 
are given is not unique, but it is irrelevant. Take {\it some} set of the 
logarithms of the holonomies $\Omega_i$.) Having done this, we can extend the 
construction over all other edges of the 3-cube so that the boundary 
conditions (\ref{bcU}) are fulfilled. For example, we define
 $$
U(L,y,0) \ =\  \exp\left\{i T_2 \frac yL \right\} \Omega_1 ,\ \ \ 
U(x,L,0) \ =\  \exp\left\{i T_1 \frac xL \right\}  \Omega_2
 $$
[then $U(L,L,0) = \Omega_2 \Omega_1 =  \Omega_1 \Omega_2$], etc.
\item With $U(x,y,z)$ defined on the edges of the cube in hand, we can 
continue $U$ also to the {\it faces} of the cube due to the fact that, 
according to our assumption, $\pi_1({\cal G}) = 0$ i.e. any loop in the group is
contractible. Let us do that first for 3 faces adjacent to the vertex (0,0,0).
\item With $U(x,y,0),\ U(x,0,z)$, and $U(0,y,z)$ in hand, we can find 
$U(x,y,z)$ on the other 3 faces of the cube:
\be 
&&U(x,y,L)  =  U(x,y,0)\Omega_3 ,\qquad U(x,L,z)  =  U(x,0,z) \Omega_2, \nn 
&&U(L,y,z)  =  U(0,y,z)  \Omega_1\,.
 \ee

\item With $U(x,y,z)$ defined on the surface of the cube, we can continue 
it into the interior using the fact that  $\pi_2({\cal G}) = 0$ for all simple Lie groups.
 \end{itemize}

By construction, $U(x,y,z)$ satisfies the boundary conditions (\ref{bcU})
and hence $A_j(x,y,z)$ are periodic.
\end{proof}
The skeleton construction above  is rather common in homotopy theory and
can be found also in physical literature (see e.g. Ref. \cite{Baal}).

Now, for the unitary and also for the symplectic groups (but not for higher ortogonal and exceptional groups --- see the next section), three commuting group elements  always belong to a {\it maximal torus} 
such that we can represent $\Omega_j = e^{i T_j}$ with commuting $T_j$ that belong to a Cartan subalgebra. 
The function $U(x,y,z)$ satisfying the boundary conditions \p{bcU} may then  be chosen as
\be
\lb{U-for-flat}
 U(x,y,z) \ =\ \exp\left\{ i T_1 \frac xL\right\} \exp\left\{ i T_2 \frac yL\right\} \exp\left\{ i T_3 \frac zL\right\}.
\ee
 According to \p{flat-A}, this gives constant  gauge potentials $\hat A_j^{(0)} = T_j/(gL)$. 

By conjugation we bring the Cartan subalgebra $\mathfrak{C}$, where $T_j$ belong, to a convenient form $\mathfrak{C}_0$. For $SU(N)$, it is a set of  diagonal matrices  of order $N$ with zero trace. This is a vector space of dimension  $r = N-1$. The standard basis in this space are the matrices
$\{t^3, t^8, \ldots, t^{N^2-1}\}$ satisfying  the standard normalization  Tr$\{t^a t^b\} = \delta^{ab}/2$.
The moduli space of classical vacua is parameterized by 
 $N-1$ three-vectors: 
\be
\lb{Cartan}
\vecg{c}^1 = \vecg{A}^{3(0)},  \quad \vecg{c}^2 = \vecg{A}^{8(0)},  \ldots  
 \ee
In a similar way as it was the case in the Abelian theory, the range of $c_j^a$ is finite. Indeed, two different sets of $c_j^{a = 1, \ldots, r}$ that correspond to the same holonomies $\Omega_j$ can be obtained from one another by a gauge transformation that respects the boundary conditions \p{bcU}. Besides, there are {\it Weyl reflections} acting on the three copies of the Cartan subalgebra. The admissible $c_j^a$ lie in the {\it Weyl alcove}.

Let us explain how it works in the simplest $SU(2)$ example. The slow variables are $c_j = A_j^{3(0)}$ and their superpartners $\eta_\alpha = \lambda_\alpha^{3(0)}$. The wave function must be invariant under the periodic gauge transformations,
 \be
\lb{shift}
t^3 c_j \ \to \omega t^3 c_j \omega^{-1} - \frac ig  (\pd_j \omega) \omega^{-1},  \qquad t^3 \eta_\alpha \ \to \omega t^3 \eta_\alpha \omega^{-1}
 \ee
with 
\be
\lb{shift2}
\omega(\vecg{x}) \ =\ \exp\left\{ \frac {4\pi i x_j n_j t^3}L  \right\}, \qquad \qquad n_j \in \mathbb{Z}\,.
 \ee
In contrast to the Abelian gauge transformations \p{omega-nj}, the transformation \p{shift}, \p{shift2} is topologically trivial --- the function \p{shift2} represents a map $S^1 \to SU(2)$ and $\pi_1[SU(2)] = 0$.

The Weyl reflection in this case is  simply
 \be
\lb{Weyl2}
c_j \ \to \ -c_j, \qquad \eta_\alpha \ \to \ -\eta_\alpha\,.
 \ee
Each such reflection is realized as a constant gauge transformation,
\be 
\omega = e^{i \pi t^1} = 2i t^1 \qquad {\rm or} \qquad \omega = e^{i \pi t^2} = 2i t^2\,.
 \ee
The wave function must be invariant under any of these transformations.
Hence it is completely defined by its values in the following Weyl alcove:
\be
0 \leq c_j \leq \frac {2\pi}{gL}\,.
 \ee
For $SU(N)$, the wave functions are invariant under the shifts of $c_j^a$ similar to \p{shift}, \p{shift2} with a change
\be
\lb{shiftN}
\omega(\vecg{x}) \ =\ \exp\left\{ \frac {4\pi i x_j n_j t^*}L\right\}\,,
 \ee
where  $t^*$ are the elements of the Cartan subalgebra corresponding to the  coroots of $su(N)$ and normalized in the same way as $t^3$. 
The wave functions are also  invariant under Weyl reflections ($N!$ of them) permuting the entries of the diagonal matrix representing an element of
$\mathfrak{C}_0$. The Weyl alcove for $SU(3)$ is depicted (in an ortonormal basis) in Fig. \ref{Weyl}. 

 \begin{figure} [ht!]
      \bc
    \includegraphics[width=0.5\textwidth]{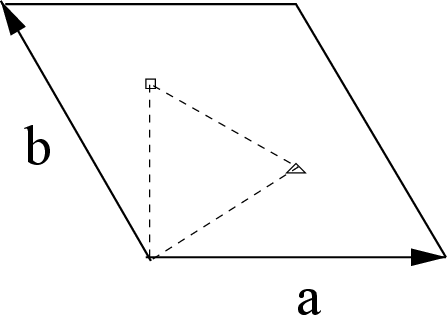}                  
     \ec
    \caption{The dashed triangle represents the Weyl alcove for $SU(3)$. Its area is 6 times smaller than the area of the solid rhombus formed  by the elements of $\mathfrak{C}_0$,
${a} = 2\pi/(gL)\, {\rm diag}(1,-1,0)$ and ${b} = 2\pi/(gL)\, {\rm diag}(0,1,-1)$, proportional to the simple coroots. The vertices of the alcove $\Box$ and $\triangle$ are proportional to the {\it fundamental coweights}: $\Box = \, 2\pi/(3gL){\rm diag}(1,1,-2)$, $\triangle = 2\pi/(3gL)\, {\rm diag}(2,-1,-1)$.}
 \label{Weyl}
    \end{figure}   

If the coupling constant $g$ is small, $\vecg{c}^{a = 1, \ldots, r}$  represent slow variables, which determine together with their superpartners $\eta_\alpha^a$ the vacuum dynamics  of our system.  In the lowest BO order and in the orthonormal basis, the effective Hamiltonian  has a very simple form:
 \begin{empheq}[box=\fcolorbox{black}{white}]{align}
\lb{Heff-SYM}
\hat H^{\rm eff} \ = \ - \frac 1{2L^3} \sum_{a=1}^r  \triangle_{c^a} \,,
 \end{empheq}
where $r = N-1$ is the rank of $SU(N)$.
The zero-energy eigenfunctions of \p{Heff-SYM} represent polynomials of $\eta^a_\alpha$ whose order does not exceed $2r$. However, not all such polynomials are admissible. The wave functions should be invariant under  Weyl reflections.
Imposing this requirement, we are left with only $N$ different vacuum functions:
 \be
\lb{wave-funct}
\Psi(\vecg{c^a}, \eta^a) \ =\ 1, \eta^a \eta^a, \ldots, (\eta^a \eta^a)^{N-1}\,.
 \ee
We derive 
 
 \begin{empheq}[box=\fcolorbox{black}{white}]{align}
I_W \ = \ N \,.
 \end{empheq}

For the SYM theory based on a symplectic gauge group $Sp(2r)$, a similar analysis gives the value of the index $I_W = r+1$.

\subsubsection{Instantons}

We continue the discussion of $SU(N)$.

 In Theorem 6.1, we proved the {\it existence} of  a periodic flat connection for any set of holonomies \p{hol}. But, obviously, there are infinitely many such connections related to one another by a gauge transformations
$$ U(x,y,z) \ \to \  U(x,y,z)   \tilde U(x,y,z)$$
 with periodic $ \tilde U(x,y,z)$. Due to trivial $\pi_1[SU(N)]$ and $\pi_2[SU(N)]$, we can continuously deform $\tilde U(x,y,z)$ so that $\tilde U = \mathbb{1}$ on the whole surface of the cube. But $\pi_3[SU(N)] = {\cal Z}$ is nontrivial. This means that topologically nontrivial $\tilde U(x,y,z)$ exist that cannot be continuously deformed to 
$\tilde U = \mathbb{1}$ in the interior of the cube. 

As a result, admissible $U(x,y,z)$ and the corresponding periodic $\hat A_j(x,y,z)$ are split into distinct topological classes characterized by a degree $q$ of mapping $S^3 \to SU(N)$. This degree may be expressed as the {\it Chern-Simons} topological invariant,
\begin{empheq}[box=\fcolorbox{black}{white}]{align}
\lb{CS}
q \ =\ \frac {g^2}{8\pi^2} \int_{\rm cube\ interior} {\rm Tr} \left\{A \wedge dA - \frac {2ig}3 A \wedge A \wedge A  \right\},
 \end{empheq}
where $A = \hat A_j dx^j$.

Let $\omega$ be a topologically nontrivial gauge transformation relating the trivial classical vacuum $\hat A_j = 0$ to a nontrivial one with $q=1$. Then, similar to what we had in the Abelian theory, the whole Hilbert space is split into the {\it universes}. All wave functions in a universe characterized by the {\it vacuum angle} $\vartheta \in [0,2\pi)$ satisfy the property
 \be
\lb{shift-q-integer}
\omega \Psi_\vartheta(A, \lambda) \ =\ e^{i\vartheta} \Psi_\vartheta(A,\lambda)\,.
 \ee 
There are two essential differences between the Abelian and non-Abelian universes.
\begin{enumerate}
\item The Abelian universes are defined only at finite volume --- they are related to nontrivial cycles wound on the torus. But non-Abelian universes are there also in the infinite volume --- for the YM theory living on $\mathbb{R}^3$ with the spatial infinity treated as one point (topologically, it is $S^3$). An example of a  topologically nontrivial vacuum carrying Chern-Simons charge $q=1$ in the $SU(2)$ theory on $\mathbb{R}^3$ is 
\be
\lb{vac-q}
\hat A_j \ =\ \frac 1g \exp \left\{i \pi  f(r) \sigma_j x_j  \right\}
\ee  
with $f(0) = 1$ and lim$_{r \to \infty} \, rf(r) = 1$ ($\sigma_j$ are the Pauli matrices).

The tunneling trajectories between the vacua of different charges $q$ are \cite{Jackiw-Rebbi} the  {\it instantons} --- topologically nontrivial solutions of the Euclidean Yang-Mills equations of motion\cite{BPST}.\footnote{
In a certain gauge, the instanton interpolating between the trivial vacuum and a vacuum  with unit Chern-Simons charge
has the form 
 \be
\lb{inst}
A_\mu^a \ =\ \frac 1g \frac {2 \eta^a_{\mu\nu}x_\nu}{x^2 + \rho^2}\,,
  \ee
where $\rho$ (the size of the instanton) is an arbitrary real number and $\eta^a_{\mu\nu}$ are the so-called `t Hooft symbols:
\be
\eta^a_{00} = 0, \quad \quad \eta^a_{jk} = \varepsilon_{ajk}, \quad \quad \eta^a_{j0} = -\eta^a_{0j} = \delta_{aj}\,.
\ee
The configuration \p{inst}   carries a unit {\it Pontryagin number}:
\be
\lb{Pontr}
q^{\rm Pontr} \ =\ \frac {g^2}{8\pi^2} \int_{\mathbb{R}^4} {\rm Tr}\, \{F\wedge F\}  \ =\ 1
 \ee 
with
$F = \frac 12 \hat F_{\mu\nu} dx^\mu \wedge dx^\nu$.}

\item In the Abelian case, the supersymmetric vacua dwell in the universe $\vecg{\vartheta} = 0$. In other universes, only the states with positive energies are present.
Similarly, in the  pure Yang-Mills theory or in QCD with massive quarks, the energies of the states in the universe with nonzero $\vartheta$ exceed by a fixed amount $\Delta E \sim c(1 - \cos \vartheta)$ the energies of the analogous states in the universe $\vartheta = 0$.  But in any theory with {\it massless} fermions including the SYM theory,  $c$ turns to zero due to an exact fermion zero mode in the instanton background \cite{Hooft}. As a result, the spectra in all the universes coincide. Supersymmetry is not broken in any of them. In a finite box with periodic boundary conditions, each such spectrum involves $N$ vacuum states. 
\end{enumerate}

\subsection{Twisted boundary conditions}

To define a precise quantum problem for a field theory placed on the torus, we need to specify the boundary conditions. In the analysis above, we imposed the periodic boundary conditions \p{period-nab}. This gave us the moduli space \p{Cartan} of classical vacua. The existence of $N$ quantum vacuum states was derived by studying the spectrum of the effective Hamiltonian that described the moduli space dynamics. This analysis was quite sufficient to arrive at the essential conclusion  that the supersymmetry is not spontaneously broken in the original infinite-volume SYM field theory, but it is interesting to explore what happens if  another way to regularize the theory in the infrared, another kind of boundary conditions are chosen. 

We set  for simplicity $L=1$ and consider the following boundary conditions \cite{twisted-Hooft}:
\be
\lb{bcA-twist}
\hat A_j(x+1,y,z) \ =\ P \hat A_j(x,y,z) P^{-1}, \nn
 \hat A_j(x,y+1,z) \ =\ Q \hat A_j(x,y,z) Q^{-1}, \nn
\hat A_j(x,y,z+1) \ =\ \hat A_j(x,y,z) \,,
 \ee
where $P,Q$ are constant $SU(N)$ matrices forming the so-called {\it Heisenberg pair} --- they satisfy the condition
\be
\lb{Heisen}
QP  \ = \ \epsilon PQ, 
 \ee
 with 
$\epsilon \mathbb{1}$ representing 
 an element of the group center. 
 For example, we can choose
 \be
  \label{PQ}
 P= \ \left( \begin{array}{lllcl}
1 & 0 &   0 & 0 & \ldots  \\
0 & \epsilon &  0 & 0 &  \ldots  \\
0 & 0 & \epsilon^{2} & 0 &   \ldots  \\
{\vdots} & {\vdots} & {\vdots} & {\vdots} & {\vdots}  \end{array} \right), \qquad 
Q= \ \left( \begin{array}{lllcl}
0 & 1 &   0 & 0 & \ldots  \\
0 &  0 &  1 & 0 &  \ldots  \\
0 & 0 & 0 & 1 &   \ldots  \\
{\vdots} & {\vdots} & {\vdots} & {\vdots} & {\vdots}\\
1 & 0 & 0 & 0 & {\ldots}  \end{array} \right).
  \ee
with $\epsilon = e^{2i\pi/N}$.

We are looking for the classical gauge field configurations with zero energy.  The boundary conditions  for $U(\vecg{x})$ that we should impose in order to respect the twisted boundary conditions \p{bcA-twist} for the flat gauge potentials \p{flat-A} have the form
\be
 \label{bcU-twist}
U(x+1,y,z) \ = \  P U(x,y,z) P^{-1} \epsilon^{k_x} \,, \nn
U(x,y+1,z) \ =   \  Q U(x,y,z) Q^{-1} \epsilon^{k_y}\,, \nn
U(x,y,z+1) \ = \ U(x,y,z)  \epsilon^{k_z}
\ee
with  integer $\vecg{k}$.   An essential difference with \p{bcU} is that the extra  factors $\epsilon^{k_{x,y,z}}$  that appear in  the RHS of Eq.\p{bcU-twist} are not arbitrary commuting $SU(N)$ matrices, as was the case in \p{bcU}, but belong to the {\it center} of the group.  As a result, we do not have a continuous moduli space of classical vacua anymore. The classical vacua now represent  {\it isolated points} (up to topologically trivial gauge transformations)
 in the field configuration space. 

Let $\vecg{k} = (1,0,0)$. In this case, the matrix $U$ satisfying \p{bcU-twist} can be chosen to be constant: $U = Q$. If $\vecg{k} = (0,1,0)$, we may choose $U = P^{-1}$. Such gauge transformations leave the classical vacuum $\hat A_j = 0$ intact and do not bring about anything new. But the transformations \p{bcU-twist} with 
$\vecg{k} = (0,0,1)$ give us something interesting.

An explicit solution for $U(x,y,z)$ to the equations
\be
\lb{bc-U}
\hat U(x+1,y,z) \ =\ P U(x,y,z) P^{-1}, \nn
  U(x,y+1,z) \ =\ Q U(x,y,z) Q^{-1}, \nn
U(x,y,z+1) \ =\ \epsilon U(x,y,z) 
 \ee
was found in Ref.  \cite{Selivanov}. It reads
\be
\label{ansatz}
U\ =\ e^{{2\pi i z} \hat T(x,y)},
\ee
where $\hat T(x,y)$ is the following Hermitian $su(N)$ matrix:
\begin{empheq}[box=\fcolorbox{black}{white}]{align}
\label{fund}
T_j^k(x,y)\ =\ \frac 1N \delta_j^k- \psi_j(x,y) \psi^{\dagger k}(x,y),
\end{empheq}
with $\psi_j$ being a complex unitary vector of the following form:
\be
\label{psi-j}
\!\!\!\!\! \psi_j(x,y) \ = \ \frac{
\exp\left\{-\pi \left( \frac {y+j-1}N \right)^2 + 2\pi ix \left(  \frac {y+j-1}N \right)\right\} \theta\left(x+ i \frac{y+j-1}{N} \right)}
{\sqrt{\sum_{k=1}^N  \exp\left\{-2\pi \left( \frac {y+k-1}N \right)^2 \right\} \left|\theta\left(x+ i \frac {y+k-1}N \right)\right|^2}  }\,. 
\ee
Here $\theta(w)$ is a  Jacobi $\theta$ function (see  Ref. \cite{Mumford} for an extensive review):

\be
\lb{Jacobi-theta}
\theta(w) \ =\ \sum_{n = -\infty}^\infty \exp\{- \pi n^2 + 2\pi i n w\}\,.
 \ee

Indeed, the function \p{Jacobi-theta}  satisfies the boundary conditions 
 \be
\theta(w+1) \ =\ \theta(w)\,, \nn
\theta(w+i) \ =\ e^{\pi(1-2iw)} \theta(w)\,.
\ee
Hence the vector \p{psi-j} satisfies the conditions
\be
\label{twistp}
\left\{ \begin{array}{c} \psi(x+1,y)\ =\ e^{2\pi i y/N} P \psi(x,y) \\ 
\!\!\!\!\!\!\!\!\!\!\!\!\!\!\psi(x,y+1)\ =\  Q \psi(x,y) \end{array} \right. .
\ee
It follows that the matrix $\hat T$  satisfies the conditions
\be
\label{twistT}
\left\{ \begin{array}{c} \hat T(x+1,y)\ =\ P \hat T(x,y) P^{-1} \\ 
\hat T(x,y+1)\ =\  Q \hat T(x,y) Q^{-1} \end{array} \right. ,
\ee
and the same for $\exp\{2\pi i z \hat T\}$. The appearance of Jacobi functions is natural in the problem with toroidal geometry.

Now note that the matrix \p{fund} can be brought by a conjugation to the form 
$$ \hat T \stackrel{\rm conj}\longrightarrow  \hat T_0 \ =\  \frac 1N  {\rm diag}(1,\ldots, 1, 1-N)\,.$$
This is achieved by rotating $\psi_j \to \psi_j^{(0)} = \delta_{jN}$. The matrix $\hat T_0$ is a {\it fundamental coweight} of $su(N)$, so that 
$$ \exp\{2\pi i \hat T_0\} =  \exp\{2\pi i \hat T\} = \epsilon \mathbb{1}\,.$$
It follows that the matrix \p{ansatz} satisfies all the conditions \p{bc-U}.

The flat connection \p{flat-A} based on \p{ansatz} has a nonzero Chern-Simons charge \p{CS}. An explicit calculation \cite{Selivanov} gives the value
\be
q \ =\ \frac 1N\,.
 \ee
Thus, this classical vacuum plays on the torus the same role as the topologically nontrivial classical vacua on $S^3$ with a unit Chern-Simons number, which we discussed at the end of the previous subsection. Tunneling Euclidean trajectories discovered by 't Hooft \cite{twisted-Hooft}, which relate the trivial vacuum $\hat A_j = 0$ to  nontrivial ones,    are called {\it torons}.  In contrast to ordinary instantons, their Pontryagin number is an integer multiple of $1/N$
and may be fractional.

 The toron dynamics can be treated in a similar way as we treated the instanton dynamics a couple of pages before. Let $ \hat {\tilde \omega}$ be an operator of a topologically nontrivial gauge transformation that shifts the Chern-Simons number by $1/N$. Then the spectrum is split into different universes. The wave functions in each such universe, represented  schematically as 
 \be
|\vartheta\rangle \ =\ \sum_q e^{iq\vartheta} |q \rangle
 \ee
with 
$$ q \ =\ \ldots, - \frac 1N, 0, \frac 1N, \ldots, $$
 satisfy the property
 \be
\hat {\tilde \omega} \Psi_\vartheta \ =\ e^{i\vartheta/N} \Psi_\vartheta\,.
 \ee
If only integer values of $q$ were admissible (as was the case for the problem with periodic boundary conditions), the parameter $\vartheta$ would belong to the interval $[0, 2\pi)$.  But if the fractional $q$ are also present, as it is the case for twisted boundary conditions, $\vartheta \in [0, 2\pi N)$. There is a single quantum vacuum state in each universe.

We have to warn the reader: the interpretation suggested above is not the standard one. Following \cite{Wit82}, people usually only write \p{shift-q-integer} with the gauge transformation shifting the Chern-Simons number by 1. Then there are less universes, $\vartheta \in [0,2\pi)$ rather than $\vartheta \in [0,2\pi N)$, but in each universe, there are $N$ distinct vacuum states. And this conforms with the counting $I_W = N$ derived above for periodic boundary conditions. 

There is a certain reason to do so. Putting theory in a finite box is a way to regularize it in infrared, but what interests us more is the dynamics of SYM field theory in the infinite volume. In the infinite volume, we know classical vacuum field configurations \p{vac-q} with an integer Chern-Simons number and the corresponding Euclidean instanton solutions \p{inst} with integer Pontryagin number. But we are not aware of similar nice invinite-volume configurations with fractional Chern-Simons or Pontryagin number.\footnote{Well, there are some configurations with fractional topological charge, the so-called {\it merons} \cite{merons}, but they are not so nice, being singular and having an infinite action. Their role in Yang-Mills dynamics is not clear.}

But let us separate two issues:
{\it (i)} the dynamics of the theory on a finite torus with twisted boundary conditions and {\it (ii)} the dynamics of the theory  in infinite volume.

As far as the twisted torus is concerned, I do not see any specifics of the fractional $q$ configurations compared to the configurations with integer $q$. The universe interpretation with $\vartheta \in [0,2\pi N)$ seems in this case to be the most natural.

And what really happens in the infinite volume limit is a difficult and not so clear\footnote{At least, not so clear to your author.} question.
First of all, the torons appear only if twisted boundary conditions are imposed. With periodic conditions, there is no trace of torons and the Chern-Simons number is always integer. And we believe that two ways of regularization should give the same dynamics in the infinite volume limit.

Further, we know that in SYM theory the {\it gluino condensate} $\langle \lambda^a_\alpha \lambda^{a\alpha} \rangle $ is formed \cite{gluino-cond}. 
It follows from the fact that the correlator 
\be
\lb{corr-gluino-SUN}
C(x_1,\ldots,x_N) \ =
\ \langle \lambda^{\alpha a} \lambda^a_\alpha(x_1) \ldots \langle \lambda^{\alpha a} \lambda^a_\alpha(x_N) \rangle
\ee 
evaluated in the instanton background is a real nonzero constant not depending on $x_k$ --- this is a corollary of certain supersymmetric Ward identities. 
 As this is true both at small separations $x_j - x_k$ (where the instanton calculation is reliable) and also at large separations, it implies by cluster decomposition that $\langle \lambda^a_\alpha \lambda^{a\alpha} \rangle$   
 is a nonzero complex constant whose phase may acquire $N$ complex values, $\phi_k = 2\pi k/N$. 

But does it mean, as people usually say, spontaneous breaking of discrete chiral symmetry $Z_N$ (after $U(1)$ chiral symmetry of the Lagrangian \p{SYM-comp} is broken explicitly by instanton effects)? Or do the condensates with different phases $\phi_k$ live in different universes?..

\subsection{Domain walls}

Physically, spontaneous breaking of a discrete symmetry means coexistence of different phases separated by domain walls in the same physical space. Unfortunately, SYM theory does not describe nature, and we cannot stage an experiment where these domain walls would or would not be observed. We have to find theoretical arguments pro or contra their existence.

One can recall in this regard a very confusing story about would-be spontaneous breaking of center symmetry in hot Yang-Mills theory. Many people claimed that such breaking associated with different phases of the Polyakov loop, 
\be
P(\vecg{x}) \ =\ \frac 1{N_c} {\rm Tr} \exp \{ ig \beta \hat A_0 (\vecg{x}) \}, \qquad \qquad \beta = 1/T,
\ee
really occurs. Even the domain walls separating the regions with different phases of $P(\vecg{x})$ were ostensibly found \cite{Pisarski}.

But it is not correct \cite{no-bubbles}. Physical spontaneous breaking of $Z_N$ does not occur in YM theory at high temperature. The Polyakov loop [in contrast to the correlator $\langle P(\vecg{x}) P^*(0) \rangle$] is not a physically observable quantity. It plays the same role in hot YM theory as the dual plaquette variables $\eta_j$ in the Ising model.\footnote{We have deviated from the main subject of the review, but I cannot resist to comment here that the deconfinement phase transition, if it takes place in hot YM theory (which, I think, it does) is associated not with this nonexistent spontaneous breaking, but probably with percolation of color fluxes \cite{Kiskis}.}

\vspace{1mm}

And how can one confirm or disprove the existence of domain walls between the regions with different phases of the gluino condensate in SYM theory?

 The only instrument that I am aware of is the Veneziano-Yankielowicz effective Lagrangian \cite{Ven-Yan}, which is formulated 
in terms of the composite colorless chiral superfield 
\be
\lb{S=WW}
S\  =\  -2{\rm Tr} \{ \hat W^\alpha \hat W_\alpha \}\,.
 \ee
The lowest component of $S$ is the bifermion operator $\lambda^{\alpha a} \lambda_\alpha^a$.  Its vacuum expectation value is the gluino condensate discussed
above.

 The VY Lagrangian has the form 
\begin{empheq}[box=\fcolorbox{black}{white}]{align}
\lb{VY}
{\cal L}^{VY} \ =\ \alpha \int d^4\theta\, (\overline{S} S)^{1/3} + \beta \left( \int d^2\theta \,S \ln \frac S{\Lambda^3} + {\rm h.c.}    \right),
 \end{empheq}
where $\alpha$ is an arbitrary numerical coefficient, the coefficient $\beta$ has some particular, irrelevant for our purposes numerical value, and $\Lambda$ is a dimensionful parameter of  quantum SYM theory --- the scale where its effective coupling becomes strong.

The SYM Lagrangian \p{SYM} is invariant under the scale transformations and also under the chiral transformations,
 \be
\lb{sup-chiral}
\hat W_\alpha \ \to e^{i\gamma} \hat W_\alpha , \qquad \qquad \theta_\alpha \ \to \ e^{i\gamma}\theta_\alpha \,.
 \ee
In supersymmetric theory, these two transformations have the same nature and the corresponding currents enter the same supermultiplet. These symmetries are, however, anomalous --- they are broken by quantum effects. The WY effective Lagrangian takes account of this anomaly. The first (kinetic) term in Eq.~\p{VY} is invariant under \p{sup-chiral}, but the second (potential) term is not. Its variation is proportional to 
$\int d^2 \theta \, S - \int d^2 \bar\theta \, \overline{S}$, and this is exactly what the anomalous Ward identity for chiral rotations gives.

The Lagrangian \p{VY} is similar to the Lagrangian 
\be
\lb{L-WZ}
{\cal L}_{\rm WZ} \ =\ \frac 14 \int d^2\theta d^2 \bar \theta\ \overline \Phi \Phi + \left(\frac 12 \int  d^2\theta \,{\cal W}(\Phi) + {\rm c.c.}    \right)
 \ee
 of the Wess-Zumino model \cite{WZ} (it is especially clear if one expresses it in terms of the superfield $\Phi = S^{1/3}$ of canonical dimension 1). The Witten index for the WZ model can be easily evaluated by the Cecotti-Girardello method. For the polynomial superpotential ${\cal W}(\Phi)$, the index is equal to the degree of the polynomial ${\cal W}'(\Phi)$. 

In our case, $${\cal W}'(\Phi) \sim \Phi^2 \ln \frac \Phi{\tilde\Lambda} \qquad \quad (\tilde \Lambda =  \Lambda e^{-1/3})$$
 is not a polynomial, but includes an extra logarithmic factor. It has an unphysical cut to be handled somehow, but its presence does not change the estimate for the index. The classical vacua are
\be
\phi = 0, \qquad  {\rm and} \qquad \phi = \tilde \Lambda\,.
\ee
The index is equal to 2. 

This does not conform with the evaluations above. The main pecularity is the presence  of the {\it chirally symmetric} vacuum \cite{Kov-Shif} $\langle \lambda^{a \alpha}\lambda^a_\alpha \rangle = \langle \phi \rangle^3 = 0$ on top of the chirally asymmetric one with
$\langle \lambda^{a \alpha}\lambda^a_\alpha \rangle =  \tilde\Lambda^3$.

Do not worry, this is not a paradox. The matter is that the VY Lagrangian is {\it not} derived following a correct  Born-Oppenheimer (people also call it {\it Wilsonian}) procedure,
where slow variables on which the effective Lagrangian depends are carefully separated from the fast ones to be integrated out. 

The VY Lagrangian describes {\it some} features of the low-energy dynamics of the SYM theory, but does not, unfortunately, describe some other features. In particular, it does not seem to describe well the vacuum structure of the theory.   In particular, it has only one asymmetric vacuum at $\phi = \tilde \Lambda$ and not $N$ vacua that the original theory [in the standard description with $\vartheta \in [0,2\pi)$] must have.

In Ref. \cite{Kov-Shif}, an amendment of  the Lagrangian \p{VY} was suggested. It includes the standard instanton 
 parameter $\vartheta_{\rm inst} \in [0,2\pi)$ and has a discrete $Z_N$ chiral symmetry. The effective potential includes $N$ sectors. There is a chirally asymmetric vacuum in each sector {\it and} the extra chirally symmetric minimum in the origin (common for all sectors)    --- there is no way to get rid of it in this approach. 

The  Kovner-Shifman effective potential for $SU(3)$  in the universe $\vartheta_{\rm inst} = 0$  is represented in Fig. \ref{vac-sect}. The dashed lines on the plane $s = \phi^3$ delimit 3 sectors. In the right sector, the potential is given by the VY expression,
\be
V_0(s, s^*) \ =\ (s s^*)^{2/3} \ln \frac s {\tilde \Lambda^3} \ln \frac  {s^*}{\tilde \Lambda^3}\,.
\ee
In the upper sector, the potential is
 \be
V_1(s, s^*) \ =\ V_0(e^{-2i\pi/3}s, \, e^{2i\pi/3}s^*)\,.
\ee
In the lower sector:
 \be
V_2(s, s^*) \ =\ V_0(e^{2i\pi/3}s, \, e^{-2i\pi/3}s^*)\,.
\ee
On the border  arg($s$) $ = \pi/3$, the potentials $V_0$ and $V_1$ coincide. At  arg($s$) $ = -\pi/3$, $V_0 = V_2$. And at arg($s$) $ = \pi$, $V_1 = V_2$. But the derivatives of the potential are not continuous.

This potential has 3 chirally asymmetric minima at $s = \tilde \Lambda^3$ and $s=  \tilde \Lambda^3 e^{\pm 2i\pi/3}$, which are marked by boxes, and the chirally symmetric minimum at $s =0$, which is marked by a blob.
 \begin{figure} [ht!]
      \bc
    \includegraphics[width=0.5\textwidth]{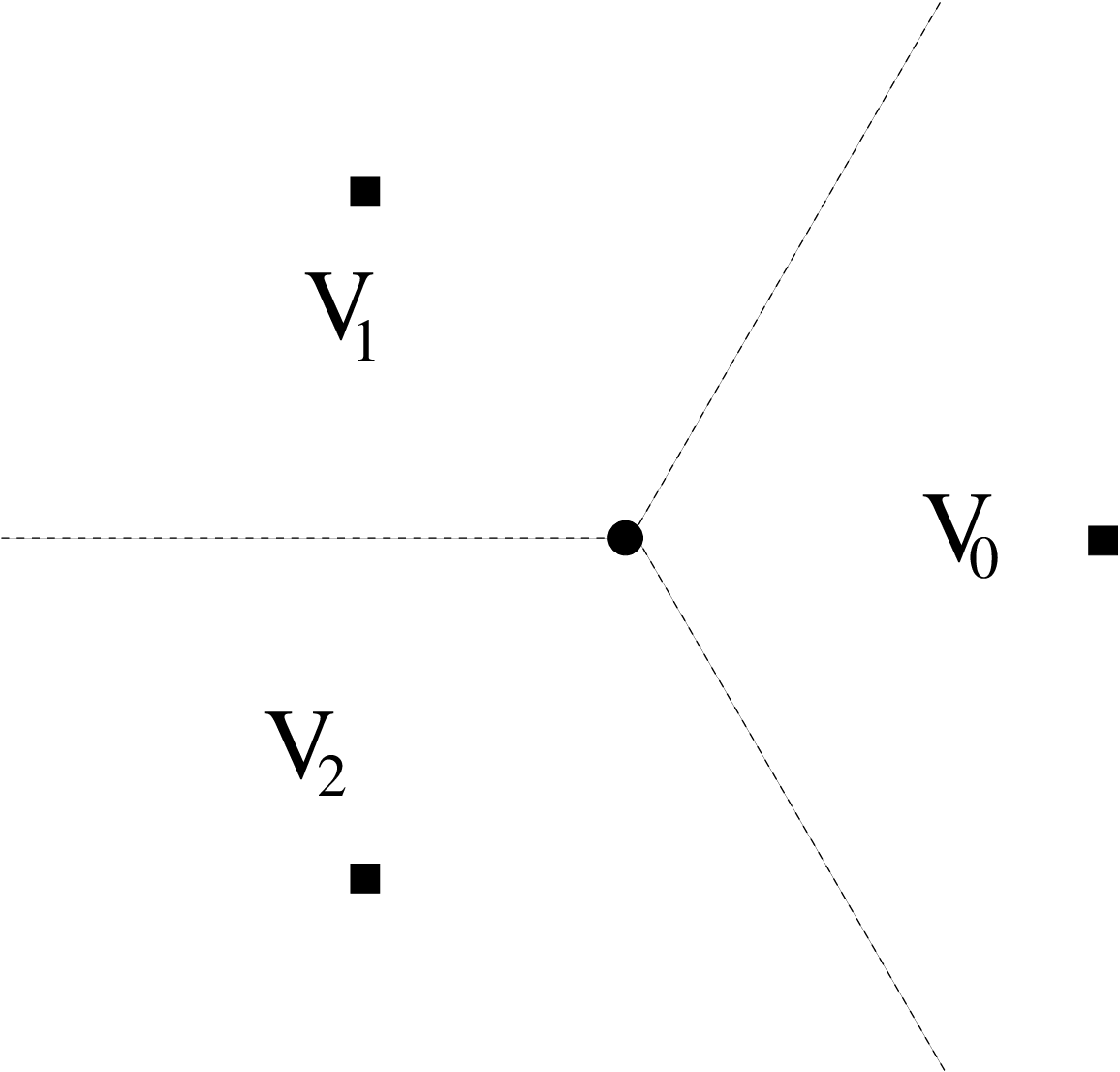}                  
     \ec
    \caption{Kovner-Shifman effective potential. $N=3$ and $\vartheta_{\rm inst} = 0$.}
 \label{vac-sect}
    \end{figure}   

The question that we originally posed was whether domain walls  between the different chirally asymmetric vacua exist. And the analysis in the framework of VY-KS effective potential gives a negative answer: there are domain walls connecting a box to the blob, but there are no walls between different boxes \cite{KSS}. 

For more detailed discussion of this walls vs. torons controversy, we address the reader to   Sec.~7 of Ref. \cite{KSS}. The question still remains open. But whatever is true: is discrete chiral $Z_N$ symmetry  in SYM theory with $SU(N)$ gauge group spontaneously broken or are configurations with fractional topological charge also relevant in infinite volume, one thing is clear: supersymmetry is {\it not} broken there and an exact supersymmetric vacuum (or several vacua) exist.

\section{SYM theories with other gauge groups}
\setcounter{equation}0

In the original paper \cite{Wit82}, the reasoning of   Sec.~3.1, which allowed one to evaluate the Witten index on the torus with periodic boundary conditions for SYM theory based on an $SU(N)$ group, was extended to all other simple gauge groups. The effective wave function on a small torus had the same form as in Eq. \p{wave-funct} with the only difference that the highest power of $\eta^a \eta^a$ was  equal to the rank $r$ of the group. This led to the result
 \be
\lb{r+1}
I_W \ =\ r+1 \,.
 \ee 
However, as was mentioned above, there is another way to evaluate the index. It is based on the standard scenario with spontaneous breaking of discrete chiral symmetry in the infinite volume and the vacuum angle restricted to the interval $\vartheta_{\rm inst} \in [0,2\pi)$.
For a generic group, the instanton admits $2h^\vee$ gluino zero modes, where  $h^\vee$ is the  {\it dual ``ve
 number}, which coincides with the 
Casimir eigenvalue $c_V = T^a T^a$ when a proper normalization for the adjoint generators $T^a$ is chosen. As was the case for $SU(N)$, supersymmetric Ward identities dictate to the Euclidean correlator
\be
\lb{corr-gluino-gen}
C(x_1,\dots,x_{h^\vee}) \ =\  \langle \lambda^a \lambda^a(x_1)\ldots   \lambda^a \lambda^a(x_{h^\vee}) \rangle    
\ee
 not to depend on the coordinates. Evaluating it
 in the instanton background, we obtain a nonzero real constant. And this implies that  
 $\langle \lambda^{a\alpha} \lambda^a_\alpha \rangle$ is a complex constant whose phase may acquire $h^\vee$ complex values, so that
\be
\lb{vee}
 I_W \ =\ h^\vee \,.
 \ee

For the  unitary and also for the symplectic groups, the estimates \p{r+1} and \p{vee} coincide. But for higher orthogonal groups $SO(N \geq 7)$ and all exceptional groups, they do not. 
For $SO(N \geq 7)$, $h^\lor = N-2  >  r+1$.
Also for exceptional groups $G_2, F_4, E_{6,7,8}$, the index (\ref{IW}) is 
larger  than Witten's original estimate (see Table 1).

\begin{table}
\caption{Vacuum counting for  exceptional groups}

\begin{center}
\begin{tabular}{||l|c|c|c|c|c||} \hline
group $\cal G$& $G_2$ & $F_4$ & $E_6$ & $E_7$ & $E_8$ \\ \hline
r + 1    & 3 & 5 & 7 & 8 & 9 \\ \hline
 $h^\lor$ & 4&  9 & 12 & 18 & 30 \\ \hline
mismatch & 1& 4 & 5 & 10 & 21 \\ \hline
\end{tabular}
\end{center}
\end{table}

The origin of this troublesome mismatch remained unclear until 1997 when Witten realized \cite{Witten-O7} that one of the assumptions on which the estimate \p{r+1} was based --- that {\it a triple of commuting holonomies $\Omega_j$ can always be brought by conjugation to the maximal torus} --- is correct for the unitary and symplectic groups, but is {\it wrong} for higher orthogonal and exceptional groups. 

To understand why, note first that, for an orthogonal group (which is not simply connected), even a {\it pair} of commuting group elements cannot always be conjugated to a maximal torus. Consider for example $SO(3)$. The elements 
\be
\lb{g-O7}
g_1 = {\rm diag}(-1,-1, 1) \qquad {\rm and} \qquad g_2 = {\rm diag}(-1, 1, -1)
\ee
 obviously commute, but their ``logarithms", proportional to the
generators of the correponding rotations $T^3$ and $T^2$, do not. 

However, for a simply connected group, a pair of commuting elements $\Omega_1, \Omega_2$ can always be conjugated to the maximal torus. This follows from the so-called Bott theorem: {\it a centralizer\footnote{I.e. the subgroup of ${\cal G}$ that commutes with $\Omega$.} ${\cal G}_\Omega$ of any element $\Omega$
 in a simply connected Lie group ${\cal G}$ is connected.} 
We
first put $\Omega_1$ on the torus of ${\cal  G}$ and then conjugate $\Omega_2$ to the torus of the centralizer ${\cal G}_{\Omega_1}$. The connectedness of the latter dictates that the two tori coincide.

One can  note in this regard that the preimages of the commuting in $SO(3)$ elements $g_1$ and $g_2$ in the universal  simply connected covering $SU(2)$ are  $\sigma^3$ and $\sigma^2$, and they do not commute,
 representing a  Heisenberg pair as in Eq. \p{Heisen}.

But exceptional commuting {\it triples} that cannot be conjugated to the maximal torus may exist also for simply connected groups.
To give the simplest example of such a triple, consider the group $Spin(7)$. An element of $Spin(7)$ can be represented as 
 \be
\lb{g-Spin7}
\Omega \in Spin(7) \ =\ \exp\{\omega_{\mu\nu} \gamma_\mu \gamma_\nu/2\}\,,
 \ee 
with skew-symmetric $\omega_{\mu\nu}$. Here $\gamma_\mu$ are the gamma matrices $8\times 8$ satisfying the Clifford algebra $\gamma_\mu \gamma_\nu + \gamma_\nu \gamma_\mu = 2 \delta_{\mu\nu}$, $\mu,\nu = 1,\ldots, 7$.

Consider the triple \cite{Witten-O7,KRS} 
 \begin{empheq}[box=\fcolorbox{black}{white}]{align}
\lb{triple}
\Omega_1 \ =\ \exp \left\{ \frac \pi 2 (\gamma_1 \gamma_2 + \gamma_3 \gamma_4) \right\} \ =\ \gamma_1 \gamma_2 \gamma_3 \gamma_4\,, \nn
\Omega_2 \ =\ \exp \left\{ \frac \pi 2 (\gamma_1 \gamma_2 + \gamma_5 \gamma_6) \right\} \ =\ \gamma_1 \gamma_2 \gamma_5 \gamma_6\,, \nn
\Omega_3 \ =\ \exp \left\{ \frac \pi 2 (\gamma_1 \gamma_3 + \gamma_5\gamma_7) \right\} \ =\ \gamma_1 \gamma_3 \gamma_5 \gamma_7\,.
 \end{empheq}
The corresponding triple in $SO(7)$ is 
\be
\label{triple-O7}
\tilde \Omega_1 \ =\ {\rm diag} ( -1,-1, -1, -1, 1, 1, 1 ), \nn
 \tilde \Omega_2 \ =\ {\rm diag} (-1,-1,1,1,-1,-1,1 ), \nn
 \tilde \Omega_3 \ =\ {\rm diag} (-1, 1, -1, 1, -1, 1, -1). 
  \ee
One can easily check that all $\Omega_j$  commute, while the exponents in the LHS of Eq.\p{triple} do {\it not}. The representation
\p{g-Spin7} is not unique, but one can prove the following simple theorem. 
\begin{thm}
A representation $\Omega_j = e^{iS_j}$ with $S_j \in spin(7)$ and  $[S_j, S_k] = 0$ does not exist. 
\end{thm}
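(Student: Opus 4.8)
The plan is to argue by contradiction and to reduce the non-existence of commuting logarithms to the statement that the commuting triple cannot be simultaneously conjugated into a maximal torus — a failure I will read off directly from the explicit $SO(7)$ matrices \p{triple-O7}. So first I would suppose that such $S_j$ do exist. The three $iS_j$ are then mutually commuting elements of the compact Lie algebra $spin(7)$, and commuting elements of a compact algebra always lie in a common Cartan subalgebra: taking a maximal abelian subalgebra $\mathfrak{h}$ of the centralizer of $iS_1$ that contains $iS_2$, one notes that $iS_1$ is central in its own centralizer (so $iS_1\in\mathfrak{h}$), that $iS_3\in\mathfrak{h}$ as well, and that $\mathfrak{h}$ is in fact maximal abelian in all of $spin(7)$. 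Exponentiating, all three $\Omega_j=e^{iS_j}$ would sit in the single maximal torus $T=\exp\mathfrak{h}$ of $Spin(7)$. It is worth flagging at once that the \emph{group-level} analogue of this statement is false — that three commuting group elements share a maximal torus is exactly what fails here, and is the whole content of the example — whereas at the level of the Lie \emph{algebra} commuting elements do always share a Cartan.

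Next I would push this conclusion down to $SO(7)$ through the covering $\pi:Spin(7)\to SO(7)$. Since $\pi$ is surjective and carries maximal tori onto maximal tori, the images $\tilde\Omega_j=\pi(\Omega_j)$ — which are precisely the diagonal matrices \p{triple-O7} — would all lie in one maximal torus $T'=\pi(T)$ of $SO(7)$. The decisive step is then to contradict this. A maximal torus of the odd orthogonal group $SO(7)$ acts on $\mathbb{R}^7$ as $P_1\oplus P_2\oplus P_3\oplus\ell$, three rotation $2$-planes and a fixed axis $\ell$; hence every element of a fixed maximal torus fixes the same nonzero vector spanning $\ell$, and three elements sharing a maximal torus must possess a common nonzero fixed vector. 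But reading off \p{triple-O7}, the joint eigentriples of $(\tilde\Omega_1,\tilde\Omega_2,\tilde\Omega_3)$ on the seven coordinate axes are the seven \emph{distinct} sign patterns other than $(+,+,+)$. In particular $(+,+,+)$ never occurs, and because the seven joint eigenvalues are distinct the coordinate axes are the only joint eigenvectors; therefore the three commuting involutions have no common nonzero fixed vector. This contradicts the $SO(\mathrm{odd})$ torus structure, so the assumed $S_j$ cannot exist.

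I expect the only genuinely delicate point to be the Lie-algebra statement of the first paragraph, precisely because its group-level counterpart is the very thing that fails for this triple (Bott's connectedness theorem quoted above disposes of commuting \emph{pairs} but gives nothing for triples); the argument must therefore be careful to invoke the shared-Cartan property strictly at the algebra level, where it is elementary. Everything after that is bookkeeping with the explicit matrices, and the cleanest packaging is the ``missing common fixed axis'' obstruction. One could equally phrase it as the absence of the weight $(+,+,+)$, i.e.\ the observation that the seven coordinates realize the seven nonzero vectors of $\mathbb{F}_2^{\,3}$ with no repeats, but the fixed-axis version makes the clash with the odd-dimensional torus immediate and coordinate-free.
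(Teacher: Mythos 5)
Your proof is correct, but it takes a genuinely different route from the paper's. The paper argues by brute force inside the Clifford algebra: it writes a general element $S = i\omega_{[\mu\nu]}\gamma_\mu\gamma_\nu$ of $spin(7)$, notes that commuting logarithms would have to satisfy $[S_j,\Omega_k]=0$ for the three matrices \p{triple}, and then shows that these commutation conditions force every coefficient $\omega_{\mu\nu}$ to vanish, so the $S_j$ would all be zero --- a contradiction. You instead argue structurally: mutually commuting elements of the compact algebra $spin(7)$ lie in a common Cartan subalgebra, so the $\Omega_j$ would lie in a common maximal torus of $Spin(7)$; pushing down through the covering $Spin(7)\to SO(7)$, the diagonal matrices \p{triple-O7} would then lie in a common maximal torus of $SO(7)$; but every maximal torus of an odd orthogonal group pointwise fixes a common axis, whereas the seven joint sign patterns of \p{triple-O7} are exactly the seven patterns other than $(+,+,+)$, so the intersection of the three $+1$-eigenspaces is zero. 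Both arguments are valid. What the paper's computation buys beyond the bare theorem is the stronger fact that the \emph{Lie-algebra centralizer of the triple is trivial}, i.e.\ the exceptional triple is rigid (isolated), which is exactly what feeds into the vacuum counting later in that section; your argument does not yield this. What your argument buys is transparency and economy: it locates the obstruction precisely where the surrounding discussion says it lives --- the impossibility of conjugating the triple into a maximal torus --- with no gamma-matrix algebra, and the ``missing fixed axis'' criterion is coordinate-free. One wording repair is needed in your first paragraph: you choose $\mathfrak{h}$ maximal abelian in the centralizer of $iS_1$ ``containing $iS_2$'' and then assert $iS_3\in\mathfrak{h}$, which does not follow as stated; you should instead choose $\mathfrak{h}$ to contain the abelian subalgebra spanned by \emph{both} $iS_2$ and $iS_3$ (both lie in that centralizer), after which the rest of the paragraph goes through verbatim.
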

\begin{proof}
Suppose, such a triple $S_j$ exists. The property $[S_j, \Omega_k] = 0$  should then also hold. Let $S = i \omega_{[\mu\nu]} \gamma_\mu \gamma_\nu$. The requirement $[S, \Omega_j] = 0$ gives
 \be
&&\omega_{15} T_{2345} + \omega_{16} T_{2346} + \omega_{17} T_{2347} - \omega_{25} T_{1345} -  
\omega_{26} T_{1346} - \omega_{27} T_{1347} \nn
+  && \omega_{35} T_{1245} +   \omega_{36} T_{1246} +  \omega_{37} T_{1247}    - \omega_{45} T_{1235} -  \omega_{46} T_{1236} -  \omega_{47} T_{1237}  \ =\ 0\,, \nn
&&\omega_{13} T_{2563} + \omega_{14} T_{2564} + \omega_{17} T_{2567} - \omega_{23} T_{1563} -  
\omega_{24} T_{1564} - \omega_{27} T_{1567} \nn
+ &&  \omega_{53} T_{1263} +   \omega_{54} T_{1264} +  \omega_{57} T_{1267}    - \omega_{63} T_{1253} -  \omega_{64} T_{1254} -  \omega_{67} T_{1257}  \ =\ 0\,, \nn
&&\omega_{12} T_{1352} + \omega_{14} T_{1354} + \omega_{16} T_{1356} - \omega_{32} T_{1572} - \omega_{34} T_{1574} -  
\omega_{36} T_{1576}  \nn 
+ &&\omega_{52} T_{1372} + \omega_{54} T_{1374} +   \omega_{56} T_{1376} -  \omega_{72} T_{1352}    - \omega_{74} T_{1354} -  \omega_{76} T_{1356}  \ =\ 0\,, \nn
 \ee
where $T_{2345} = \gamma_2 \gamma_3 \gamma_4 \gamma_5$  etc. It follows that all $\omega_{\mu\nu}$ vanish.
\end{proof}

The underlying reason of nontriviality of the triple \p{triple} is the fact that the centralizer of the element $\Omega_1$ (or $\Omega_2$
or $\Omega_3$) in $Spin(7)$ is not simply connected. And it is not just not simply connected, but not simply connected in not a simple way, if you will.

I mean the following. For a unitary or symplectic group, a centralizer of any element represents a product of a semi-simple simply connected group and some number of $U(1)$ factors. For example, the centralizer of an element of $SU(3)$ can be $SU(3)$, $SU(2)\times U(1)$ or $[U(1)]^2$. But the  centralizer of $\Omega_1$  in $Spin(7)$ has a more complicated structure.  $\Omega_1$  commutes with three generators $\gamma_5 \gamma_6$, $\gamma_5 \gamma_7$ and $\gamma_6 \gamma_7$ of the subgroup $Spin(3) \subset Spin(7)$ and also with six generators $\gamma_1 \gamma_2$, etc. of the subgroup   $Spin(4) \subset Spin(7)$. It may seem that the centralizer is $Spin(3) \times Spin(4)$, but it is not exactly so because the product $Spin(3) \times Spin(4)$ is not a subgroup of $Spin(7)$.

Indeed,
both $Spin(4)$ and $Spin(3)$ have non–trivial centers. The center of $Spin(3) \equiv SU(2)$
is $Z_2$ --- an  element corresponding to rotation by $2\pi$ in, say, (56) plane.
The center of $Spin(4) \equiv SU(2) \times SU(2)$ is $Z_2 \times Z_2$. It has a diagonal element $z$ representing
the product of non–trivial center elements in each $SU(2)$ factor:
 \be
\!\!\!\! z  =  \exp \left\{ \frac \pi 2 (\gamma_1 \gamma_2 - \gamma_3 \gamma_4) \right\} 
 \exp \left\{ \frac \pi 2 (\gamma_1 \gamma_2 + \gamma_3 \gamma_4) \right\} = \exp\{\pi \gamma_1 \gamma_2 \} \,. 
\ee
Now note that $\exp \{\pi \gamma_1 \gamma_2\} = \exp \{\pi \gamma_5 \gamma_6\}  =  - \mathbb{1}$ in the full $Spin(7)$ group. Hence the true centralizer of $\Omega_1$ 
 is $[Spin(4) \times Spin(3)]/Z_2 \equiv [SU(2)]^3/Z_2$ where the factorization is done
over the diagonal center element $(-\mathbb{1}, -\mathbb{1}, - \mathbb{1})$ in $[SU(2)]^3$.  The matrices $\Omega_2$ and $\Omega_3$ correspond to the elements $(i\sigma^2, i\sigma^2, i\sigma^2)$ and  $(i\sigma^3, i\sigma^3, i\sigma^3)$ , i.e. they
form a Heisenberg pair in each $SU(2)$ factor. They still commute in the centralizer and hence in
$Spin(7)$ just because the factorization over the common center should be done. 
The elements $\Omega_2,\,\Omega_3$ 
cannot be conjugated to the maximal torus in the centralizer and hence the whole triple \p{triple} cannot be conjugated to the maximal torus in $Spin(7)$.

Let us call the group elements whose centralizers involve a not simply connected semi-simple factor {\it exceptional}. Up to a conjugation, there is only one exceptional element in Spin(7). To understand a ``scientific" reason for the existence of such elements,  some group-theoretic study is required, in particular --- a study of the {\it Dynkin diagrams} of $Spin(7)$ and then of other orthogonal and exceptional groups.

\begin{figure}
\bc
    \includegraphics[width=0.4\textwidth]{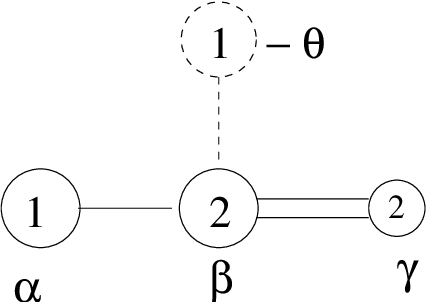}                  
     \ec
\caption{Dynkin diagram for $Spin(7)$ with the highest root and the Dynkin labels.}
\label{Dyn7}
\end{figure}

The extended Dynkin diagram for $Spin(7)$ is depicted in Fig. \ref{Dyn7}. There are three simple roots:
$\alpha$, $\beta$,
and $\gamma$.\footnote{See Appendix.} In a conveniently chosen orthonormal basis of $\mathfrak{C}$, $e_1 = i\gamma^1\gamma^2/2, \,e_2 = i\gamma^3\gamma^4/2, \, e_3 =  i\gamma^5\gamma^6/2$, the roots are\footnote{It is convenient in this case to choose the normalization where the length of the long roots is $\sqrt{2}$, while the length of the short root is 1.} 
$\alpha = (1,-1,0),\, \beta = (0,1,-1), \, \gamma = (0,0,1)$. The highest root is $\theta = \alpha + 2\beta + 2\gamma = (1,1,0)$.
The root vectors are
\be
&&E_{\pm \alpha} \ =\ \frac 14 [\gamma^2 \gamma^3 +  \gamma^4 \gamma^1  \pm i (\gamma^1 \gamma^3 + \gamma^2 \gamma^4)], \nn
&&E_{\pm \beta} \ =\ \frac 14 [\gamma^4 \gamma^5 +  \gamma^6 \gamma^3  \pm i (\gamma^3 \gamma^5 + \gamma^4 \gamma^6)], \nn
&&E_{\pm \gamma} \ =\ \frac 12 (\gamma^6 \pm i\gamma^5)\gamma^7\,.
 \ee
The coroots are
\be
\lb{coroots}
&&\alpha^\vee \ =\ [E_\alpha, E_{-\alpha}] \ =\ \frac i2 (\gamma^3 \gamma^4 - \gamma^1 \gamma^2), \nn
&&\beta^\vee \ = \ [E_\beta, E_{-\beta}] \ =\ \frac i2 (\gamma^5 \gamma^6 - \gamma^3 \gamma^4), \nn
&&\gamma^\vee \ =\ [E_\gamma, E_{-\gamma}] \ =\  -i \gamma^5 \gamma^6\,.
\ee
Note now that the Dynkin label for  the root $\beta$ is equal to 2. And the existence of the exceptional group element $\Omega_1$ with a nontrivial centralizer is related to {\it this} fact. Actually, $\Omega_1$ can be represented as 
\be
\Omega_1 \ =\ e^{i \pi \omega_\beta}\,,
 \ee
where $\omega_\beta = \theta^\vee = \frac i2 (\gamma^1 \gamma^2 + \gamma^3 \gamma^4)$ is the  fundamental coweight of the root $\beta$
(i.e. $[\omega_\beta, E_\alpha] = [\omega_\beta, E_\gamma] = 0$ and $[\omega_\beta, E_\beta] = E_\beta$) .

The centralizer of $\Omega_1$ can be found by inspecting the extended Dynkin diagram. Just delete the node $\beta$ and look at what is left. One sees three disconnected circles giving $[SU(2)]^3$, but, as was explained above, this should be factorized over the diagonal $Z_2$ center element. 
Note that it is only the node $\beta$ which plays a special role giving rise to an exceptional element. Similarly constructed elements $e^{i \pi \omega_\alpha}$ and 
 $e^{i \pi \omega_\gamma}$ are not exceptional. The former is just an element of the center, and its centralizer is the whole $Spin(7)$, while the centralizer of the latter is  $Spin(6) \equiv SU(4)$, which is simply connected.

We mentioned that the underlying reason for the existence  of exceptional elements is the presence of the roots whose Dynkin label is greater than one.
The root $\beta$ has this property, but the same is true for root  
 $\gamma$ and one may wonder why there is no associated exceptional element. We will not prove it here, addressing the reader to Ref. \cite{Kac1}, but it turns out  that the exceptional elements are only generated by the roots $\alpha_j$ where not just $a_j$, but also
 \be
a_j^\vee \ =\ a_j \, |{\alpha}_j|^2
 \ee
(with the length of the long roots normalized to 1) is greater than 1. The root $\gamma$ is short, $a_\gamma^\vee = 1$ and there is no associated exceptional element.   

That is why there are no exceptional triples for the groups $SU(N)$ and $Sp(2r)$. The former has only unit Dynkin labels, while the latter has Dynkin labels $a_j = 2$, but they all sit on the short nodes so that $a_j^\vee = 1$ everywhere. 

For the groups $Spin(7), Spin(8)$ and $G_2$ where only one root has $a_j^\vee > 1$, the exceptional element and the corresponding exceptional triple are isolated. This gives a single extra vacuum state. But for the groups $Spin(N \geq 9)$, $F_4$ and $E_{6,7,8}$, there are many such roots. This brings about extra {\it moduli spaces} of exceptional triples and of classical vacua. For higher exceptional groups, they  have a rather complicated structure. The spectra of effective Hamiltonians on these moduli spaces  lead to appearance of extra quantum vacua so that the final counting is $I_W = h^\vee$ in all the cases.
 
All the details of this counting are spelled out in the original papers \cite{Kac1,Keurentjes}. But to give the reader an idea of how it is done, we will briefly explain here the essentials.

\begin{itemize}
\item {\it Orthogonal groups}. The Dynkin diagrams for $Spin(9)$ and $Spin(10)$ include two nodes (call them $\alpha$ and $\beta$) with $a^\vee_\alpha = a_\beta^\vee = 2$. This brings about a 1-parametric moduli space of exceptional elements\footnote{{\it 2-exceptional} in the terminology of Ref. \cite{Kac1}.} $\Omega = {\rm exp} \{2\pi i (s_\alpha \omega_\alpha + s_\beta \omega_\beta)\}$ with $s_{\alpha} \geq 0$, $s_{\beta} \geq 0$ such that $2(s_\alpha + s_\beta) = 1$. That gives a 3-parametric moduli space of exceptional triples and hence of classical vacua. The effective Hamiltonian in this moduli space includes 3 bosonic variables  $c_j$ and their fermion superpartners $\eta_\alpha$.

 The problem is basically the same as for the ordinary moduli space for $SU(2)$, where we had two vacuum states with the wave functions $\Psi = 1$ and $\Psi = \eta^\alpha \eta_\alpha$. In the case under consideration, we get two {\it extra} vacuum states. Adding to this $r+1$ vacuum states associated with the maximal torus, we obtain 7 quantum vacua for $Spin(9)$ and 8 quantum vacua for $Spin(10)$.

For $Spin(11)$ and $Spin(12)$, there are 3 nodes with $a^\vee = 2$. This gives a 2-parametric moduli space of exceptional elements and 6-parametric moduli space of classical vacua, similar to the ordinary moduli space for $SU(3)$. We obtain 3 extra quantum vacua and all together $(r+1)+3$ vacua, which gives $I_W = 9$ for $Spin(11)$ and $I_W = 10$ for $Spin(12)$. 

This reasoning is easily generalized for  $Spin(N)$ with higher $N$, and we derive 
\be
\lb{ind-ON} 
I_W[Spin(N)] \ =\ N-2 \ = \ h^\vee\,.
\ee

\begin{figure}
\bc
    \includegraphics[width=0.4\textwidth]{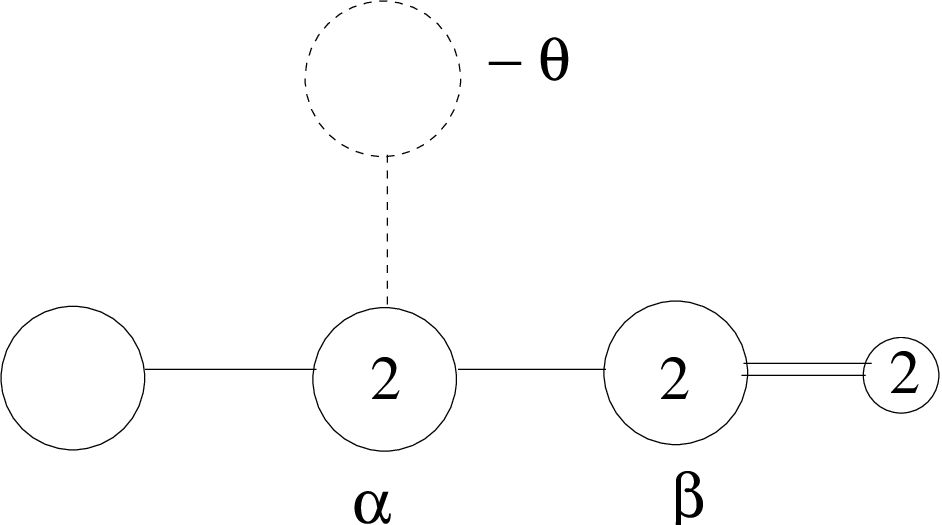}                  
     \ec
\caption{Extended Dynkin diagram for $Spin(9)$. Only the labels $a_j > 1$ are displayed.}
\label{DynO9}
\end{figure}

\begin{figure}
\bc
    \includegraphics[width=0.27\textwidth]{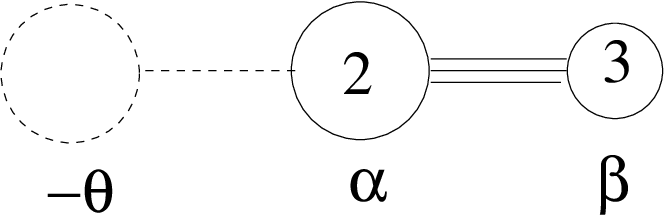}                  
     \ec
\caption{Extended Dynkin diagram for $G_2$.}
\label{DynG2}
\end{figure}

\item $G_2$. Its Dynkin diagram (see Fig. \ref{DynG2})  includes a long node $\alpha$ with the label $a_\alpha^\vee = a_\alpha = 2$ and the short node of length $1/\sqrt{3}$ with the label $a_\beta^\vee = a_\beta/3 = 1$. There is an isolated exceptional element $\Omega = \exp\{i\pi \omega_\alpha\}$ and the associated exceptional triple.\footnote{The latter can in fact  be lifted to $Spin(7) \supset G_2$ giving \p{triple-O7} \cite{KRS}.}
This gives an extra vacuum state. All together:
 \be
\lb{ind-G2} 
I_W[G_2] \ =\ 3+1  = 4 \ =\ \ h^\vee(G_2)\,.
\ee

\item $F_4$. The Dynkin diagram is shown in Fig. \ref{DynF4}.

\begin{figure}
\bc
    \includegraphics[width=0.5\textwidth]{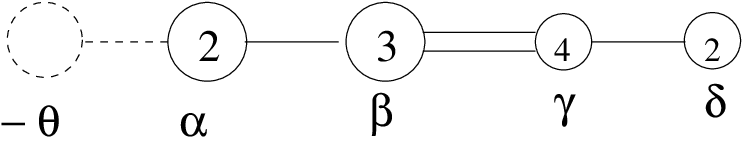}                  
     \ec
\caption{Extended Dynkin diagram for $F_4$.}
\label{DynF4}
\end{figure}
The node $\beta$ carries the Dynkin label $a_\beta = a_\beta^\vee = 3$. Up to a conjugation, there is only one associated  3-exceptional element $\Omega_1 = \exp\{ 2\pi i \omega_\beta/3\}$.  Its centralizer is $[SU(3) \times SU(3)]/Z_3$. Two other components of the triple represent  the Heisenberg pairs \p{Heisen} in each $SU(3)$ factor of the centralizer: $\Omega_2 = (P,P)$ and $\Omega_3 = (Q,Q)$. Now, $Z_3$ has two nontrivial elements and hence there are two different Heisenberg pairs: the pair \p{PQ} and its permutation.  Correspondingly, there are two isolated exceptional triples and two isolated vacua.

The nodes $\alpha$ and $\gamma$ carry the labels $a_\alpha^\vee = a_\gamma^\vee =2$. As was the case for $Spin(9)$ and $Spin(10)$, this brings about 2 extra quantum states. We have all together 4 extra vacua and the total count is 
\be
I_W[F_4] \ =\ (r+1) + 2_\beta + 2_{\alpha\gamma}  \ = \ 9 \ =\ h^\vee[F_4]\,.
\ee

\item $E_6$. The Dynkin diagram is shown in Fig. \ref{DynE6}.

\begin{figure}
\bc
    \includegraphics[width=0.5\textwidth]{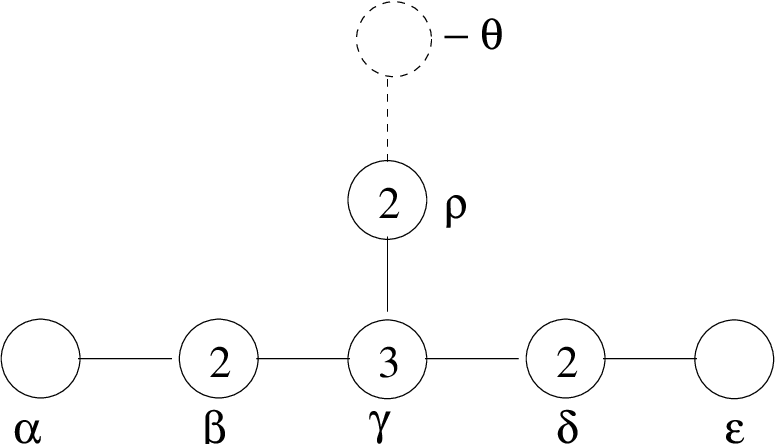}                  
     \ec
\caption{Extended Dynkin diagram for $E_6$.}
\label{DynE6}
\end{figure}
There are 2 extra isolated states associated with the node $\gamma$ and 3 quantum states coming from the moduli space associated with the nodes $\beta, \delta, \rho$. The total count is 
\be
I_W[E_6] \ =\ (r+1) + 2_\gamma + 3_{\beta\delta\rho}  \ = \ 12 \ =\ h^\vee[E_6]\,.
\ee

\item $E_7$. The Dynkin diagram is shown in Fig. \ref{DynE7}.

\begin{figure}
\bc
    \includegraphics[width=0.7\textwidth]{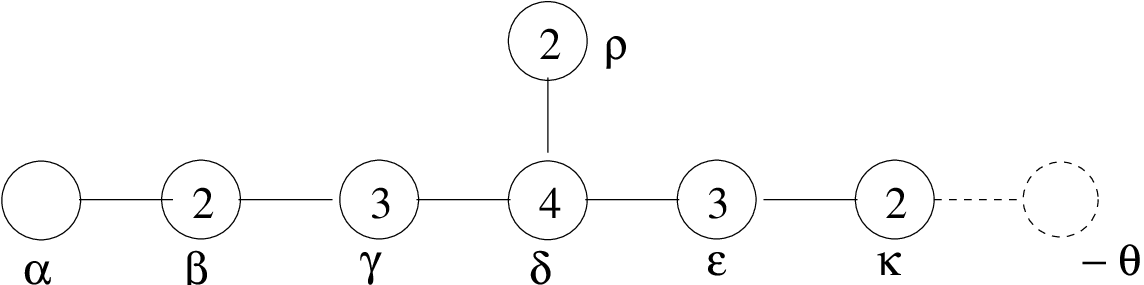}                  
     \ec
\caption{Extended Dynkin diagram for $E_7$.}
\label{DynE7}
\end{figure}

The nodes $\beta, \kappa, \rho, \delta$ bring about a 3-parametric moduli space of 2-exceptional elements. This  gives 4 extra states.
 
Two nodes $\gamma,\epsilon$ with $a_\gamma^\vee = a_\epsilon^\vee = 3$ give a 1-parametric moduli space of 3-exceptional elements having the form
$\Omega_1 = {\rm exp}\{2\pi i (s_\gamma \omega_\gamma + s_\epsilon \omega_\epsilon)\}$ with $3(s_\gamma + s_\epsilon) = 1$. A centralizer of each such element
is $[SU(3)]^3/Z_3 \times U(1)$. This gives two distinct 3-parametric moduli spaces of commuting triples.
 Each such moduli space brings about 2 quantum vacua, and the total contribution to the vacuum counting coming from these nodes is $2 \times 2 = 4$.

Besides, there are two  isolated exceptional triples associated with the root $\delta$. They are formed by a 4-exceptional element $\Omega_1 = {\rm exp}\{\pi i  \omega_\delta/2\}$ with the centralizer $[SU(4) \times SU(4) \times SU(2)]/Z_4$ and  two different Heisenberg pairs \p{Heisen} in the $SU(4)$ factors with the center elements $\pm i \mathbb{1}$. This gives
two isolated vacua.  
 The total count is 
\be
I_W[E_7] \ =\ (r+1) + 4_{\gamma \epsilon} + 4_{\beta \kappa \rho \delta} + 2_\delta  \ = \ 18 \ =\ h^\vee[E_7]\,.
\ee

\item $E_8$. The Dynkin diagram is shown in Fig. \ref{DynE8}.

\begin{figure}
\bc
    \includegraphics[width=0.7\textwidth]{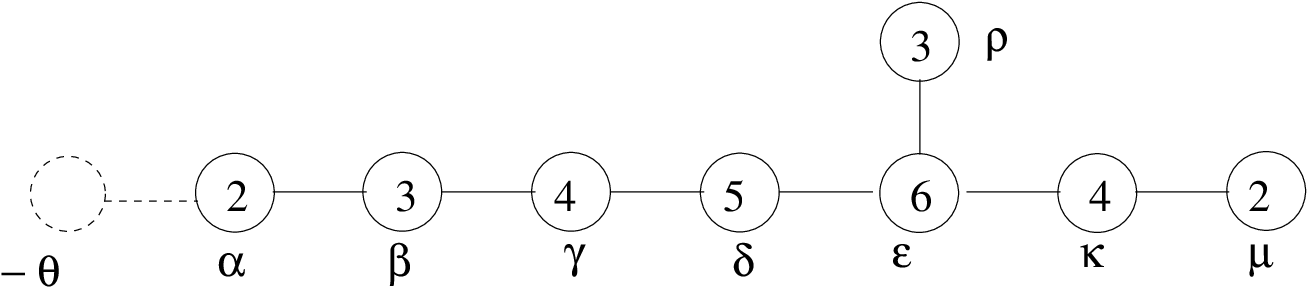}                  
     \ec
\caption{Extended Dynkin diagram for $E_8$.}
\label{DynE8}
\end{figure}

There is a 4-parametric moduli space of 2-exceptional elements spanned on the nodes $\alpha, \gamma, \epsilon, \kappa, \mu$. This gives 5 states.

Besides, the nodes $\gamma, \kappa$ give a 1-parametric moduli space of 4-exceptional elements $\Omega_1 = {\rm exp} \{2\pi i  (s_\gamma \omega_\gamma
+ s_\kappa \omega_\kappa)\}$ with $4(s_\gamma + s_\kappa) = 1$. Their centralizer is $[SU(4) \times SU(4) \times SU(2)]/Z_4 \times U(1)$. Two different Heisenberg pairs in $SU(4)$ with the center elements $\pm i \mathbb{1}$ give two distinct 3-parametric moduli spaces of exceptional triples and
   4 quantum states.

There is a 2-parametric moduli space of 3-exceptional elements:  $\Omega_1 = {\rm exp} \{2\pi i (s_\beta \omega_\beta
+ s_\rho \omega_\rho + s_\epsilon \omega_\epsilon)\}$ with $3(s_\beta + s_\rho + s_\epsilon) = 1$. Their centralizer is $[SU(3)]^3/Z_3\times [U(1)]^2$. Two Heisenberg pairs for $SU(3)$ give two 6-parametric moduli spaces of the exceptional triples. This gives $2\times 3 = 6$ quantum states.

The node $\delta$ gives an isolated 5-exceptional element $\Omega = {\rm exp} \{ 2\pi i \omega_\delta/5 \}$ with the centralizer $[SU(5) \times SU(5)]/Z_5$, four different Heisenberg pairs, and 4 quantum states.

The node $\epsilon$ gives an isolated 6-exceptional element $\Omega = {\rm exp} \{ \pi i s_\delta \omega_\delta/3 \}$ with the centralizer $[SU(6) \times SU(3) \times SU(2)]/Z_6$. The elements $e^{\pi i/3} \mathbb{1}$ and $e^{5\pi i/3} \mathbb{1}$ of $Z_6$ give two isolated Heisenberg pairs, two isolated triples and two vacua.
The total count is 
\be
I_W[E_8] \ = \ (r+1) + 5_{\alpha \gamma\epsilon\kappa\mu} + 6_{\beta\rho\epsilon} + 4_{\gamma\kappa} + 4_\delta + 2_\epsilon =  30 \ 
 = \ h^\vee[E_8]\,.\nn
\, 
\ee
\end{itemize}

Up to now, we only discussed in this section the dynamics of the theories with periodic boundary conditions. However, as soon as the center of the group is nontrivial, one can also impose twisted boundary conditions. For the unitary groups, the center is $Z_N$ and, as we discussed in   Sec.~6.3.2, if one does not admit  Euclidean configurations with a fractional topological charge, the twisted conditions lead to $N$ distinct vacuum states in a universe with a given vacuum angle $\vartheta_{\rm inst} \in [0,2\pi)$. For other classical groups, the center is smaller so that the classical vacua are not isolated, but form moduli spaces, and the analysis is more complicated. But it can be done  \cite{Borel} along the lines of \cite{Kac1,Keurentjes}. The total number of quantum vacua in a given instanton universe is always equal to $h^\vee$.  

\section{Theories with nonchiral matter}
\setcounter{equation}0

Now we will turn to the analysis of the vacuum dynamics of non-Abelian $4D$ supersymmetric gauge theories with matter and concentrate in this section on the models with left-right symmetric fermion content. 

Consider a theory involving, besides the adjoint vector multiplet $\hat V(x^\mu, \theta^\alpha, \bar \theta_{\bar \alpha})$, an equal number of chiral multiplets $S^f_j(x_\mu^L, \theta^\alpha)$ and $T^{fj}(x_\mu^L, \theta^\alpha)$ belonging to the fundamental and antifundamental representations of the gauge group,  $f = 1,\ldots, N_f$. Consider the following superfield Lagrangian (of {\it supersymmetric} $QCD$):
 \begin{empheq}[box=\fcolorbox{black}{white}]{align}
\lb{SQCD}
{\cal L}_{SQCD} \ =\ {\cal L}_{SYM} + \ \frac 14 \int d^4\theta \, (\overline S^f e^{g\hat V} S^f  + T^f e^{-g\hat V}  \overline T^f) \, +  \nn 
\frac {m_{fg}}2 \left(  \int T^{fj} S^g_j  \, d^2\theta + {\rm c.c.} \right).
  \end{empheq}
The complex mass matrix $m_{fg}$ can be arbitrary, but to avoid unnecessary complications, we will  assume $m_{fg} = m\delta_{fg}$. 

If the mass $m$ is large compared to $\Lambda_{\rm SYM}$, one seems to be able to get rid of the matter fields altogether by integrating them out so that the Witten index for such a
theory coincides with the index for the SYM theory \cite{Wit82}. This is so, indeed, for the model \p{SQCD}, but we will see later that the presence of matter may affect the index in the theories involving trilinear couplings between matter multiplets. Such is the theory based on  $G_2$ gauge group and  including three fundamental matter 7-plets with a nonzero  cubic superpotential. 

We will discuss this theory at the end of the section, and to begin with, we will describe in detail
the vacuum dynamics of the ordinary supersymmetric QCD based on the $SU(N)$ gauge group.  As far as the mass is not zero, the Witten index of such a model is the same as in the SYM theory, $I_W = N$, but at low masses, the vacuum dynamics is quite different from the dynamics of pure SYM theories and represents a separate interest.

\subsection{Unitary groups}

First of all, in a theory with fundamental fermions, fractional topological charges \p{Pontr} are not admissible --- neither in the infinite volume nor on a finite torus: the Euclidean path integral including  fundamental  fermion determinants is not defined in the gauge background with a fractional Pontryagin number. Thus, there are only the ordinary instanton universes with $\vartheta \in [0,2\pi)$.

 Take the  simplest theory \p{SQCD} based on the $SU(2)$ gauge group and including a single pair of the chiral matter multiplets.  Assume at first that the matter field are {\it massless}. As we shall see, the dynamics in this case is quite peculiar.
The component Lagrangian of this model  reads
\be
\lb{L-SQCD-massless}
{\cal L} \ = && - \frac 14 (F^a_{\mu\nu})^2 +  i \lambda^a \sigma^\mu {\cal D}_\mu \bar \lambda^a + i\bar \psi \sigma^\mu {\cal D}_\mu \psi 
+ i\xi \sigma^\mu {\cal D}_\mu \bar \xi  \nn
&& - s^* {\cal D}_\mu {\cal D}_\mu s  - t {\cal D}_\mu {\cal D}_\mu t^*  - \frac {g^2}2 (s^* t^a  s - t t^a t^*)^2 \nn
&& + \, ig\sqrt{2}[ s^*(\hat{ \lambda}  \psi ) - (\bar \psi \hat{\bar\lambda}) s -  (\xi \hat {\lambda} )t^*   +
 t( \hat { \bar \lambda} \bar \xi)]
\,.
\ee
 It does not involve a dimensional parameter and is invariant (at the tree level) under scale transformations. The corresponding conserved current is $\Theta_{\mu\nu} x^\nu$, where 
 $\Theta_{\mu\nu}$ is the energy-momentum tensor. 

But the theory is also supersymmetric and the dilatational current  has superpartners. One of the superpartners is the supercurrent and another is an axial current, the so-called $R$ current \cite{Salam} --- the generator of the following transformations:
\be
\lb{R}
\lambda \ \to \ e^{i\alpha} \lambda, \qquad  (s,t) \ \to \ e^{2i \alpha/3} (s, t), \qquad (\psi, \xi) \ \to \ e^{-i\alpha/3} (\psi, \xi)\,, 
\ee
while $A_\mu$ does not transform.
The invariance of \p{L-SQCD-massless} under \p{R} is manifest. The transformations \p{R} correspond to the following transformations of the superfields supplemented by a transformation of the odd superspace coordinate:
\be
\lb{R-super}
W \to e^{i\alpha}  W, \qquad (S,T) \to e^{2i\alpha/3} (S,T), \qquad \theta \to e^{i\alpha} \theta\,.
 \ee
We already met a similar transformation in   Sec. 3 when we discussed  pure SYM theory [see Eq. \p{sup-chiral}]. As we mentioned there, the symmetry 
\p{sup-chiral} is {\it anomalous} being broken by quantum effects. And the same concerns the symmetry \p{R-super}.
 
One may observe, however, that the Lagrangians \p{SQCD}, \p{L-SQCD-massless} are also invariant under two extra  symmetries when the superfield $\hat W$ and the coordinate $\theta$ are left intact, but the matter superfields (and their components) are multiplied by a phase:
 \be
\lb{chiral-ST}
S \ \to\  e^{i\beta} S \qquad   T \ \to\  e^{i\gamma} T\,.
 \ee
The generator of the transformation \p{chiral-ST} with $\beta = - \gamma$ is the {\it vector} $U(1)$ current. Such a transformation is a true symmetry of the theory even when the mass term is switched on. And the generator of the transformation  \p{chiral-ST} with $\beta =  \gamma$ is the {\it axial} $U(1)$ current. This symmetry is broken by the mass term. And not only by the mass term. As was the case for the $R$-symmetry \p{R-super}, the symmetry $(S,T) \to e^{i\beta}(S,T)$ is anomalous --- it is broken by quantum effects.

 A remarkable fact, however, is that there exist a certain combination of two anomalous chiral symmetries that is {\it not} anomalous and represents a {\it true} symmetry of the massless theory   \p{L-SQCD-massless}.  It reads \cite{ADS-84}
 \be
\lb{bez-anom-SU2}
W \ \to \ e^{i\alpha}  W, \qquad (S,T) \ \to \  e^{-i\alpha}(S,T),  \qquad \theta \to e^{i\alpha} \theta\,.
 \ee
The physics here is rather similar to what happens in ordinary QCD. The tree Lagrangian of QCD with one massless quark $q(x)$ admits two conserved currents: the vector current $\overline q \gamma_\mu q$ and the axial current $\overline q \gamma_\mu   \gamma^5  q$. The vector current is not anomalous, but the axial current is. The corresponding symmetry is also broken explicitly if the quark is endowed with a mass.

If one considers QCD with $N_f$ massless quark flavours, the symmetry of the tree Lagrangian is $U_L(N_f) \times U_R(N_f)$ describing independent unitary rotations of the left-handed  and right-handed quarks. The flavor-singlet axial symmetry is anomalous, but other symmetries are not. Spontaneous breaking of $SU_L(N_f) \times SU_R(N_f) \times U_V(1)$ down to $U_V(N_f)$ associated with the formation of quark condensate leads by Goldstone theorem to appearance of $N_f^2 - 1$ massless particles --- pseudoscalar mesons.\footnote{We hasten to add, however, that the physical $u,d$ and $s$ quarks are light, but  not massless, the axial symmetry is not exact, and pseudoscalar mesons carry mass, though this mass is comparatively small.} 

In massless supersymmetric QCD with one flavor (one pair of chiral multiplets), the situation is similar, the only essential difference is the presence of an extra massless Weyl fermion --- the gluino. As a result, at the tree level, we have two conserved axial currents, one of them being anomalous while the other is not. In a massless $SU(N)$ theory with $N_f$ pairs of multiplets, one has $N_f^2 - 1$  flavor-nonsinglet conserved axial currents, as in the ordinary QCD, and one extra flavor-singlet nonanomalous current corresponding to the symmetry
\be
\lb{bez-anom-SUN}
W  \to  e^{i\alpha}  W, \quad (S^f,T^f)  \to   e^{-i\alpha(N-N_f)/N_f }(S^f,T^f),  \quad \theta \to e^{i\alpha} \theta\,.
 \ee
Let us go back to the simplest $SU(2)$ theory  with a single flavor and write a structure
\be
\lb{struct-ADS} 
 \propto \int \frac {d^2\theta}{T^j S_j} \,, 
 \ee
which is {\it invariant} under the transformations \p{bez-anom-SU2} (we are restoring the color index, $j = 1,2$). This structure has a quite precise and important physical meaning: it enters the effective Born-Oppenheimer Lagrangian of massless SQCD in the region where the values of the scalar fields $|s|,|t|$ are much larger than the parameter $\Lambda$ such that the effective coupling constant $g^2(\Lambda)$ is of order 1. Then $g^2(|s|\sim|t|)$ is small.\footnote{As was the case for  pure SYM theory, the theory \p{SQCD} is asymptotically free and the effective coupling drops at large energies and grows at small energies.}

The structure \p{struct-ADS} is allowed by symmetry, but it does not {\it a priori} mean that it is actually generated. In fact, it {\it is}. The mechanism is purely non-perturbative associated with the instanton solutions \p{inst} of the Euclidean Yang-Mills field equations. We will not describe it here referring the reader
to the original papers \cite{ADS-84,NSVZ-supinst} and the reviews \cite{Amati,versus}.  

The {\it Affleck-Dine-Seiberg effective theory} represents a Wess-Zumino model including the chiral supermultiplets $S_j$ and $T^j$. The Lagrangian reads
\begin{empheq}[box=\fcolorbox{black}{white}]{align}
\lb{eff-ADS}
{\cal L}_{\rm ADS} \ = \  \frac 14 \int d^4\theta (\overline S^j S_j + T^j \overline T_j) + \frac{\Lambda^5}2  \left( \int \frac {d^2\theta}{T^j S_j} + {\rm c.c.} \right) \,.
 \end{empheq} 
The factor $\sim \Lambda^5$ appears by dimensional reasons,\footnote{A precise numerical coefficient in the second term depends on the precise definition of $\Lambda$. We have inserted $1/2$,  bearing in mind the convention in Eq.\p{L-WZ}.} The first term in \p{eff-ADS} comes from the first line of the original SQCD Lagrangian \p{SQCD}, where we omitted the supergauge fields, which play the role of fast variables in the region $|s| \sim |t| \gg \Lambda$ and are irrelevant for the low-energy dynamics. The second term includes the superpotential generated by nonperturbative effects:
\be
\lb{W-ADS}
{\cal W}^{\rm np}(S,T) \ = \ \frac {\Lambda^5} {T^j S_j}.
\ee
 The mass term is not included yet.

To find the vacuum states in this theory, one has to solve the equations
 \be
\lb{eq-vac}
\frac {\partial {\cal W}^{\rm np}}{\partial s_j} \ =\ \frac {\partial {\cal W}^{\rm np}}{\partial t^j} \ =\ 0\,.
 \ee
And such  solutions are {\it absent}! The energy density
 \be
 \epsilon(s, t) \ =\ \left| \frac {\partial {\cal W}^{\rm np}}{\partial s_j}  \right|^2 +  \left| \frac {\partial {\cal W}^{\rm np}}{\partial t^j}  \right|^2 \ =\ \Lambda^{10} 
\frac { s^{j*} s_j + t^j t_j^*}{|t^k s_k|^4}
 \ee
is positive definite, but it tends to zero when $s$ or $t$ go to infinity. We face the situation of {\it run-away vacuum}. The ground state in the quantum Hamiltonian is {\it absent} (so that supersymmetry is spontaneously broken), but there is a continuous spectrum including the states with arbitrarily low energies. 
Such is a rather peculiar (as promised) low-energy dynamics of massless $SU(2)$ SQCD with one flavor. 

But what happens if we switch on the mass? In this case, we have to add to the ADS superpotential \p{W-ADS} the tree superpotential $m T^j S_j$. In this case, the
 equations \p{eq-vac} (with ${\cal W}^{\rm np} \to {\cal W}^{\rm tot}$) have  finite solutions with
 \be
\lb{1complex}
(t^j s_j)^2 \ =\ \frac {\Lambda^5}m \ \stackrel{\rm def}= \ v^4\,.
 \ee
 As indicated in Eq.\p{L-SQCD-massless}, there is also a quartic scalar contribution in the potential energy.\footnote{It is often called ``$D$-term", because the quartic scalar contribution in  \p{L-SQCD-massless} takes its origin from the term $\sim D^a D^a$ in \p{SYM-comp}.}
 In vacuum, it should also vanish giving
 \be 
\lb{Dreal}
s^* t^a s - t t^a t^* \ =\ 0\,.
 \ee
We have altogether $2+3 = 5$ real conditions for $4\times 2 = 8$ real components of $s_j$ and $t^j$.  Modulo a 3-parametric gauge freedom,  this equation system has two distinct solutions. They can be chosen in the form
 \be
\lb{scal-vev} 
{\bf I}: \ s_j \ =\ \left( \begin{array}{c} v \\ 0 \end{array} \right), \qquad \qquad t^j \ =\ (v, 0)\,; \nn
{\bf II}: \ s_j \ =\ \left( \begin{array}{c} iv \\ 0 \end{array} \right), \qquad \qquad t^j \ =\ (iv, 0)\,.
  \ee
The existence of two distinct vacua conforms with the result $I_W = 2$ for  pure SYM theory with $SU(2)$ gauge group. 
 
Nonzero vacuum expectation values of the scalar fields break $SU(2)$ gauge symmetry by Higgs mechanism giving mass to gauge superfields. 
When $m$ decreases, the absolute value $v$ of the scalar vevs increases. In the limit $m \to 0$, the vacua run away to infinity.

For a small finite $m$, the characteristic mass of the massive vector superfields is of order
$$ M_V \sim g(v) v \sim v \sim \left(\frac {\Lambda^5}m \right)^{1/4} 
$$
up to calculable logarithmic corrections associated with the scale dependence of the coupling constant. This mass is much larger than the characteristic mass $m$ of 
scalar excitations, which endows the ADS Lagrangian the true BO status and justifies the analysis above. 

As was the case for  pure SYM theory, the vacua are characterized by a nonzero gluino condensate. Once the scalar vevs \p{scal-vev} are known, the values of the condensate can be calculated using the  anomalous {\it Konishi relation} \cite{Konishi} (see also \cite{Clark}):
 \begin{empheq}[box=\fcolorbox{black}{white}]{align}
\lb{Konishi}
  \{\hat {\bar Q}_{\dot \alpha}, \bar \psi^{\dot \alpha j} s_j\}  \ \propto \ -m t^j s_j + \frac {g^2}{32\pi^2} \lambda^a \lambda^a \,.
 \end{empheq}
Here the first term is the ``naive" classical result: the supercharge $\hat {\bar Q}_{\dot \alpha}$ makes the auxiliary field $F^*$ out of $\bar \psi$, and $F^* = -mt$ by an equation of motion. The second term is a quantum anomaly arising from a certain triangle graph. 
It is akin to chiral anomaly in ordinary QCD.

Take the vev of  both sides of \p{Konishi}. The average of the left-hand side over a supersymmetric vacuum  vanishes, and so should the right-hand side. We derive  
\be
\lb{ll=mL5}
g^2\langle \lambda^a \lambda^a \rangle \ = \ \pm 32 \pi^2 \sqrt{m\Lambda^5} \,.
 \ee

This analysis can be generalized to higher $N$. We dwell on the theory involving $N_f = N-1$ pairs of chiral multiplets having the same small mass $m$. In this case, the scalar vevs break gauge symmetry completely. For $N > 2$, the instantons generate the following effective superpotential:
 \be
\lb{suppot-N}
{\cal W}^{\rm np} \ = \ \frac {\Lambda^{2N+1}} {\det (T^{jf} S_j^g)}\,,
\ee 
 where $ f,g = 1,\ldots, N_f = N-1$ are the flavor indices. The integral $\int d^2 \theta \, {\cal W}^{\rm np}$ is invariant under the symmetry \p{bez-anom-SUN}. Add to \p{suppot-N} the tree superpotential $mT^{jf}S_j^f$. The vacuum states appear as the solutions to the equation system
\be
\lb{F-eqns}
\frac \pd {\pd s_j^f} \left[{\cal W}^{\rm np}(s, t) +  m \,t^{jf} s_j^f   \right]  =
  \frac \pd {\pd t^{jf}} \left[{\cal W}^{\rm np}(s, t) +  m\, t^{jf} s_j^f   \right]  = 0.
 \ee
supplemented with the ``$D$ flatness condition" \p{Dreal}.
The equations \p{F-eqns} imply that    
\be
\lb{vac-SUN}
t^{jf} s^g_j \ =\ e^{2\pi ik/N} \left( \frac {\Lambda^{2N+1}}m \right)^{1/N} \delta^{fg}\  \stackrel{\rm def}= \ v^2 e^{2\pi ik/N}  \delta^{fg} \,, 
 \ee
$k = 0,\ldots, N-1$. 
Taking in account also \p{Dreal} (involving now a sum over the flavors), we obtain (modulo a gauge transformation) $N$ distinct solutions:
\be
\lb{scalar-vev-N}
s_1 \ =\ \left( \begin{array}{c} v e^{i\pi k/N}\\ 0\\ \ldots \\ 0  \end{array} \right), \quad \ldots , \quad s_{N-1} \ =\ \left(\begin{array}{c} 0 \\ \ldots \\  v e^{i\pi k/N}\\ 0 \end{array} \right), \nn 
t^1 \ =\ (v e^{i\pi k/N}, 0,\ldots,0), \quad \ldots, \quad  t^{N-1}  \ =\ (0,\ldots, v e^{i\pi k/N}, 0) \,.
\ee
This conforms again with the vacuum counting $I_W = N$ for  pure SYM theory. As was mentioned, the scalar vevs \p{scalar-vev-N} break the gauge $SU(N)$ symmetry completely. The Konishi relation implies the formation of the gluino condensate:
\be
\langle \lambda^a \lambda^a \rangle \ \sim \   e^{2\pi ik/N} \left(\Lambda^{2N+1} m^{N-1}   \right)^{1/N} \,.
 \ee
If $N_f < N-1$, $SU(N)$ gauge symmetry is broken down to $SU(N-N_f)$, not completely. Still, for small masses, the low-energy dynamics is determined by scalar fields, and the analysis exhibits $N$ vacuum states also in this case \cite{ADS-84}.

 Take for example the $SU(3)$ theory with one pair of chiral multiplets. The full effective superpotential is\footnote{Then $\int d^2\theta \, {\cal W}^{\rm np}(S,T)$ is invariant under \p{bez-anom-SUN}.}
 \be
\lb{W-ST}
{\cal W}(S,T) \ =\ \frac {\Lambda^4}{\sqrt{T^jS_j}} + m T^jS_j\,.
 \ee
There are three vacua at
\be
\lb{v-SU3}
v^2 = t^j s_j \ =\ e^{2\pi i k/3} \left( \frac {\Lambda^4}{2m}   \right)^{2/3}\,.
 \ee

The corresponding gluino condensates  are
 \be
\langle \lambda^a \lambda^a \rangle \ \sim \   e^{2\pi ik/3} (m\Lambda^8)^{1/3}\,. 
\ee
In this case, the spectrum involves three different physical scales:
  \begin{itemize}
\item The low scale $\sim m$ of the scalar excitations and their superpartners.

\item  The mass $M_V  \sim v \sim (\Lambda^4/m)^{1/3}$ of the massive vector multiplets. 

\item An intermediate scale $ \Lambda_2$  --- the characteristic dimensional parameter for the remaining SQCD theory with the unbroken $SU(2)$ group. 
\end{itemize}

Both the original $SU(3)$ theory and the $SU(2)$ theory describing the dynamics at energies below $v$ are asymptotically free, and the effective coupling drops with energy. But the speeds by which it drops at $E \la v$ and $E \ga v$ are different. Generically, for SQCD with $N$ colors and $N_f < 3 N$ flavors, one can write at the one-loop level
 \be
g^2(\mu) \ =\ \frac {2\pi}{(3N - N_f) \ln \frac \mu {\Lambda_{N}}}\,.
 \ee
In the case under consideration, we have
 \be
\lb{g-SU2}
 g^2(\mu) \ \propto\ \frac {2\pi}{8\ln \frac \mu {\Lambda_3}}
\ee
for $\mu \ga v$ and 
 \be
\lb{g-SU3}
 g^2(\mu) \ \propto\ \frac {2\pi}{5\ln \frac \mu {\Lambda_2}}
\ee
for $\mu \la v$. 

The parameter $\Lambda_2$  can be found from the condition that the estimates \p{g-SU2} and \p{g-SU3} coincide at $\mu \sim v$. We obtain
\be
\Lambda_2 \sim \Lambda_3^{8/5} v^{-3/5}
 \ee
or, bearing in mind \p{v-SU3}, 
 \be
\Lambda_2 \ \sim \ \Lambda_3^{8/5} \left( \frac{\Lambda_3^4}m   \right)^{-1/5} = m^{1/5} \Lambda_3^{4/5} \,.
 \ee
For a generic $SU(N)$ theory with $N_f$ pairs of fundamental matter multiplets, where the symmetry is broken down to $SU(N-N_f)$, we derive
 \be
\Lambda_{N-N_f} \ \sim \ \Lambda_N  \left( \frac{m} {\Lambda_N}   \right)^{\frac {3N_f(N-N_f)}{2N(3N-4N_f)}} .
\ee 
This estimate\footnote{We disagree at this point with the estimate quoted in Ref. \cite{ADS-84}:
 $$
\Lambda_{N-N_f} \ \sim \ \Lambda_N  \left( \frac{m} {\Lambda_N}   \right)^{N_f/3N}\,.
$$
The latter estimate could be reproduced under the assumption that the matter does not affect the evolution of the coupling below $v$.} 
is valid as long as $3N > 4N_f$. Otherwise, the $SU(N-N_f)$ theory has too many matter fields and is not asymptotically free.

\vspace{1mm}

If $N_f \geq N$, the effective nonperturbative potential is not generated and the physics is the same as in pure SYM theory with $N$ colors. This gives again $I_W = N$.

\subsection{Domain walls}

The existence of distinct classical vacua suggests the existence of domain walls that interpolate between them. And in the limit  $m \ll \Lambda$ when the ADS Lagrangian describes the low-energy dynamics of SQCD, these walls exist, indeed. One can observe furthermore that they  are {\it BPS saturated}.

The abbreviation BPS stands for Bogomolny, Prasad, and Sommerfeld who found \cite{BPS} that, in the limit when the scalar self-interaction is switched  off, the mass of the `t Hooft-Polyakov monopole can be found by solving a simple first-order differential equation. The surface  energy and the profile of the domain walls interpolating between the different vacuum states of the ADS effective potential can be found  by a similar method.

In the problems in interest, the superpotentials depend on several superfields $S_j, T^j$. But to understand what happens, consider first the Wess-Zumino model \p{L-WZ} including only one superfield $\Phi$ with the  superpotential
 \be
{\cal W}(\Phi) \ =\ \mu^2 \Phi - \frac {\alpha \Phi^3}3\,.
 \ee 
The component Lagrangian reads
 \be
\lb{WZ-comp}
{\cal L}_{WZ} \ =\  \partial_\mu \phi^* \partial^\mu \phi + i \psi \sigma^\mu \partial_\mu \bar \psi  - \bar {\cal W}'(\phi^*) 
 {\cal W}'(\phi) \nn
 - \frac 12 {\cal W}''(\phi)\psi^2 - \frac 12 \bar {\cal W}''(\phi^*)\bar\psi^2 \,.
   \ee

If $\alpha > 0$, the model includes two vacua at $\phi_{\rm vac} = \pm \mu /\sqrt{\alpha}$.
Consider a static wall configuration interpolating between two minima:\footnote{Note that this configuration does not depend on two transverse coordinates and the problem  reduces to evaluating the mass of the corresponding {\it soliton} in a $2D$ model \cite{BPS-2D}.}  $\phi(z = -\infty) = -\mu/\sqrt{\alpha}$ and $\phi(z = \infty) = \mu/\sqrt{\alpha}$. Its surface energy density
is
 \be
\epsilon \ =\ \int_{-\infty}^\infty dz \, (|\pd_z \phi|^2 +  |{\cal W}'(\phi)|^2) \,.
 \ee
We can rewrite it as 
\be
&&\epsilon  = 
\int_{-\infty}^\infty dz \, [\pd_z \phi - e^{-i\delta}{\cal W}'(\phi^*)]  [\pd_z \phi^* - e^{i\delta}{\cal W}'(\phi)] \nn
&& + \int_{-\infty}^\infty dz \, [e^{i\delta} \pd_z \phi\,
{\cal W}'(\phi) + e^{-i\delta} \pd_z \phi^*
{\cal W}'(\phi^*)] \nn
&& = \int_{-\infty}^\infty dz |\pd_z \phi- e^{-i\delta}{\cal W}'(\phi^*)|^2 + 2 {\rm Re} \{e^{i\delta}[{\cal W}(\infty) - {\cal W}(-\infty)]\} 
 \ee
with a phase $\delta$ to be shortly determined.
The solution to the classical equations of motion represents a minimum of the energy functional. We arrive at the first-order equation
\be
\lb{eq-BPS}
\pd_z \phi - e^{-i\delta}{\cal W}'(\phi^*) \ =\ 0\,.
 \ee
In the case considered, we may set $\delta = 0$. The BPS wall configuration satisfying the equation $\pd_z \phi = {\cal W}'(\phi^*)$ is real:
\be
\phi(z) \ =\ \frac \mu{\sqrt{\alpha}} \tanh (\mu  \sqrt{\alpha} z) \,.
 \ee
 The surface energy is 
  \be
\lb{}
\epsilon \ =\  2 [{\cal W}(\infty) - {\cal W}(-\infty)] \ =\ \frac {8\mu^3}{3\sqrt{\alpha}}\,.
 \ee
Suppose now that $\alpha$ is negative. Then the vacua occur at purely imaginary values:
 $\phi_{\rm vac} = \pm i \mu/\sqrt{|\alpha|}$. The wall solution is also purely imaginary, it satisfies the equation \p{eq-BPS} with $\delta = \pi/2$ or  $\delta = -\pi/2$, 
depending on how $z$ is chosen. In other words, the parameter $\delta$ depends on the positions of the vacua on the complex plane, between which the wall interpolates \cite{Chibisov}. 

This method can be easily generalized to a WZ model including several superfields. It is not difficult to prove that the wall surface energy is always equal to or exceeds the {\it BPS bound},
 \begin{empheq}[box=\fcolorbox{black}{white}]{align}
\lb{BPS-bound}
\epsilon_{\rm BPS} \ =\  2 |{\cal W}(\infty) - {\cal W}(-\infty)|\,.
 \end{empheq}
For the SU(2) ADS model with the superpotential
$${\cal W} \ =\ \frac {\Lambda^5}{T^jS_j} + m T^j S_j\,, $$ we find \cite{KSS} the following profile of the BPS domain wall interpolating between the vacua \p{scal-vev}:
 \be
t^js_j(z) \ =\ v^2 \frac {(e^{4mz} + i)^2}{e^{8mz} + 1} \,.
 \ee   
 The absolute value of the moduli $t^js_j$ is constant, being equal to $v^2 = \sqrt{\Lambda^5/m}$, but its phase changes from  $\pi$ at $z = -\infty$ to 0 at $z = \infty$.
The surface energy of this wall
is 
 \be
\epsilon \ =\  2  |{\cal W}(\infty) - {\cal W}(-\infty)| \ =\ 4\left(\frac {\Lambda^5}{v^2} + mv^2   \right) = 8 \sqrt{m\Lambda^5}\,.
 \ee
Using \p{ll=mL5}, we can also express it via the gluino condensate in the original theory:
 \be
\lb{wall-energy}
\epsilon \ = \ \frac {g^2} {8\pi^2} \left| \langle\lambda^a \lambda^a\rangle_{z=\infty} - \langle\lambda^a \lambda^a\rangle_{z = - \infty} \right| \,.
 \ee
 
For higher $N$, there are $N$ complex vacua \p{vac-SUN}. There are domain walls interpolating between any pair of them and all these walls are BPS. The profile of the wall relating the adjacent vacua  with $\langle t^f s^g\rangle  = v^2 \delta^{fg}$ and $\langle t^f s^g \rangle = v^2 e^{2\pi i/N}\delta^{fg}$ was found numerically in Ref. \cite{wall-N}.   In contrast to the case $N=2$ where $|ts|$ did not depend on $z$, at higher $N$, the wall exhibits a small bump in the middle (see Fig. \ref{profil}).

 \begin{figure} [ht!]
      \bc
\vspace{-2cm}
    \includegraphics[width=0.7\textwidth]{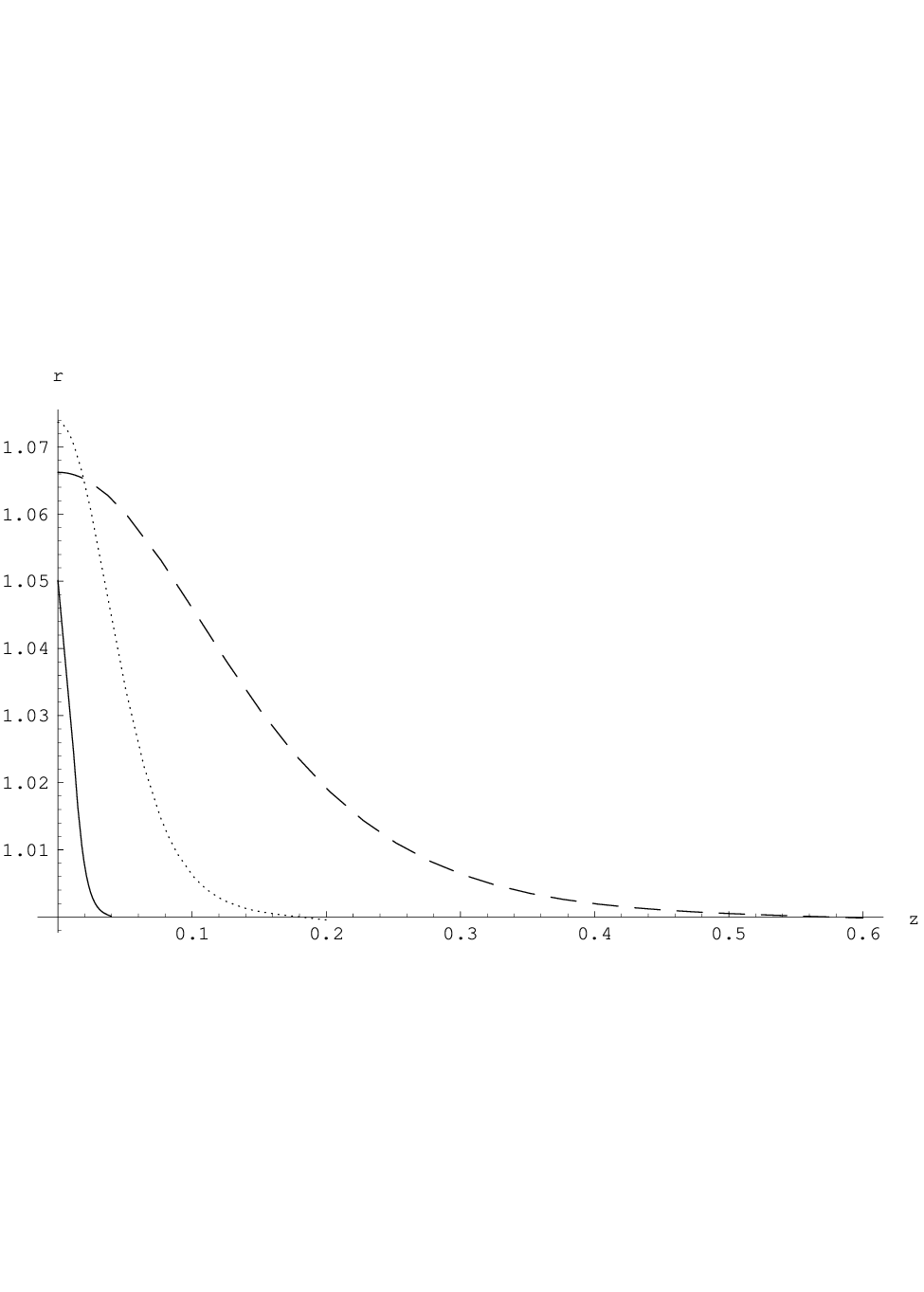}                  
     \ec
\vspace{-3cm}
    \caption{The plots of  $r(z) = |ts|(z)/|ts|(\pm \infty)$ for the BPS walls in Higgs phase for N = 3 (dashed line), N = 5 (dotted line)
and N = 10 (solid line). In each case, only a half of the wall (from its middle to the right) is displayed.} 
 \label{profil}
    \end{figure}  

\vspace{-1mm}

What happens if we increase the mass? If $m \sim \Lambda$, the ADS Lagrangian does not describe the low-energy dynamics of the theory anymore and the latter depends on the scalar superfields and gauge superfields on equal footing. In  pure SYM theory, we tried to extract information about  the vacua and the walls by studying the effective {\it non-Wilsonian} Veneziano-Yankielowicz Lagrangian \p{VY}. The results were surprising: besides the chirally asymmetric vacua, a chirally symmetric vacuum (probably, an artifact of the VY description) was observed, and the only kind of walls which we saw there were the walls relating a chirally asymmetric vacuum to the chirally symmetric one. The walls between different  asymmetric vacua did not exist.

In the theory with matter, we can write a generalization of \p{VY}, the {\it Taylor-Veneziano-Yankielowicz} effective Lagrangian \cite{TVY}. This Lagrangian depends on the composite gauge superfield \p{S=WW} (its cubic root $\Phi$ has canonical dimension of mass) and on the matter moduli 
${\cal M}^{fg} = T^{jf} S^g_j, \, f,g = 1, \ldots, N_f$. If the masses of all the matter fields are equal, the matrix ${\cal M}^{fg}$ can also be chosen in the diagonal form, ${\cal M}^{fg} =  X^2
\delta^{fg}$. 
For $N_f = N-1$, the TVY Lagrangian reads
\begin{empheq}[box=\fcolorbox{black}{white}]{align}
{\cal L}_{\rm TVY} \ =\ \alpha \int d^4\theta \, \overline \Phi \Phi + \beta \int d^4\theta \, \overline X X 
+  \left[\int d^2\theta \left( \gamma \Phi^3 \ln \frac {\Phi^3 X^{2(N-1)}}{\Lambda^{2N+1}} + \frac 12 m X^2\right) + {\rm c.c.}  \right]
 \end{empheq}
with some $\alpha,\beta,\gamma$.

 When the mass is much smaller than $\Lambda$ and the characteristic values of the lowest component of $\Phi$, one can integrate out the gauge fields and arrive at the ADS Lagrangian. When $m \gg \Lambda$, one can integrate out the matter fields and arrive at the VY effective Lagrangian.
The intermediate values of $m$ can be explored numerically. The results are rather amusing \cite{Veselov,wall-N}.
 \begin{itemize}
\item For all values of mass, there exist a ``real" BPS domain wall interpolating between a chirally asymmetric vacuum and the chirally symmetric one,  where  $\phi$ and $\chi$ (the lowest components of $\Phi$ and $X$) vanish and which the TVY Lagrangian enjoys in the same way as the VY Lagrangian discussed in the previous chapter did.
\item For $m \ll \Lambda$, the complex ADS domain wall is reproduced, but on top of that, one observes {\it another} BPS wall, in the middle of which the values of $\phi$ and $\chi$ are rather close to zero: the wall passes close to the chirally symmetric vacuum and dwells for a while in its vicinity.  When $N=2$, it simply splits in two real walls. For $N >2$, it stays at some distance from the chirally symmetric point even in the limit $m \to 0$.
\item When the mass increases, the second BPS wall retreats from the chirally symmetric point and approaches the first wall. At some critical value of the mass $m^* \sim \Lambda$, the two walls merge, a bifurcation occurs, and at still larger values of mass, the BPS solutions disappear.
 \item In a mass interval $m^* < m \leq m^{**}$, the wall solution to the second-order field equations still exists, but it is not BPS anymore.
 \item For $m > m^{**}$, complex domain walls are absent, as they were absent in the VY case.
\end{itemize}
The last fact is especially strange. If we believe in the picture of spontaneous chiral symmetry breaking in SYM theory or in supersymmetric QCD, the walls between different phases {\it must} exist. Maybe they do. But there is no trace of them in the VY (resp. TVY) approach for the theory $N_f = N-1$ with equal masses of all matter fields.

Note, however, that the situation is different  when the number of flavors $N_f$ is less than $N/2$. In this case, the TVY walls between different chiral phases are {\it tenacious} --- they stay there even in the limit $m \to \infty$ \cite{tenacious}\footnote{This  may happen also when $N_f = N-1$ if the masses of the matter fields are different, $m_{fg} \neq m\delta_{fg}$. } But this limit
 is not smooth. For large masses, the tenacious TVY walls acquire ``hard cores" near the cuts of the Kovner-Shifman potential (solid lines
in Fig. \ref{vac-sect}). In these narrow cores, the field $\phi$ does not change much, but the field $\chi$ does, so that the cores carry
  a significal fraction of the total wall energy.

 Well, the TVY dynamics is very rich and interesting. I am afraid, however, that  it does not recount well what happens (would physically happen if SQCD described nature) in supersymmetric QCD...

\subsection{Orthogonal groups}
The constructions above can be generalized to all other simple Lie groups \cite{ADS-ort,Cordes,ShifVain-ort,Morozov}. 

Consider e.g. a theory based on an orthogonal  group\footnote{The cases $N=3,4$ are special.  $Spin(3) \simeq SU(2), \, Spin(4) \simeq SU(2) \times SU(2)$, and   the formula $h^\vee[Spin(N)] = N-2$ is not valid.} ${\cal G} = SO(N \ge 5)$ and involving a  light matter chiral multiplet $S_a$ in the defining vector representation.
 Instantons and other nonpertubative gauge field configurations generate an effective Lagrangian depending on the gauge-invariant composite matter superfield $S_aS_a$. (The vector representation is real and one does not need a pair of mutiplets to cook up a gauge invariant.) To determine its form, we use the same logic as for the unitary groups. For all such theories, there exists a certain combination of  the $R$-current and the  matter axial current that is free from anomaly. In the case considered,  the corresponding exact symmetry of the action is
 \be
\lb{bez-anom-ON}
W \ \to \ e^{i\alpha}  W, \qquad S \ \to \  e^{-i\alpha(N-3)}S,  \qquad \theta \to e^{i\alpha} \theta\,.
 \ee
The effective Lagrangian that is invariant under this symmetry reads
 \be
 {\cal L}^{\rm np} \ =\ \int d^2 \theta \, \frac {\Lambda^{(3N-7)/(N-3)}}{(S_aS_a)^{1/(N-3)}} \ +\  {\rm c.c.}  
 \ee
Add here the tree mass term,
 $${\cal L}_m \ = \ \frac m2 \int d^2\theta \, S_a S_a \,, $$
and solve the equation $\pd {\cal W}/\pd S_a = 0$. We obtain $N-2$ distinct solutions:
 \be
\lb{vac-ON}
\langle s_a s_a \rangle_{\rm vac} \ \sim \ \frac {\Lambda^{(3N-7)/(N-2)}}{m^{(N-3)/(N-2)}} e^{2 \pi i k /(N-2)},  \qquad k = 0,\ldots,N-3 \,.
\ee
 Now, $N-2$ coincides with the dual Coxeter number $h^\vee$ for $SO(N \ge 5)$ and hence the Witten index for our theory coincides with the result \p{ind-ON} for the pure SYM theory based on $SO(N)$ group. 

For the $SO(N)$ theory involving $N_f$ vector multiplets $S^f$, the anomaly-free symmetry is\footnote{Take Eq.\p{bez-anom-SUN} and replace there 
$$N = h^\vee[SU(N)] \ \to \ N-2 = h^\vee[SO(N)]\,. $$
Note that a {\it single} vector multiplet $S_a$ for $SO(N)$ plays here the same role as a {\it pair} of multiplets $S_j$ and $T^j$ for $SU(N)$. It follows from the fact that the triangle diagrams describing the anomalous nonconservation of the matter axial currents give in these two cases the same contributions. 

And this in turn follows from two facts: {\it (i)} the Dynkin index \p{Dyn-ind} of a vector representation in $SO(N)$ is two  times larger than the Dynkin index of a fundamental representation in $SU(N)$; {\it (ii)} the vector representation is real and the anomalous triangle involves an additional symmetry factor $1/2$.} 
 \be
\lb{bez-anom-ON-Nf}
W \ \to \ e^{i\alpha}  W, \qquad S^f \ \to \  e^{-i\alpha(N-2-N_f)/N_f}S^f,  \qquad \theta \to e^{i\alpha} \theta\,.
 \ee
The invariant nonperturbative potential is generated for $N_f < N-2$ and reads
\be
  {\cal L}^{\rm np}(N_f) \ =\ \int d^2 \theta \, \frac {\Lambda^{(3N-N_f-6)/(N-2-N_f)}}{\det({\cal M})^{1/(N-2-N_f)}} \ +\  {\rm c.c.}  
 \ee
with ${\cal M}^{fg} = S_a^f S_a^g$.

The Witten index is equal to $N-2$ also in this case.

 \subsection{$G_2$}

Consider the SQCD theory based on the $G_2$ gauge group. We discuss first its simplest version with the Lagrangian including a single matter fundamental supermultiplet $S_{\alpha = 1,\ldots,7}$. The symmetry considerations dictate the following form of the invariant non-perturbative effective potential:
 \be
{\cal L}^{\rm np} \ = \ -\int d^2\theta \left( \frac {\Lambda^{11}}{S_\alpha S_\alpha} \right)^{1/3} + {\rm c.c.}
\ee
Adding the mass term $\sim m \int  d^2\theta \, S_\alpha S_\alpha$ and solving the equation $\pd {\cal W}^{\rm tot}/\pd S_\alpha = 0$, we find four distinct vacuum states:
 \begin{empheq}[box=\fcolorbox{black}{white}]{align}
\lb{vac-G2}
\langle s_\alpha s_\alpha \rangle_{\rm vac} \ \sim \left(  \frac {\Lambda^{11}}{m^3} \right)^{1/4} e^{i\pi k/2}\,.
\end{empheq}
The Witten index is equal to
\be
I_W \ =\ 4 \ =\ h^\vee(G_2)\,,
\ee
 which coincides with \p{ind-G2}. The same result is obtained in the theory including two or three massive fundamental supermultiplets. In the theories with four or more multiplets, the nonperturbative superpotential is not generated and the index also coincides with  the pure SYM result \p{ind-G2}. 

A similar analysis can be made for higher exceptional groups \cite{Morozov}. In the SQCD theories whose Lagrangian represents a sum of the pure SYM term, the kinetic term and the mass term for the matter fields, the index is always equal to $h^\vee(\cal G)$.

Note, however, that in the $G_2$ theory including {\it three} fundamental supermultiplets $S^f_\alpha$, we are allowed to write in the  Lagrangian, 
besides the mass term, also the  {\it Yukawa} term \cite{G2}:
  \be
  \lb{Yukawa-G2}
  {\cal L}^{\rm Yukawa} \ =\ \frac h 6 \int d^2\theta f_{\alpha\beta\gamma}\varepsilon_{fgh}\, S^f_\alpha S^g_\beta S^h_\gamma 
   \ee
with $f_{\alpha\beta\gamma}$ defined in \p{f}.
The coupling constant $h$ carries no dimension and the theory keeps its renormalizability.

The structure
\be
B \ =\ \frac 16 f_{\alpha\beta\gamma}\varepsilon_{fgh}\, S^f_\alpha S^g_\beta S^h_\gamma
 \ee
also appears in this theory as an extra moduli parameter on which the effective nonperturbative superpotential depends  besides the familiar bilinear modulus ${\cal M}^{fg} = S_\alpha^f S_\alpha^g$. This superpotential reads \cite{G2-ante}
 \be
\lb{suppot-G2}
{\cal W}^{\rm np}({\cal M}, B) \ =\ \frac {\Lambda^9}{\det {\cal M} - B^2}\,.
 \ee
Indeed, symmetry considerations dictate 
\be
 {\cal W}^{\rm np}({\cal M}, B) \ =\ \frac {\Lambda^9}{\det {\cal M}} f\left(\frac {B^2}{\det {\cal M}} \right)\,, 
 \ee 
 and the easiest way to derive that $f(x) = 1/(1-x)$  is to note that, if one
 breaks $G_2$ down to  $SU(3)$ by Higgs mechanism, assuming a  nonzero vacuum expectation value of one of
 the scalar fields, $\langle s^1_\alpha \rangle = v_0 \delta_{\alpha 7}$, the
  nonperturbative superpotential 
 \p{suppot-G2} should coincide with the one in the $SU(3)$ theory with
 two chiral triplets $\tilde S^{f=1,2}_j$ and two antitriplets $T^{(f=1,2) j}$ into which the 7-plets $S^2$ and $S^3$ split:
   \be
  \label{WSU3}
  {\cal W}^{\rm np}_{G_2}  \to v_0^2 {\cal W}^{\rm np}_{SU(3)} \ \propto \ 
   \frac 1{\det (T \tilde S) }
 \ee
[see Eq. \p{suppot-N}].

The full superpotential of the effective theory is 
     \begin{empheq}[box=\fcolorbox{black}{white}]{align}
    \label{WfullG2}
    {\cal W}  \ =\  \frac {1}
    {2({\rm det}  {\cal M} - B^2)}  +   \frac m2 {\rm Tr} \ {\cal M}  + h B\,.
    \end{empheq}
  (From now on, everything will be measured in the units of $\Lambda$.
  The factors $1/2$ are introduced  for convenience.)

Our goal is to find the classical vacua. Were the Yukawa term absent, we could use the ansatz like
 \be
\lb{Ans-lam0}
 \langle s^1_\alpha \rangle_{\rm vac} = v_0\delta_{\alpha 1}, \qquad  \langle s^2_\alpha \rangle_{\rm vac} = v_0\delta_{\alpha 2}, \qquad \langle s^3_\alpha \rangle_{\rm vac}  = v_0\delta_{\alpha 3}
 \ee
(any other relevant configurations could be obtained from this by a gauge transformation). Then the equation
$\pd {\cal W} /\pd v_0 = 0 $  would reduce to
\be
  \lb{mv08}
m(v_0^2)^4 \ = \ 1
  \ee 
  with 4 distinct solutions for the moduli $u = v_0^2$. 

If $h \neq 0$, the ansatz \p{Ans-lam0} should be modified. One can choose  it as \cite{G2}:
\be
\lb{Ans-lam}
\langle s^1_\alpha \rangle \ =\ \left(\begin{array}{c} 0\\ 0 \\ 0\\ 0\\0\\0\\v_0 \end{array} \right), \quad 
\langle s^2_\alpha \rangle \ =\  \frac 1{\sqrt{2}} \left(\begin{array}{c} v_1\\ 0 \\ 0\\ 0\\0\\v_2\\0 \end{array} \right), \quad 
\langle s^3_\alpha \rangle \ =\ \frac 1{\sqrt{2}} \left(\begin{array}{c} 0\\ v_1 \\ v_2\\ 0\\0\\0\\ 0 \end{array} \right) 
 \ee
with the constraint $v_0^2 = (v_1^2 + v_2^2)/2 \equiv u$.  In this case, the matrix {\large m}$^{fg} = \langle s^f_\alpha s^g_\alpha \rangle$ is still equal to
$ u\, \delta^{fg}$, whereas  
\be
\lb{b}
b \ =\ \frac 16 f_{\alpha\beta\gamma}\varepsilon_{fgh}\, \langle s^f_\alpha s^g_\beta s^h_\gamma \rangle \ =  \ \frac 12 v_0(v_1^2 - v_2^2)\,.
 \ee
 We have to solve the equations
 \be
\lb{eqn-ub}
\frac {\pd {\cal W}(u,b)}{\pd u} \ &=&\ - \frac {3u^2}{2(u^3 - b^2)^2} + \frac {3m}2 \ =\ 0\,, \nn
\frac {\pd {\cal W}(u,b)}{\pd b} \ &=&\  \frac b{(u^3 - b^2)^2} + h \ =\ 0\,.
 \ee
It follows that $b = -h u^2/m$ while $u$ satisfies the equation
 \be
\lb{eqn-u}
mu^4 \left(1 - \frac {h^2 u}{m^2} \right)^2 \ =\ 1\,.
 \ee
It has {\it six} roots bringing about six distinct vacua!  

Let us see what happens in the limit of small $|h|$, the smallness
   being characterized by a dimensionless parameter 
    \be
    \label{kappa}
   \kappa = \left| \frac{h^2}{m^{9/4}} \right| 
    \equiv \left|h^2 (\Lambda/m)^{9/4} \right| \ll 1
    \ee
    (do not forget that $h$ and $m$ are generically complex).
    Then four of the roots are very close to the known solutions of Eq.
    (\ref{mv08}), but on top of this, we see two extra roots at {\it large}
values of $u$: $ u = m^2/h^2 \pm h^2/m^{5/2}+\ldots$. In the limit
$h \to 0$, these new vacua run away to infinity of the moduli space
and decouple. 

The latter statement can be attributed a precise meaning. Different vacua
are separated by the domain walls.  Their surface energy (which is  the measure of the height of the barrier
separating different vacua) satisfies a strict lower BPS bound,
  \be
  \label{bound}
  \epsilon \ \geq \ 2|{\cal W}_1 - {\cal W}_2 |\,,
  \ee
 where ${\cal W}_{1,2}$ are the values of the superpotential at the vacua 1,2
 between which the wall interpolates [see Eq. \p{BPS-bound}].

When $\kappa = 0$ and only 4 vacua
are left, the domain walls between  them are BPS--saturated and have the
energy density $\epsilon = 2\sqrt{2} |m|^{3/4} $ for the walls connecting the
``adjacent'' vacua (with, say, $u = m^{-1/4}$ and $u = i m^{-1/4}$)  and
 $\epsilon = 4|m|^{3/4}$ for the walls connecting the
``opposite'' vacua with $u = \pm m^{-1/4}$ or with  $u = \pm i m^{-1/4}$.
In fact,
 the effective Higgs Lagrangian is exactly the same here 
as for the $SU(4)$ theory with 3 chiral quartets and 3 antiquartets,
whose domain walls were studied in  Ref.\cite{wall-N}.

When $\kappa$ is nonzero but small so that a couple of new vacua at large
values of $u$ appear, the values of the superpotential \p{WfullG2}
at these vacua are also large $\sim  m^3/h^2$. The bound 
(\ref{bound})
dictates that the energy density of a wall connecting  an ``old'' and a 
``new'' vacua is of order $|m^3/h^2| = m^{3/4}/\kappa$ and tends to infinity in the 
limit $\kappa \to 0$.

We can understand now why a heuristic argument of Ref. \cite{Wit82}, saying  that the matter sector decouples  for large masses and does not affect  the index counting,
does not work in this case. In the limit $m \to \infty$, it decouples. But if the mass is very large but finite, extra vacua at very large values of $u$ appear. They are separated from four conventional vacua at small $u$ by a very high barrier and are irrelevant, as far as the conventional low-energy dynamics is concerned. But a traveller who managed to  overpass this barrier would find himself in a completely different fascinating world. The presence of this world should be taken into account in evaluating the Witten index.

When we increase $\kappa$, the new vacua move in from infinity and, at
$\kappa \sim 1$, the values of $u_{\rm vac}$ for the vacua of both types
 are roughly the same. It is interesting that, at $\kappa = \frac 2{3\sqrt{3}}
  \approx .385$, two of the vacua (an ``old'' and
  a ``new'' one) become
 degenerate. At  $\kappa \gg 1$,
 six vacua find themselves at the vertices of a perfect hexagon in the complex
 $u$ -- plane.\footnote{The size of this hexagon is of order $|\sqrt{m}/h^{2/3}|$. To assure that $|u_{\rm vac}|$ are still large (in physical units, 
$|u_{\rm vac}| \gg \Lambda^2$) so that we are still in the Higgs phase and the light and heavy
 degrees of freedom are well separated, we should keep $|h/m^{3/4}| \ll 1$.
 But if $|m|$ is small enough, this condition can be fulfilled at arbitrary
 large $\kappa$.} The complex roots of Eq.(\ref{eqn-u}) in the units of $m^{-1/4}$
 for 3 illustrative values of $\kappa$ are displayed in Fig. \ref{6-roots}.

 \unitlength=.5mm

 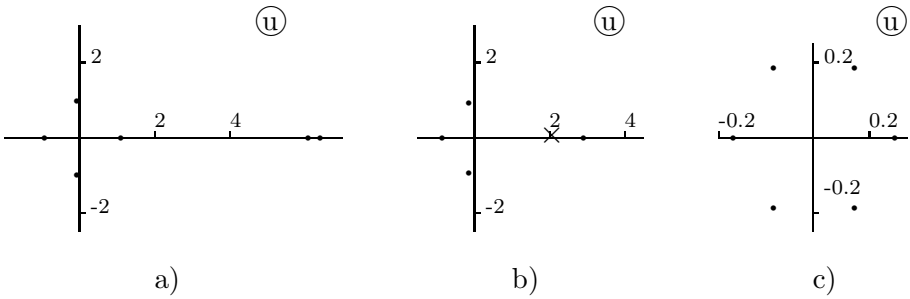
\begin{figure}

 \begin{picture}(220,100)
 \put(-20,50){\line(1,0){90}}
 \put(0,25){\line(0,1){55}}
\put(-9.3,50){\circle*{1.5}}
 \put(-.7,40.2){\circle*{1.5}}
 \put(-.7,59.8){\circle*{1.5}}
 \put(11,50){\circle*{1.5}} 
 \put(60.8,50){\circle*{1.5}}
 \put(64,50){\circle*{1.5}}
\put(20,50){\line(0,1){1.5}}
 \put(20,53){{\scriptsize 2}} 
 \put(40,50){\line(0,1){1.5}}
 \put(40,53){{\scriptsize 4}} 
 \put(0,70){\line(1,0){1.5}}
 \put(3,70){{\scriptsize 2}} 
 \put(0,30){\line(1,0){1.5}}
 \put(3,30){{\scriptsize -2}}
\put(20,10){{\small a)}}
 \put(48.6,78.6){{\small u}} 
 \put(51,81){\circle{8}}

 
 \put(90,50){\line(1,0){60}}
 \put(105,25){\line(0,1){55}} 

 \put(96.4,50){\circle*{1.5}}
 \put(103.5,40.7){\circle*{1.5}}
 \put(103.5,59.3){\circle*{1.5}}
 \put(122.3,48.5){$\times$} 
 \put(134,50){\circle*{1.5}}
 
 \put(125,50){\line(0,1){1.5}}
 \put(125,53){{\scriptsize 2}} 
 \put(145,50){\line(0,1){1.5}}
 \put(145,53){{\scriptsize 4}} 
 \put(105,70){\line(1,0){1.5}}
 \put(108,70){{\scriptsize 2}} 
\put(105,30){\line(1,0){1.5}}
 \put(108,30){{\scriptsize -2}}
 
 \put(115,10){{\small b)}}
 \put(138.6,78.6){{\small u}} 
 \put(141,81){\circle{8}}

 \put(170,50){\line(1,0){50}}
 \put(195,25){\line(0,1){50}}
 
\put(173.8,50){\circle*{1.5}}
 \put(184.6,31.4){\circle*{1.5}}
 \put(184.6,68.6){\circle*{1.5}}
 \put(206.1,31.4){\circle*{1.5}} 
 \put(206.1,68.6){\circle*{1.5}}
 \put(216.8,50){\circle*{1.5}}
 
\put(210,50){\line(0,1){1.5}}
 \put(210,53){{\scriptsize 0.2}} 
 \put(170,50){\line(0,1){1.5}}
 \put(170,53){{\scriptsize -0.2}} 
 \put(195,70){\line(1,0){1.5}}
 \put(198,70){{\scriptsize 0.2}} 
 \put(195,30){\line(1,0){1.5}}
 \put(198,35){{\scriptsize -0.2}}
 
\put(195,10){{\small c)}}
 \put(213.6,78.6){{\small u}}  
 \put(216,81){\circle{8}}
 \end{picture}

 \caption{ Solutions of the equation $u^4(1 - \kappa u)^2 = 1$
 for {\it a)} $\kappa = .16$, {\it b)} $\kappa = .385$, 
 and {\it c)} $\kappa = 100$. The cross marks out the double degenerate
root at $\kappa = .385$.}

\label{6-roots}
 \end{figure} 

\begin{figure}

 \begin{picture}(220,100)

 \put(-10,50){\line(1,0){60}}
 \put(15,25){\line(0,1){50}}
 
\put(-5.7,50){\circle*{1.5}}
 \put(15.8,70.1){\circle*{1.5}}
 \put(15.8,29.9){\circle*{1.5}}
 \put(34.1,50){\circle*{1.5}} 
 \put(47.8,50){\circle*{1.5}}
 \put(44.7,50){\circle*{1.5}}

\put(35,50){\line(0,1){1.5}}
 \put(35,53){{\scriptsize 2}} 
 \put(-5,50){\line(0,1){1.5}}
 \put(-5,53){{\scriptsize -2}} 
 \put(15,70){\line(1,0){1.5}}
 \put(18,70){{\scriptsize 2}} 
 \put(15,30){\line(1,0){1.5}}
 \put(18,30){{\scriptsize -2}}
 
\put(15,0){{\small a) $\kappa$ = .16}}
 \put(33.4,78.6){$\cal W$} 
 \put(38,82){\circle{12}}


 \put(80,50){\line(1,0){50}}
 \put(105,25){\line(0,1){50}} 

\put(93.3,50){\circle*{1.5}}
 \put(106.8,70.3){\circle*{1.5}}
 \put(106.8,29.7){\circle*{1.5}}
 \put(114.4,50){\circle*{1.5}} 
 \put(120.8,50){$\times$}

 \put(125,50){\line(0,1){1.5}}
 \put(125,45){{\scriptsize 2}} 
 \put(85,50){\line(0,1){1.5}}
 \put(85,53){{\scriptsize -2}} 
 \put(105,70){\line(1,0){1.5}}
 \put(108,70){{\scriptsize 2}} 
\put(105,30){\line(1,0){1.5}}
 \put(108,30){{\scriptsize -2}}
 
 \put(105,0){{\small b) $\kappa$ = .385}}
 \put(123.4,78.6){$\cal W$} 
 \put(128,82){\circle{12}}

 \put(150,50){\line(1,0){70}}
 \put(190,15){\line(0,1){70}}

 \put(154.1,50){\circle*{1.5}}
 \put(208.9,18.9){\circle*{1.5}}
 \put(208.9,81.1){\circle*{1.5}}
 \put(209.9,20.8){\circle*{1.5}} 
 \put(209.9,79.2){\circle*{1.5}}
 \put(156.3,50){\circle*{1.5}}

\put(210,50){\line(0,1){1.5}}
 \put(210,53){{\scriptsize 4}} 
 \put(170,50){\line(0,1){1.5}}
 \put(170,53){{\scriptsize -4}} 
 \put(190,70){\line(1,0){1.5}}
 \put(193,70){{\scriptsize 4}} 
 \put(190,30){\line(1,0){1.5}}
 \put(193,30){{\scriptsize -4}}
 
\put(190,0){{\small c)$\kappa$ = 100}}
 \put(228.4,78.6){$\cal W$}  
 \put(233,82){\circle{12}}
 \end{picture}

 \caption{Values of the superpotential ${\cal W}_{\rm vac}$}
\label{suppot-6vac}

 \end{figure}

 The values of the superpotential ${\cal W}(u_{\rm vac}, b_{\rm vac})$
 for the same values of $\kappa$ in the units of $m^{3/4}$ 
 are shown in Fig. \ref{suppot-6vac}. We see that, for large
 $\kappa$, 6 vacua are clustered in 3 pairs (each pair corresponding to
 the opposite vertices of the hexagon in Fig. \ref{6-roots}c). The values of the
  superpotential for  two vacua of the same pair are close and hence the
  energy barrier between them is small. 

 Indeed, for very small masses and
  fixed $h$, one can neglect the mass term in the superpotential,
  in which case $|u_{\rm vac}|^3 \ll |b_{\rm vac}|^2$  and the vacuum values of $b$ are  determined from the equation $hb^3 = -1$.
 In this limit, we see only three vacua,  
 while their  splitting 
  inside a pair is the effect of higher order in $1/\kappa$.

\vspace{2mm}

Note finally that this phenomenon, the emergence of extra vacuum states when Yukawa couplings are switched on, is not specific for the $G_2$ theory. It is common for all models where such Yukawa couplings are admissible. Probably the simplest example of such a model is the theory based on the gauge group $SU(2)$ and involving a pair of chiral multiplets $S_j$ and $T^j = \varepsilon^{jk} T_k$ in the fundamental doublet representation [in $SU(2)$, the fundamental representation is pseudoreal, being equivalent to the antifundamental one] and an adjoint multiplet $\Phi_j^k$ with $\Phi_j^j = 0$. The theory including the Yukawa term,
\be
{\cal L} \ =\ h \int d^2\theta\, S^j \Phi_j^k T_k \ + \ {\rm c.c.},
\ee
has {\it three} rather than two vacuum states \cite{Gorsky}, with the third state going to infinity in the limit $h \to 0$.

\section{Chiral theories}
\setcounter{equation}0

In this section, we will discuss supersymmetric gauge theories with chiral matter content where the number of left-handed and right-handed fermions is not equal. 
One of such  (nonsupersymmetric) gauge theories is well known. It is the theory of electroweak interactions, which describes nature. But we are not studying in this review the real world, but rather a multitude of imaginary supersymmetric worlds...

\subsection{Chiral SQED}

The simplest version of a chiral gauge supersymmetric theory is chiral SQED.  The Lagrangian \p{L-SQED-comp} of the ordinary SQED includes a left-handed and a right-handed Weyl fermion fields or else a couple of left-handed fields with opposite electric charges. 
A chiral QED includes  sets of left-handed and right-handed fields, which are inherently different. It is more convenient to represent them by a set of left-handed fields with different electric charges $q_f$. 

There is one important restriction, however: the sum of the {\it cubes} of the charges  should vanish. The matter is that each charged Weyl fermion generates a triple photon interaction $M_{AAA}$ by the Feynman graph depicted in Fig. \ref{triangle}. This interaction breaks gauge invariance of the theory and its renormalizability.   Clearly, the contribution of each fermion to $M_{AAA}$ is proportional to $q_f^3$. The sum of all such contributions 
is proportional to 
$\sum_f q_f^3$, and it  should vanish. Otherwise, the theory would be {\it anomalous} (gauge symmetry of classical action would be broken by quantum effects) and not kosher.\footnote{We are talking now about an {\it internal} anomaly when the vertices of the triangle are coupled to the dynamical vector fields. This should be distinguished from the {\it external} anomalies discussed in Secs. 3,5, when one of the vertices corresponded to external axial current.  Such an external anomaly may break a certain symmetry of the theory, but it does not ruin it, keeping its gauge invariance.}

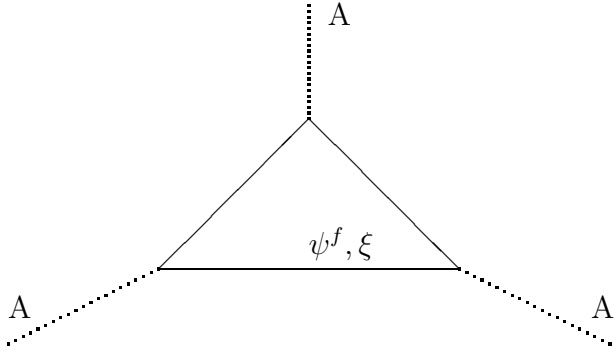
\begin{figure}

\begin{picture}(220,100)
 \put(50,30){\line(1,0){80}}
\put(50,30){\line(1,1){40}}
\put(90,70){\line(1,-1){40}}

\thicklines
\qbezier[20](10,10)(30,20)(50,30)
\qbezier[20](130,30)(150,20)(170,10)
\qbezier[20](90,70)(90,85)(90,100)
\put(10,17){A} \put(165,17){A}  \put(95,95){A}
\put(90,34){$\psi^f, \xi$}
 \end{picture}

 \caption{Anomalous triangles in chiral SQED}
\label{triangle}

 \end{figure}

The simplest version of chiral SQED includes 8 chiral superfields $S^f$ carrying a positive charge $e$ and a superfield $T$ carrying a negative charge $-2e$. The Lagrangian of the model reads
\begin{empheq}[box=\fcolorbox{black}{white}]{align}
\lb{L-SQED-chiral}
{\cal L}\ = \frac 14 \int \!\! d^4\theta (\overline S^f e^{eV} S^f + \overline T e^{-2eV} T) + 
\left( \frac 18 \int  \!\! d^2\theta \,W^\alpha W_\alpha   + {\rm c.c.} \right)
 \end{empheq}
We keep the coupling constant $e^2$ small.

The corresponding component Lagrangian includes among other things a term [cf. the last term in Eq.~\p{L-SQED-comp}]
$ -  e^2 (s^{*f} s^f - 2 t^* t)^2/2\,, $
where $s^f$ and $t$ are the scalar components of the multiplets $S^f$ and $T$.  This gives a classical vacuum valley of the type discussed in the previous section. Supersymmetry protects it  --- the corrections to the potential in all the orders of perturbation theory vanish. As a result, the spectrum of the theory does not have a gap: it is continuous starting right from zero.

In massless nonchiral theories, we had the same situation, but this degeneracy could be lifted there if the matter fields were given a mass. In  chiral theory, we cannot write a mass term, but we can add to the Lagrangian \p{L-SQED-chiral} a Yukawa term like 
\be
\lb{Yukawa-chiral}
{\cal L}_{\rm Yukawa} \
 = \ \frac h2 \int \!\!d^2\theta \, (S^f S^f T) \ + \ {\rm c.c.} 
 \ee
This gives a potential 
 \be
\lb{quart-pot}
V \ = \ h^2( 4|s^f t|^2 + |s^f s^f|^2 )\,,
 \ee
which plays the same role as the mass term in nonchiral theories: it locks the valley and brings about a gap in the spectrum.  Unfortunately, we cannot send in this case the interaction constant $h$ to infinity, as we did with the mass parameter for the nonchiral SQED. In fact, we need to keep both $h$ and $e^2$ small to avoid running into the Landau pole (because of which the chiral SQED, as well as the ordinary SQED and the ordinary QED are not, strictly speaking, well-defined quantum theories).  Also, we do not have here a nonperturbative instanton superpotential, which pushed for small masses the system far away along the valley in the case of nonchiral SQCD. 

All this makes the analysis of the vacuum dynamics of chiral theories essentially more difficult than for nonchiral ones. Still, we will try to perform it. As we will see, one can choose the value of the Yukawa coupling in such a way that this dynamics is mainly determined by the gauge sector and the matter degrees of freedom can 
(with a proper care) be integrated over.

Following \cite{chiral-SQED}, we proceed  in the same way as we did for the ordinary SQED in   Sec. 2. We put the system in a finite box of size $L$, impose  periodic boundary conditions on all the fields and expand the latter in Fourier series:
\be
\lb{Fourier}
A_j(\vecg{x}) &=& \sum_{\vecind{n}} A_j^{(\vecind{n})} e^{2\pi i \vecind{n}\cdot \vecind{x}/L}, \qquad 
\lambda_\alpha(\vecg{x}) = \sum_{\vecind{n}} \lambda_\alpha^{(\vecind{n})} e^{2\pi i \vecind{n}\cdot \vecind{x}/L}, \nn
  s^f(\vecg{x}) &=& 
\sum_{\vecind{n}} s^{f(\vecind{n})} e^{2\pi i \vecind{n}\cdot \vecind{x}/L}, \qquad 
\psi_\alpha^f(\vecg{x}) = \sum_{\vecind{n}} \psi_\alpha^{f(\vecind{n})} e^{2\pi i \vecind{n}\cdot \vecind{x}/L}, \nn
t(\vecg{x}) &=& \sum_{\vecind{n}} t^{(\vecind{n})} e^{2\pi i \vecind{n}\cdot \vecind{x}/L}, \qquad \quad 
\xi_\alpha(\vecg{x}) = \sum_{\vecind{n}} \xi_\alpha^{(\vecind{n})} e^{2\pi i \vecind{n}\cdot \vecind{x}/L}\,.
 \ee
The slow variables $A_j^{(\vecind{0})} \equiv c_j$ and their superpartners  $\lambda_\alpha^{(\vecind{0})} \equiv \eta_\alpha$ play a special role. Thinking in Hamiltonian terms, we 
stay in the universe where the wave functions are invariant under the large gauge transformations [cf. Eq.~\p{large-gauge-QED}],
\be
&&A_j \to A_j + \frac{2\pi n_j}{eL}, \nn
 &&(s^f,\psi^f)  \to (s^f,\psi^f) e^{-2\pi i \vecind{n} \cdot \vecind{x}/L}, \quad (t, \xi) \to (t,\xi) e^{4\pi i \vecind{n} \cdot \vecind{x}/L}\,.
 \ee
The zero mode $c_j$ of the gauge potential lies in a dual torus, $c_j \in [0, \frac {2\pi}{eL})$, and a characteristic energy of this degree of freedom is $\sim e^2/L$. Most other degrees of freedom have a  characteristic energy  $\sim 1/L \gg e^2/L$ and are fast, to be integrated out. We will find then the effective supercharges by averaging the full supercharges over the  ground eigenstate of the fast Hamiltonian:\footnote{In contrast to what we had for ordinary SQED, we cannot now, even at the leading order, just to cross out the terms including fast variables in the full Hamiltonian. Well, we can do so for the terms including higher harmonics of $A_j$ and $\lambda_\alpha$, but not for the matter fields.} 
 \be
\hat Q_\alpha^{\rm eff} \ =\ \langle \hat Q_\alpha^{\rm full} \rangle_{\rm fast\ vacuum}, \qquad \hat{\bar Q}^{\alpha\, {\rm eff}} \ =\ \langle \hat{\bar Q}^{\alpha\, {\rm full}} \rangle_{\rm fast\ vacuum}\,.
 \ee
The effective Hamiltonian will then be restored as the anticommutator $\hat H^{\rm eff} = \{\hat Q_\alpha^{\rm eff},   \hat{\bar Q}^{\alpha\, {\rm eff}} \}/2$.

However, in contrast to  ordinary SQED, the variables $c_j$ and their superpartners $\eta_\alpha$ are not the {\it only} slow variables that determine the low-energy dynamics of  chiral SQED. There are also the variables describing  the motion along the valley 
\be
\lb{valley-SQED}
\sum_{f=1}^8 s^{*f} s^f - 2 t^* t \ =\ 0\,.
 \ee
When the Yukawa coupling $h$ vanishes, the spectrum associated with this motion is continuous. 

If $h$ is nonzero but small, the spectrum is discrete, but the characteristic  excitation energies and level spacings are also small. The scalar valley dynamics is described in this case by a WZ model with the quartic potential \p{quart-pot}. By virial theorem, the characteristic kinetic and potential energies should be of the same order, which means
$$ \frac 1 {L^3|s|^2} \ \sim \   \frac 1 {L^3|t|^2}  \ \sim \ h^2 L^3 |s|^2 |t|^2 \ \sim \ h^2 L^3  |s|^4 \,,$$
giving $|s|^2 \sim |t|^2 \sim h^{-2/3}/L^2$ and 
 \be
E_{\rm char}^{\rm WZ} \ \sim \ \frac {h^{2/3}}L\,.
 \ee
 If $h \ll e^3$, this is much less than the characteristic energies in the sector of the theory associated with the slow gauge dynamics. We will be interested in the case where the low-energy dynamics is mainly determined by the gauge sector. To this end, we must keep $h \gg e^3$. 
Still, as we will see soon, $h$ cannot  be too large. The calculation below is valid in the region 
 \be
\lb{range-h}
e^3 \ \ll \ h \ \ll \ e\,.
 \ee 

All the details of this calculation are spelled out in Ref.~\cite{chiral-SQED}, but let us give here some hints explaining how it works. 
What we should do is to take the fast Hamiltonian in the leading quadratic approximation, find its spectrum, treating the slow variables $c_j, \eta_\alpha$ as parameters, and determine the effective supercharges by calculating the averages of the full supercharges over the vacuum state of $\hat H^{\rm fast}$.

Take  one of the multiplets of unit charge, set $L=1$, concentrate on the values of $\vecg{c}$ close to the origin, neglect the Yukawa terms, and keep, to begin with, only the zero Fourier modes of the matter fields.
We obtain a SQM model \cite{Palumbo} with the supercharges\footnote{We slightly modified the conventions of Ref. \cite{chiral-SQED}. Note that $\bar X_\alpha = - (X^\alpha)^\dagger$, which follows  from $\bar X^\alpha =  (X_\alpha)^\dagger$  and from \p{low-raise}.}
\be
\lb{Q-chiral-SQM}
\hat Q_\alpha &=&  \left[\hat P_k (\sigma_k)_\alpha{}^\gamma + i e s^*\!s \,\delta_\alpha^\gamma  \right] \eta_\gamma + 
\sqrt{2}\left[ -i \hat \pi^\dagger \delta_\alpha^\gamma + e s c_k (\sigma_k)_\alpha{}^\gamma\right] \bar \psi_\gamma, \nn
\hat {\bar Q}^\beta &=&  {\bar \eta^\delta} \left[\hat P_k (\sigma_k)_\delta{}^\beta - i e s^*\!s \,\delta_\delta^\beta  \right]  -  
\sqrt{2} \left[ i \hat {\pi} \delta_\delta^\beta + e s^* c_k (\sigma_k)_\delta{}^\beta\right] \psi^\delta 
   \ee
with $\hat \pi = -i \pd/\pd s, \ \hat \pi^\dagger = -i \pd/\pd s^*$. The anticommutator of the supercharges \p{Q-chiral-SQM}
has the form
 \be
\lb{alg-gauge}
\{\hat Q_\alpha, \hat {\bar Q}^\beta \} \ =\ 2 \delta_\alpha{}^\beta \hat H - 2e c_k (\sigma_k)_\alpha{}^\beta \hat G\,,
 \ee
where
 \be
\lb{Ham-full}
\hat H \ &=&\  - \frac {\pd^2}{\pd s \pd s^*} + e^2 \vecg{c}^2 s^*\!s  + ec_k \hat{\bar \psi} \sigma_k   \psi  \nn
&-& \frac 12 \triangle_{\vecind{c}} + \frac {e^2}2  (s^* s)^2 + ie\sqrt{2} [(\hat{\bar \psi} \hat{\bar \eta}) s - (\psi \eta) s^*] 
\ee
is the Hamiltonian and the presence of the second term in the R.H.S. of Eq. \p{alg-gauge} with
\be
\lb{G-ab}
\hat G \ =\ s^* \frac \pd {\pd s^*}  - s \frac \pd {\pd s} + \psi_\alpha \hat {\bar \psi}^\alpha -1 
 \ee
 tells us that we are dealing with a {\it gauge} SQM system. $\hat G$ is the generator of the Abelian gauge transformations. It commutes with the Hamiltonian. The physical states in the spectrum are the solutions of the Schr\"odinger equation $\hat H \Psi = E \Psi$ that satisfy the subsidiary condition $\hat G \Psi = 0$.

The Hamiltonian \p{Ham-full}  matches the 
Lagrangian \p{L-SQED-comp} [a part of it not involving the mass terms and the fields $t(x)$ and $\xi(x)$] deprived of the higher Fourier modes contribution. 
 If we add the contribution of seven other matter fields $(s^f, \psi^f)$ and of $(t,\xi)$, take into account also higher Fourier harmonics,  and evaluate the anticommutator $\{\hat Q_\alpha,  \hat {\bar Q}^\alpha \}$, the full Hamiltonian of SQED in a finite box is reproduced.

The second terms in Eqs.~\p{Q-chiral-SQM} are {\it fast} supercharges and the first line in \p{Ham-full} is the fast Hamiltonian describing  a variant of supersymmetric oscillator. \index{oscillator!supersymmetric} The ground state of $\hat H^{\rm fast}$, which the fast supercharges annihilate, has the unit fermion charge:
\be
\lb{fast-ground}
\Psi_0(s, s^*; \psi^\alpha) \ =\ w_\alpha(\vecg{c}) \psi^\alpha \exp\{-e|\vecg{c}|s s^*\}
\ee
with 
\be
\lb{w}
w_\alpha \ =\ \left( \begin{array}{c} - \sin \frac \theta 2 e^{-i\phi} \\ \cos \frac \theta 2  \end{array} \right).
\ee
Here $\theta$ and $\phi$ are the polar and azimuthal angles of the vector $\vecg{c}$. The spinor \p{w} satisfies 
$$\vecg{n} \vecg{\sigma} w  \ =\ -w$$ 
and coincides with the wave function of a spin $\frac 12$ particle with the spin directed along the negative $z$ axis.
The state \p{fast-ground} is gauge-invariant: $\hat G \Psi = 0$. 

Our next task is to find the effective supercharges by averaging the slow supercharges [the first terms in \p{Q-chiral-SQM}] over the fast vacuum \p{fast-ground}. It is easy to derive 
\be
\lb{s-sstar}
\langle s^*\!s \rangle_{\rm fast\ vacuum} \ =\ \frac 1{2e|\vecg{c}|} \,.
\ee
But one has also take into account the factor $w_\alpha \psi^\alpha$ in $\Psi_0(s, s^*; \psi^\alpha)$ with a nontrivial dependence on $\vecg{c}$. As a result,  $\langle \hat P_k \rangle_{\rm fast\ vacuum}$ does not vanish, giving a
``vector potential" ${\cal A}_k(\vecg{c})$ which should be added to the operator $\hat P_k$ in the effective \index{supercharges} supercharge. We derive

\be
\lb{Q-chiral-SQED}
   \hat Q_\alpha^{\rm eff} \ =\  \left[(\sigma_k)_\alpha{}^\beta (\hat P_k + {\cal A}_k) +i  \delta_\alpha^\beta {\cal D}    \right] \eta_\beta, \nn
\hat {\bar Q}^{\alpha \, {\rm eff}} \ =\   \hat{\bar \eta}^\beta \left[(\sigma_k)_\beta{}^\alpha (\hat P_k + {\cal A}_k) - i \delta_\beta^\alpha {\cal D}    \right], 
 \ee
\begin{empheq}[box=\fcolorbox{black}{white}]{align}
\lb{H-chiral}
\hat H^{\rm eff} \ =\ \frac 12 (\hat P_k + {\cal A}_k )^2 + \frac 12 {\cal D}^2 + \vecg{\cal H}\cdot \hat{\bar \eta} \vecg{\sigma} \eta\,,
  \end{empheq}
where  $ \hat P_k = -i\pd/\pd c_k$, ${\cal D} = 1/(2|\vecg{c}|)$ and
\be
\lb{cond-KAH}
\vecg{\cal H} \ =\ \vecg{\nabla} \times \vecg{\cal A} \ =\ \vecg{\nabla} {\cal D} \,. 
\ee
The vector function $\vecg{{\cal A}}(\vecg{c})$ depends on field variables rather than on spatial coordinates and has nothing to do with the  vector 
potentials $\vecg{A}(\vecg{x})$ !

As was also the case for the ordinary SQED [see Eq. \p{kappa-BO}], the effective Hamiltonian \p{H-chiral} has the benign Wilsonian nature as soon as $|\vecg{c}|$ is not {\it too} small and 
\be
\kappa \ =\ \frac 1{e|\vecg{c}|^3} \ \ll \ 1\,.
\ee
But chiral SQED includes also the Yukawa term \p{Yukawa-chiral}, and we want to make sure that the corresponding corrections to the Hamiltonian $\sim 
h^2 (s s^*)^2$ are irrelevant if $\kappa$ is small. We have to compare the characteristic value of these corrections with the gap in the spectrum of the fast Hamiltonian, which is of order $e|\vecg{c}|$. Bearing in mind \p{s-sstar}, we derive
$$ \frac {h^2}{e^2|\vecg{c}|^2} \ \ll \ e|\vecg{c}|$$
or $h^2 \kappa \ll e^2$. We are on the safe side if 
$$ h \ \ll \ e\,,$$
and that is how the upper limit for the allowed range of $h$ in \p{range-h} was derived.

Let us now take into account   all other Fourier harmonics of all the matter  fields $s^f(x), \psi^f(x), t(x)$ and $\xi(x)$ and restore the dependence on $L$. We obtain  
\be
\lb{D-SQED}
{\cal D} \ =\ 4\sum_{\vecind{n}} \frac 1 {\left|\vecg{c} - \frac {2\pi \vecind{n}}{eL} \right|}  - \frac 12 \sum_{\vecind{n}}   \frac 1{\left|\vecg{c} - \frac {\pi \vecind{n}}{eL} \right|  } .       
 \ee

The Hamiltonian \p{H-chiral} with a generic function ${\cal D}(\vecg{c})$ coincides with the Hamiltonian written in the paper \cite{Ritten}. If ${\cal D} = -1/(2 |\vecg{c}|)$, the magnetic field
  \be
\vecg{\cal H} \ =\  \frac {\vecg{c}}{2|\vecg{c}|^3}
 \ee
is the field of a magnetic monopole with the minimal magnetic charge\footnote{Note, however, the presence of the extra scalar potential $1/(8\vecg{c}^2)$, which makes the problem supersymmetric.} $g = 1/2$. This monopole singularity in the configuration space is known in other branches of physics by the name {\it Pancharatnam-Berry phase} \cite{Berry}.

In our case, we are dealing with a  {\it cubic lattice} of monopoles. In fact, there are two superimposed cubic lattices: a lattice of monopoles of charge $1/2$ with edge length
$\pi/eL$ and the lattice of monopoles of charge $-4$ located at the nodes of a twice less dense lattice with edge length $2\pi/eL$. The net magnetic charge density in field space  is zero --- this is a corollary of the condition 
 \be
 \sum_f q_f^3 \ =\ 0\,, 
 \ee
which our chiral model enjoys.

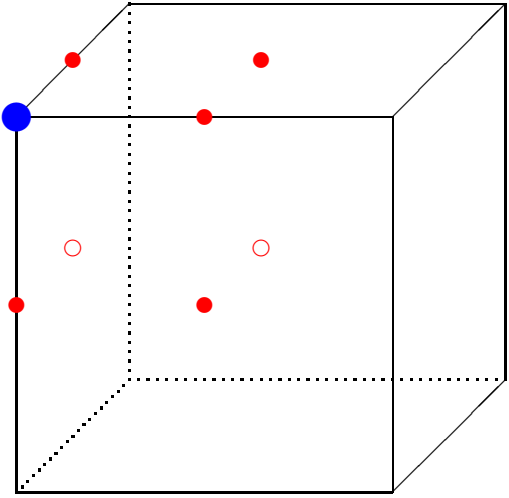
\begin{figure}

\begin{picture}(150,150)
 \put(0,0){\line(1,0){100}}
\put(0,0){\line(0,1){100}}
\put(0,100){\line(1,0){100}}
\put(100,0){\line(0,1){100}}
\put(0,100){\line(1,1){30}}
\put(100,100){\line(1,1){30}}
\put(30,130){\line(1,0){100}}
\put(100,0){\line(1,1){30}}
\put(130,30){\line(0,1){100}}

\put(0,100){\blue \circle*{8}}
\put(0,50){\red \circle*{4}}
\put(50,100){\red \circle*{4}}
\put(50,50){\red \circle*{4}}
\put(15,115){\red \circle*{4}}
\put(65,65){\red \circle{4}}
\put(65,115){\red \circle*{4}}
\put(15,65){\red \circle{4}}

\thicklines
\qbezier[20](0,0)(15,15)(30,30)
\qbezier[40](30,30)(80,30)(130,30)
\qbezier[40](30,30)(30,80)(30,130)

 \end{picture}

 \caption{A unit cubic cell of the monopole crystal in field space. The edge of the cube is $2\pi/(eL)$. The blue blob marks the site with the monopole of charge $-7/2$. 
The red blobs mark the sites with the monopoles of charge $1/2$.}
\label{crystal}

 \end{figure} 

\vspace{1mm}

Now let us calculate the Witten index in this system. There are two ways to do so.

\vspace{1mm}

{\bf 1.} One can use the Cecotti-Girardello method  and evaluate the integral \p{IW-int-final} with the classical counterpart of the Hamiltonian \p{H-chiral}. 
Integration over  fermion variables reduces to the evaluation of the determinant
\be
\det \left[\beta (\sigma_j)_\alpha{}^\beta {\cal H}_j   \right] \ =\ - \beta^2 \vecg{\cal H}^2  \ =\ -\beta^2 (\vecg{\nabla} {\cal D})^2\,.
\ee
When we also integrate over momenta, we obtain
\be
\lb{intind-chiral-QED}
I_W \ =\ \lim_{\beta \to 0} \left[ - \left( \frac \beta {8\pi^3} \right)^{1/2} \int d \vecg{c} \, (\vecg{\nabla} {\cal D})^2 e^{-\beta {\cal D}^2/2}    \right],
\ee
where the integral is evaluated over the unit cell of our ``crystal" depicted in Fig. \ref{crystal}. For small $\beta$, the integral is saturated by the regions adjacent to the singularities of ${\cal D}$ --- the sites where the monopoles sit. The contribution of each such region is equal to $-|g|$, where $g$ is the charge of the monopole. In our case, the unit cell contains a monopole of charge $-4+1/2 = -7/2$ and seven monopoles of charge $1/2$. The final result is
 \begin{empheq}[box=\fcolorbox{black}{white}]{align}
\lb{ind-chiral-SQED}
I_W \ =\ - \frac 72 - \frac 72 \ =\ -7\,.
 \end{empheq}
This calculation strongly suggests the existence of seven fermion vacuum states in this model, but it does not {\it prove} it. The matter is that the Cecotti-Girardello method that reduces the functional integral for the index to an ordinary one has its applicability limits. First of all, it does not work for  systems with continuous spectrum where the notion of index is ill-defined.  In our case, the motion is finite and the spectrum is discrete,  but we a facing another \lb{IW-sing}
problem: the Hamiltonian \p{H-chiral} has singularities $\sim 1/|\vecg{c} - \vecg{c}_{\rm node}|^2$  at the nodes of the lattice. The integral for the index is saturated by the regions of $\vecg{c}$ at the vicinity of these singularities when   $|\vecg{c} - \vecg{c}_{\rm node}| \sim \sqrt{\beta}$. But if one looks at the original functional integral \p{path-Grassmann}, one can observe that the characteristic values of the higher Fourier harmonics $\vecg{c}^{(n)}$ are also of order $\sqrt{\beta}$, and one cannot
be sure that the integral \p{intind-chiral-QED} not involving higher harmonics gives a correct value for the index.  

Indeed, we will see in the next subsection that it {\it does} not [the value of the integral \p{IW-int-final}  is not integer] for non-Abelian chiral theories,  For SQED it {\it does}, however. We know this  
  because we can confirm the result \p{ind-chiral-SQED} by a direct analysis of the Schr\"odinger equation.

\vspace{1mm}

 {\bf 2.} To do so, we deform the Hamiltonian \p{H-chiral} by adding a constant $C$ to $D$. The condition \p{cond-KAH} is still fulfilled, which
assures sypersymmetry. The index does not change under such deformation. 

 Let $C$ be large and positive, $C \gg 1/(eL)$.  The wave functions of the low-lying states are then localized in the region of small 
${\cal D}' = {\cal D}+C$, i.e. near the ``red" sites with positively charged monopoles. If $C$ is large, the sites do not ``talk" to each other, and we can pick out one of them, say, the site in the center of the cube with $\vecg{r} = \vecg{c} - (\pi/eL)(1,1,1) = \vecg{0}$.  Then the problem reduces to the solution of the Schr\"odinger equation with the Hamiltonian
\be
\lb{H-chiral-C}
\hat H  \ =\ \frac 12 (\hat {\vecg{P}} + \vecg{\cal A} )^2 + \frac 12  \left(C - \frac 1{2r} \right) ^2 + \frac {\vecg{r}}{2 r^3} \cdot\hat{\bar \eta} \vecg{\sigma} \eta\,.
  \ee
 Zero-energy states are possible only in the $F=1$ sector, where the problem reduces to the motion of a spin $\frac 12$ particle with gyromagnetic ration 2 in the field of the monopole and an additional spherically symmetric potential. The angular variables can then be separated (as is spelled out in Appendix B), and the lowest harmonic has total angular momentum $j = |g| - 1/2 = 0$. The equation for the corresponding radial function $\chi(r)$ reads
\be
\lb{Schr-rad}
- \frac 1{2r} \frac {\pd^2}{\pd r^2} (r\chi ) + \frac 12 \left(C^2 - \frac Cr - \frac 1{4r^2} \right) \chi \ =\ E\chi \,.
\ee
The  zero-energy solution of \p{Schr-rad} is readily found:
 \be
\chi(r)  \ \sim  \frac 1 {\sqrt{r}} e^{-C r} \,.
 \ee
In a particular gauge, where the ``vector potentials" are chosen in the form
\be
\lb{vecpot}
&&{\cal A}_x \ =\ - \frac {y}{2r (r + z)} \ =\  -\frac {\sin \theta \sin \phi}{2r(1 + \cos \theta)}, \nn
&&{\cal A}_y \ =\  \frac {x}{2r (r + z)} \ =\  \frac {\sin \theta \cos \phi}{2r(1 + \cos \theta)}, \nn
&&{\cal A}_z \ =\ 0\,,
 \ee
the full vacuum wave function has the form
 \be
\Psi(r,\theta,\phi;\, \eta_\alpha) \ =\ \frac 1 {\sqrt{r}} e^{-C r} \left( \eta_1 \cos \frac \theta 2   + \eta_2 \sin \frac \theta 2 e^{-i\phi} \right)\,.
 \ee
The singularity of this function at $r=0$ is benign --- the integral $\int d \vecg{r} |\Psi|^2 $ converges  quite well.\footnote{
Note that a Hamiltonian that contains an attractive singular potential $V(r) = - \gamma/r^2$ with $\gamma > 1/8$ exhibits a {\it falling on the center} phenomenon such that the ground state is absent and, which is worse, unitarity is lost \cite{falling}. Such is the Hamiltonian describing the interaction of the physical electron with the monopole \p{Malkus}. But in our case the repuslive singular potential $D^2/2$ is also present so that $\gamma$ takes the border value $\gamma = 1/8$. Then there is no collapse and the quantum problem is benign.}
According to general theorems, $\Psi$ is annihilated by the action of the supercharges \p{Q-chiral-SQED}. A dedicated reader may check it in explicit calculation.

The unit cell in Fig. \ref{crystal} contains seven sites with ``red" monopoles of charge $g = 1/2$, so that the spectrum contains seven fermionic zero-energy states. 

If the constant $C$ is large and negative, the wave function settles on the blue site with $g = -7/2$. Then the ground state has the total angular momentum $j = 3$, which produces a sevenfold degeneracy. The equation for the radial wave function now reads
\be
\lb{Schr-rad-7}
- \frac 1{2r} \frac {\pd^2}{\pd r^2} (r\chi ) + \frac 12 \left(C^2 + \frac {7C}r + \frac {35}{4r^2} \right) \chi \ =\ E\chi 
\ee
with a zero-energy solution
\be
r\chi(r) \ =\ r^{7/2} e^{Cr}\,.
\ee
The result $I_W = -7$ is reproduced also in this way.

We calculated the index of the effective Hamiltonian \p{H-chiral}. This Hamiltonian has a Wilsonian nature if the Yukawa constant lies in the range 
\p{range-h} and if the conditions 
\be
\kappa_{\vecind{n}} \ =\ \frac 1{e|L\vecg{c} - \pi \vecg{n}/e|^3} \ \ll \ 1\,.
\ee
are all fulfilled so that we are not too close to the corners of the dual torus where the monopoles sit and where the BP corrections are large.
Can we be sure that our result $I_W = -7$ is valid also for the original field theory 
\p{L-SQED-chiral} ?

Well, I can {\it swear} that it is, but I would not {\it bet} my head on that. The reason by which I still hesitate to do so is the limited range of the dynamic variables where the effective Hamiltonian \p{H-chiral} is Wilsonian. On the other hand, the original Witten's calculation of the index in SYM theories was also based on the effective Hamiltonian, which was only valid  outside the corners of the dual torus. Hypothetically, these corners could bring about surprises for SYM theories, and we cannot exclude that they do so for chiral SQED. Of course, the Hamiltonian \p{Heff-SYM} is much simpler than \p{H-chiral} and it has no singularities.
 In addition, the result $I_W = h^\vee$ was independently confirmed by other methods, as was explained in   Sec.~5.

For the chiral SQED, the Hamiltonian is more complicated (the necessity to introduce a Yukawa coupling being an extra complication), it has singularities at the corners, and we do not have an independent confirmation of the result \p{ind-chiral-SQED}.

That is where we stand today.   

 \subsection{A chiral $SU(3)$  model}

The simplest such model includes besides the vector multiplet  seven matter chiral multiplets in the fundamental group representation $\bf 3$ and a multiplet in the representation $\bar {\bf 6}$. 
In terms of superfields, the Lagrangian of the model is
\begin{empheq}[box=\fcolorbox{black}{white}]{align}
{\cal L} = \left( \frac 14 {\rm Tr} \int d^2\theta\, \hat W^\alpha \hat W_\alpha  + {\rm c.c.} \right)
 + \frac  14 \int d^4\theta \left(\sum_{f=1}^7 \overline S^{fj} e^{g\hat V} S^f_j + 
\Phi^{jk} e^{-g\hat V  } \overline \Phi_{jk} \right) 
 \end{empheq}
with $j,k = 1,2,3$ and $\Phi^{jk} = \Phi^{kj}$.

This theory is anomaly-free. Indeed, the contribution of the matter fermions in the non-Abelian anomalous triangle in Fig. \ref{triangle} is equal in this case to 
\be
\langle A^a A^b A^c \rangle \ \propto \ 7 \,{\rm Tr} \{t^{(a} t^b t^{c)} \} - {\rm Tr} \{ T^{(a} T^b T^{c)} \}\,,
 \ee
where  $t^a$ and $T^a$ are the generators in the triplet and sextet representations, correspondingly. The latter have the form
\be
(T^a)_{ij}^{kl} \ =\ \frac 12 \left[ (t^a)_i^k \delta_j^l + (t^a)_i^l \delta_j^k + (t^a)_j^k \delta_i^l + (t^a)_j^l \delta_i^k \right]\,,
\ee
where the factor $1/2$ stems from the requirement that $T^a$ satisfy the same commutation relations, $[T^a, T^b] = i f^{abc} T^c$, as $t^a$.

For any $SU(N)$ group, one can derive
\be
{\rm Tr} \{ T^{(a} T^b T^{c)} \} \ = \ (N+4) \, {\rm Tr} \{t^{(a} t^b t^{c)} \}\,.
\ee
In our case, $N+4 = 7$ and the net contribution to the anomaly vanishes.

The component Lagrangian of the model includes a quartic scalar potential, $V(s, \phi) \ = \ \frac {g^2}2 \left(s^{*f}t^a s^f - \phi T^a \phi^* \right)^2$, 
which vanishes on the valley 
 \be
\lb{D2-SU3}
 s^{*f}t^a s^f - \phi T^a \phi^* \ = \ 0\,.
 \ee
The motion along this valley brings about a continuous spectrum  in the same way as  for chiral SQED. We can also lock this valley by including in the Lagrangian the Yukawa term,\footnote{When $h=0$, the theory is asymptotically free: the effective coupling constant $g(\mu)$ runs to zero in the limit $\mu \to \infty$. But in the theory with a nonzero Yukawa coupling, the latter runs into a pole in this limit, so that this theory has the same status as QED and SQED: one can only treat it in the region of energies where both $g(\mu)$ and $h(\mu)$ are small.}
\be
{\cal L}_{\rm Yukawa}^{SU(3)}  \ = \ \frac h2 \int \!\!d^2\theta \, (S^f_j S^f_k \Phi^{jk}) \ + \ {\rm c.c.} \,.
 \ee
The corresponding scalar potential has the terms $\sim |s\phi|^2$ and $\sim |ss|^2$. 

We  repeat what we did for chiral SQED: put the theory in a  small spatial box of length $L$, expand all the fields in the Fourier series, assume the effective coupling constant $g^2(L)$ to be small and assume that the Yukawa coupling belongs in the range $g^3 \ll h \ll g$. Under these conditions, the low-energy spectrum of the theory is determined by the effective Hamiltonian in the gauge sector --- it depends on zero Fourier modes of $\vecg{A}^a(x)$ belonging to the Cartan subalgebra of $SU(3)$  and their fermion superpartners, while the Yukawa terms and the whole dynamics associated with the valley \p{D2-SU3} can be disregarded. 
In the following, we will set $g = L = 1$. 

 The effective Hamiltonian can be found along the same lines as for the Abelian theory. We have 6 bosonic slow variables that can be presented as
 \be
\lb{Cartan3}
\hat {\vecg{c}} \ =\ \frac 12 \,{\rm diag} (\vecg{a}, \vecg{b} - \vecg{a}, - \vecg{b})
\ee
with\footnote{It is the rhombus in Fig. \ref{Weyl}. For the reasons to become clear at the end of this subsection, we do not care now about the Weyl alcove.} $a_j, b_j \in [0, 4\pi)$.

Take a triplet $s_j$ and concentrate on its zero Fourier mode. In the leading approximation,  the fast Hamiltonian describing its interaction with the background gauge field $\hat {\vecg{c}}$ reads
\be 
\lb{fast-fund}
\hat H^{\rm fast}_{\rm tripl} &=& - \frac {\partial^2}{\partial  s^*_j \partial s_j }  + s^* (\hat {\vecg{c}})^2 s \nn
&=&  - \frac {\partial^2}{\partial  s^*_j \partial s_j } + \frac 14 \left[ \vecg{a}^2 s_1^* s_1 \  +\ (\vecg{a} - \vecg{b})^2 s_2^* s_2 \ + \ \vecg{b}^2 s_3^* s_3\right]
\ee
The fast ground state wave function (its bosonic part) is
\be 
\!\!\!\!\!\!\!\! \Psi(s_j, s^*_j) = \exp\left\{- \frac{|\vecg{a}|}2 |s_1|^2 \right \} \exp\left\{- \frac {|\vecg{a} - \vecg{b}|}2 |s_2|^2  \right\} 
\exp\left\{- \frac {| \vecg{b}|}2 |s_3|^2 \right\} . 
 \ee
It follows:
 \be
 \langle   |s_1|^2  \rangle = \frac 1{|\vecg{a}|}, \qquad  \langle   |s_2|^2  \rangle = \frac 1{|\vecg{a} - \vecg{b}|}, \qquad
 \langle   |s_3|^2 \rangle = \frac 1{|\vecg{b}|}\,.
 \ee
The corresponding contribution to the slow supercharge [cf. the first terms in Eq.~\p{Q-chiral-SQM}] is
\be
\lb{Q-chiral-SU3}
 \hat Q_\alpha \ =\  \eta_\gamma^a \left[  (\sigma_k)_\alpha{}^\gamma \hat P^a_k  + i  \delta_\alpha^\gamma  s^* t^a s\right]\,.
 \ee
Averaging the last term in \p{Q-chiral-SU3} over the fast bosonic vacuum, we extract the following contribution to the effective supercharge:
 \be 
&&\frac {i\eta^3_\alpha}{2} [ \langle   |s_1|^2  \rangle  -   \langle   |s_2|^2  \rangle ] + \frac {i\eta^8_\alpha}{2\sqrt{3}} [ \langle   |s_1|^2  \rangle  +   \langle   |s_2|^2  \rangle   - 2 \langle   |s_3|^2  \rangle ] \ = \\
&&\!\!\!\!\!\!\!\!\!\!\!\!\!\!\!\!\!\frac {i\eta^{(a)}_\alpha}{2} \left[\frac 1{|\vecg{a}|} - \frac 1 {|\vecg{a} - \vecg{b}|}  \right]  + \frac {i\eta^{(b)}_\alpha}{2} \left[\frac 1{|\vecg{a} - \vecg{b}|} - \frac 1 {|\vecg{b}|} 
\right]  \equiv   i \left[\eta^{(a)} {\cal D}_{(a)} + \eta^{(b)} {\cal D}_{(b)}\right]\,, \nonumber
 \ee
where\footnote{That means that 
\be
\hat \eta_\alpha  = \eta_\alpha^a t^a \ =\ \frac 12 {\rm diag} (\eta_\alpha^{(a)}, \eta_\alpha^{(b)} - \eta_\alpha^{(a)}, - \eta_\alpha^{(b)}).
\ee}
\be 
\lb{slow-triplet}
\eta^{(a)}_\alpha \ =\ \eta^3_\alpha + \frac 1{\sqrt{3}} \eta^8_\alpha, \qquad
\eta^{(b)}_\alpha \ =\  \frac 2{\sqrt{3}} \eta^8_\alpha \,.
 \ee 
Here ${\cal D}_{(a)}$ and ${\cal D}_{(b)}$ are the contributions due to the zero Fourier mode of the triplet to the induced $D$-terms. If we add the contributions of  the other triplet Fourier modes with the averages $\langle |s_1^{(\vecind{n})}|^2 \rangle = 1/|\vecg{a} - 4\pi \vecg{n}|$ etc., and multiply the result by 7 (we have 7 triplets), 
we derive in analogy with the first term in Eq. \p{D-SQED} : 
\be
\lb{Dab-triplet}
&&{\cal D}^{(a)} = \frac 72 \sum_{\vecind{n}}\left[\frac 1{|\vecg{a}_{2\vecind{n}}|} - \frac 1{|(\vecg{a} - \vecg{b})_{2\vecind{n}}|}   \right] , \nn
&&{\cal D}^{(b)}  = \frac 72 \sum_{\vecind{n}}\left[ \frac 1{|(\vecg{a} - \vecg{b})_{2\vecind{n}}|}  - \frac 1{|\vecg{b}_{2\vecind{n}}|}\right] \,,
\ee
where $\vecg{a}_{\vecind{m}} = \vecg{a} - 2\pi \vecg{m}, \ \vecg{b}_{\vecind{m}} = \vecg{b} - 2\pi \vecg{m}$.

\vspace{1mm}

Consider now the contribution of the antisextet. 
The part of the fast Hamiltonian including the zero modes of $\phi_{ij}(\vecg{x})$ is
\be
\lb{fast-sex}
\hat H^{\rm fast}_{\rm sext} \ =\  \hat {\bar\pi}^{ij} \hat \pi_{ij} + \phi^{ij} \vecg{\cal X}^2 \phi^*_{ij} ,
 \ee
where 
$$ \vecg{\cal X} \ =\ \vecg{A}^3 T^3 + \vecg{A}^8 T^8 \ =\ \vecg{a} X + (\vecg{b} - \vecg{a}) Y - \vecg{b} Z$$
with
\be
X_{ij}^{kl} \ =\ \frac 14 [ \delta_{i1} \delta^{k1} \delta_j^l + \delta_{i1} \delta^{l1} \delta_j^k +
 \delta_{j1} \delta^{k1} \delta_i^l +  \delta_{j1} \delta^{l1} \delta_i^k ], \nn  
Y_{ij}^{kl} \ =\ \frac 14 [ \delta_{i2} \delta^{k2} \delta_j^l + \delta_{i2} \delta^{l2} \delta_j^k +
 \delta_{j2} \delta^{k2} \delta_i^l +  \delta_{j2} \delta^{l2} \delta_i^k ], \nn
Z_{ij}^{kl} \ =\ \frac 14 [ \delta_{i3} \delta^{k3} \delta_j^l + \delta_{i3} \delta^{l3} \delta_j^k +
 \delta_{j3} \delta^{k3} \delta_i^l +  \delta_{j3} \delta^{l3} \delta_i^k ].    
\ee
One finds the following scalar averages:
\be
\lb{aver-6}
 \langle |\phi^{11}|^2 \rangle \ =\ \frac 1 {2|\vecg{a}|}, \qquad  \langle  |\phi^{22}|^2 \rangle \ =\ \frac 1 {2|\vecg{a} - \vecg{b}|}, \qquad \langle | \phi^{33}|^2 \rangle \ =\ \frac 1 {2|\vecg{b}|}, \nn
  \langle |\phi^{23}|^2 \rangle \ =\ \frac 1 {|\vecg{a}|}, \qquad   \langle | \phi^{12}|^2 \rangle \ =\ \frac 1 {|\vecg{b}|}, \qquad
  \langle |\phi^{13}|^2 \rangle \ =\ \frac 1 {|\vecg{a} - \vecg{b}|}.\nn
 \ee
 The slow supercharge acquires the contribution
 \be
\Delta^{\rm antisextet}  \hat Q_\alpha\ =\ -i {\eta_\alpha^a} \phi T^a \phi^*\,.
\ee 
Averaging it over the fast vacuum and taking Eq.~\p{aver-6} into account, we derive the contribution
\be
\frac {i\eta_\alpha^{(a)}}{2}\left[\frac 1{|\vecg{a}|} - \frac 1{|\vecg{a} - \vecg{b}|}   \right]
+ \frac {i\eta_\alpha^{(b)}}{2}\left[ \frac 1{|\vecg{a} - \vecg{b}|}  - \frac 1{|\vecg{b}|}    \right]
\ee
to the effective supercharge, which happens to coincide with the contribution of a triplet. 
Consider now the contribution of the higher modes $\phi^{ij}_{(\vecind{n})}$ in the fast Hamiltonian. An analysis similar to what we did for the triplets brings about the averages
\be
\lb{aver-6-n}
 \langle |\phi^{11}_{(\vecind{n})}|^2 \rangle \ =\ \frac 1 {2|\vecg{a}_{\vecind{n}}|}, \qquad  \langle  |\phi^{22}_{(\vecind{n})}|^2 \rangle \ =\ \frac 1 {2|(\vecg{a} - \vecg{b})_{\vecind{n}}|}, \qquad \langle | \phi^{33}_{(\vecind{n})}|^2 \rangle \ =\ \frac 1 {2|\vecg{b}_{\vecind{n}}|}, \nn
  \langle |\phi^{23}_{(\vecind{n})}|^2 \rangle \ =\ \frac 1 {|\vecg{a}_{2\vecind{n}}|}, \qquad   \langle | \phi^{12}_{(\vecind{n})}|^2 \rangle \ =\ \frac 1 {|\vecg{b}_{2\vecind{n}}|}, \qquad
  \langle |\phi^{13}_{(\vecind{n})}|^2 \rangle \ =\ \frac 1 {|(\vecg{a} - \vecg{b})_{2\vecind{n}}|}.\nn
 \ee
Note the different periodicity patterns for the diagonal components $\phi^{11}, \phi^{22}, \phi^{33}$ vs. the mixed components $\phi^{12}, \phi^{13}, \phi^{23}$. The former, in contrast to the latter, carry double charges with respect to $T^3$ and/or $T^8 \sqrt{3} - T^3$ and play here the same role as the field $t$ in the Abelian theory.

Plugging \p{aver-6-n} into the fast vacuum averages of the slow supercharge and adding the triplet contribution \p{Dab-triplet}, we derive:
\begin{empheq}[box=\fcolorbox{black}{white}]{align}
\lb{Dab}
\!\!\!\!\!{\cal D}_{(a)} &=& \frac 92 \sum_{\vecind{n}}\left[\frac 1{|\vecg{a}_{2\vecind{n}}|} - \frac 1{|(\vecg{a} - \vecg{b})_{2\vecind{n}}|}   \right] \ - \  \frac 12     \sum_{\vecind{n}} \left[\frac 1{|\vecg{a}_{\vecind{n}}|} - \frac 1{|(\vecg{a} - \vecg{b})_{\vecind{n}}|} \right] , \nn
\!\!\!\!\!{\cal D}_{(b)} &=& \frac 92 \sum_{\vecind{n}}\left[ \frac 1{|(\vecg{a} - \vecg{b})_{2\vecind{n}}|}  - \frac 1{|\vecg{b}_{2\vecind{n}}|}\right] \ - \ \frac 12 \sum_{\vecind{n}}\left[ \frac 1{|(\vecg{a} - \vecg{b})_{\vecind{n}}|}  - \frac 1{|\vecg{b}_{\vecind{n}}|}\right].
\end{empheq}
 The second terms are the contributions of the sextet components $\phi^{11},  \phi^{22}, \phi^{33}$. The corresponding lattice is twice as dense as the lattice for the triplets and for the components $\phi^{12},  \phi^{13}, \phi^{23}$. The reason is the same as for chiral SQED. For a large gauge transformation when, say, $a_x$ is shifted by $4\pi$, a sextet matter fields $\phi^{11}, \phi^{22}, \phi^{33}$  (but not $\phi^{12}, \phi^{13}, \phi^{23}$)  have to ``make two turns" being multiplied by $e^{4i\pi x}$ rather than by $e^{2i\pi x}$ to compensate this shift, in the same way as the fields of double charge did in the Abelian model.
 
As in the Abelian model,  the effective supercharges acquire besides the induced ${\cal D}$ terms the ``vector potential" contributions coming from the averages $\langle \pd/\pd \vecg{a} \rangle$ and  $\langle \pd/\pd \vecg{b} \rangle$ over the fermion factor in the fast ground state wave function. As a result, we derive the following expressions for the effective supercharges:
 \be
\lb{Q-chiral}
   \hat Q_\alpha^{\rm eff} \ &=& \  {\eta^{(a)}_\beta}\left[(\sigma_k)_\alpha{}^\beta (\hat P_k^{(a)} + {\cal A}^{(a)}_k) 
+i  \delta_\alpha^\beta {\cal D}_{(a)}    \right] \nn
&+& \   {\eta^{(b)}_\beta} \left[(\sigma_k)_\alpha{}^\beta (\hat P_k^{(b)} + {\cal A}^{(b)}_k) +i  \delta_\alpha^\beta {\cal D}_{(b)}    \right],  \nn
\hat {\bar Q}^{\alpha \, {\rm eff}} \ &=&\    {\hat{\bar \eta}^{(a)\beta}}\left[(\sigma_k)_\beta{}^\alpha (\hat P_k^{(a)} 
 + {\cal A}^{(a)}_k) + i \delta_\beta^\alpha {\cal D}_{(a)}    \right] \nn
&+&\ {\hat{\bar \eta}^{(b)\beta}} \left[(\sigma_k)_\beta{}^\alpha (\hat P_k^{(b)} + {\cal A}^{(b)}_k) + i \delta_\beta^\alpha {\cal D}_{(b)}    \right],
 \ee
where $\hat P_k^{(a)} = -i \pd/\pd a^k$, $\hat P_k^{(b)} = -i \pd/\pd b^k$ and $\vecg{\cal A}^{(a,b)}$ are related to  ${\cal D}_{(a,b)}$  so that
\be
\lb{cond-nab}
{\cal H}^{(a)}_j &\equiv& \varepsilon_{jkl}\, \pd_k^a {\cal A}^{(a)}_l \ =\ \pd_j^a {\cal D}_{(a)}\,, \nn
{\cal H}^{(b)}_j &\equiv& \varepsilon_{jkl} \,\pd_k^b {\cal A}^{(b)}_l \ =\ \pd_j^b {\cal D}_{(b)}\,, \nn
\varepsilon_{jkl} {\cal H}^{(ab)}_j  \ &\equiv& \ \pd_k^a {\cal A}^{(b)}_l  - \pd_l^b {\cal A}^{(a)}_k  \ =\   \varepsilon_{jkl}  \pd_j^a {\cal D}_{(b)}  \ =\   \varepsilon_{jkl}  \pd_j^b {\cal D}_{(a)}  \,.
 \ee
Bearing in mind that
\be
\lb{anticom-eta}
 \{\hat{\bar \eta}^{(a)\beta}, \eta^{(a)}_\alpha\} = \{\hat{\bar \eta}^{(b)\beta}, \eta^{(b)}_\alpha\} = \frac 43 \delta_\alpha^\beta,
 \quad \{\hat{\bar \eta}^{(a)\beta}, \eta^{(b)}_\alpha\} = \{\hat{\bar \eta}^{(b)\beta}, \eta^{(a)}_\alpha\} = \frac 23 \delta_\alpha^\beta\,, \nn
 \ee
we derive the effective Hamiltonian:\footnote{The effective supercharges \p{Q-chiral} and the Hamiltonian \p{H-chiral-nab} were derived in Ref. \cite{Blok}. Unfortunately, the expressions for the induced ${\cal D}$ terms quoted in that paper involved a mistake. As a result, the value of the phase space integral for the Witten index was also not correct. But we will confirm in the calculations below the main conclusion of that work: the phase space integral \p{IW-int-final} evaluated for the Hamiltonian \p{H-chiral-nab} has a fractional value, so that the CG method to evaluate the index does not work in this case.}
\begin{empheq}[box=\fcolorbox{black}{white}]{align}
\lb{H-chiral-nab}
\hat H^{\rm eff}\  = \ \frac 23 \left[(\hat P_k^{(a)} + {\cal A}^{(a)}_k )^2 + (\hat P_k^{(b)} + {\cal A}^{(b)}_k )^2 + (\hat P_k^{(a)} + 
{\cal A}^{(a)}_k ) (\hat P_k^{(b)} + {\cal A}^{(b)}_k) \right.  \nn
+ \left. {\cal D}_{(a)}^2 + {\cal D}_{(b)}^2 
   + {\cal D}_{(a)} {\cal D}_{(b)}\right]\nn 
 + \, \vecg{\cal H}^{(a)}\, \hat{\bar \eta}^{(a)} \vecg{\sigma} \eta^{(a)} +  \vecg{\cal H}^{(b)}\, \hat{\bar \eta}^{(b)} \vecg{\sigma} \eta^{(b)} +  \vecg{\cal H}^{(ab)}( \hat{\bar \eta}^{(a)} \vecg{\sigma} \eta^{(b)} 
+  \hat{\bar \eta}^b \vecg{\sigma} \eta^{(a)} )\,.
  \end{empheq}

\subsubsection{Phase space integral}
The Hamiltonian \p{H-chiral-nab} is essentially more complicated than \p{H-chiral}, and
it is difficult in this case to find the vacuum states explicitly, as we did for SQED. So we use the Cecotti-Girardello method and evaluate the integral \p{IW-int-final} for the classical version of the Hamiltonian \p{H-chiral-nab}. This calculation is  more complicated than in the Abelian case, but is still feasible.

Note first of all that the integral \p{IW-int-final} was written in the assumption that the bosonic and fermion variables form a canonical basis with the ordinary Poisson brackets
 \be
\{p_j, q_k\}_{\rm P.B.} \ =\ \delta_{jk}, \qquad \{\psi_a, \bar \psi_b \}_{\rm P.B.} \ =\ i\delta_{ab} \,.
 \ee
 In our case, it is so for bosonic variables, but not so for the fermion variables whose Poisson brackets are nontrivial, as follows from \p{anticom-eta}. The Jacobian of the transformation from the canonical variables $\eta_\alpha^3,   \eta_\alpha^8, \bar \eta^{\alpha 3}, \bar \eta^{\alpha 8}$ to 
$\eta_\alpha^{(a)},   \eta_\alpha^{(b)}, \bar \eta^{\alpha (a)}, \bar \eta^{\alpha (b)}$ gives the factor $(2/\sqrt{3})^4 = 16/9$. Integrating over the fermions, we derive
 \be
\lb{IW-int-nab}
&&I_W  =  \ \lim_{\beta \to 0} \frac {16}9 \beta^4  \int \frac {d\vecg{a} \,d\vecg{P^{(a)}}d\vecg{b}\, d\vecg{P^{(b)}}  }{(2\pi)^6}  \left| \begin{array}{cc} \vecg{\cal H}^{(a)} 
\vecg{\sigma}, &   \vecg{\cal H}^{(ab)} 
\vecg{\sigma} \\
\vecg{\cal H}^{(ab)} 
\vecg{\sigma}, & \vecg{\cal H}^{(b)} 
\vecg{\sigma} \end{array}  \right| \nn
&&\cdot \exp\left\{- \frac 23 \beta \left[ (\vecg{P}^{(a)} + \vecg{\cal A}^{(a)}_k )^2 + (\vecg{P}^{(b)} + \vecg{\cal A}^{(b)}_k )^2 + (\vecg{P}^{(a)} + 
\vecg{\cal A}^{(a)}_k ) ( \vecg{P}^{(b)} + \vecg{\cal A}^{(b)}_k)  \right. \right. \nn
&&\left. \left. + {\cal D}_{(a)}^2 + {\cal D}_{(b)}^2 
   + {\cal D}_{(a)} {\cal D}_{(b)} \right]      \right\}.
\ee
Calculating the determinant and integrating over momenta, we obtain
\begin{empheq}[box=\fcolorbox{black}{white}]{align}
\lb{intind-DH}
&&I_W  =  \lim_{\beta \to 0} \frac \beta{4\pi^3 \sqrt{3}} \int d\vecg{a} d\vecg{b} \left[ \|\vecg{H}^{(ab)}\|^4 + 2  \|\vecg{H}^{(ab)}\|^2 (\vecg{H}^{(a)} \cdot 
\vecg{H}^{(b)})  \right. \nn
&&\left. + \ \|\vecg{H}^{(a)}\|^2 \|\vecg{H}^{(b)}\|^2 - 4(\vecg{H}^{(a)} \cdot 
\vecg{H}^{(ab)}) (\vecg{H}^{(b)} \cdot 
\vecg{H}^{(ab)}) \right]\nn
&& \times \exp\left\{ - \frac 23 \beta  \left[D_{(a)}^2 + D_{(b)}^2 
   + D_{(a)} D_{(b)}\right]\right\}. 
\end{empheq}
The integral runs over the fundamental cell $ a_j, b_j \in [0, 4\pi)$ of our ``crystal" (recall that the gauge invariant slow wave functions are periodic in the space of weights with the period $4\pi$ for each component of $\vecg{a}$ and $\vecg{b}$).
If $\beta$ is small, the integral is saturated by the regions near the singularities of ${\cal D}_{(a)}$ {\it and} ${\cal D}_{(b)}$. There are 64  such regions in the fundamental cell: 

{\it (i)} its corner  $\vecg{a} \sim \vecg{b} \sim 0$; 

{\it (ii)} 42 symmetric regions  of the type $\vecg{a} \sim \vecg{a}_0 = 2\pi(1,0,0),  \vecg{b} \sim \vecg{b}_0 = 2\pi(0,1,0)$ (so that all three vectors $\vecg{a}_0, \vecg{b}_0, \vecg{a}_0 - \vecg{b}_0$ are nonzero); and

{\it (iii)}   21 asymmetric regions of the type $\vecg{a} \sim 2\pi(1,0,0), \vecg{b} \sim \vecg{0}$ where one of these vectors is equal to zero.   

Consider first the region around the corner. The angular integration gives \be 
d\vecg{a} d\vecg{b} \ =\ 8\pi^2 \rho_a d \rho_a \rho_b d \rho_b \rho_{ab} d \rho_{ab}
\ee  
with $\rho_a = |\vecg{a}|, \, \rho_b = |\vecg{b}|, \rho_{ab} = |\vecg{a} - \vecg{b}|$. The induced ${\cal D}$ terms are 
\be
\lb{D-cent}
{\cal D}_{(a)} \ =\ 4\left( \frac 1{\rho_a} - \frac 1{\rho_{ab}} \right), \qquad {\cal D}_{(b)} \ =\ 4\left( \frac 1{\rho_{ab}} - \frac 1{\rho_b} \right).
\ee

Substituting this into \p{intind-DH} and  bearing in mind \p{cond-nab},  we obtain the following expression for the corner contribution:\footnote{The integral does not depend on $\beta$ and we have set $ \beta = 1/16$.} 

\be
\lb{I-cent-int}
&&I_W({\rm corner})  =  \  \frac {32} {\pi \sqrt{3}} \iiint_0^\infty \frac { d \rho_a  d \rho_b  d \rho_{ab}}{\rho_a^3 \rho_b^3 \rho_{ab}^3 }
\left[ \rho_a^4 + \rho_b^4 + \rho_{ab}^4 + \right. \nn
&&\left. \rho_a \rho_b (\rho_{ab}^2 - \rho_a^2 - \rho_b^2) +  \rho_a \rho_{ab} (\rho_{b}^2 - \rho_a^2 - \rho_{ab}^2) +
\rho_b \rho_{ab} (\rho_{a}^2 - \rho_b^2 - \rho_{ab}^2) \right] \nn
&&\cdot \exp \left[    - \frac 23 \left(\frac 1 {\rho_a^2} + \frac 1 {\rho_b^2} + \frac 1 {\rho_{ab}^2}   - \frac 1{\rho_a \rho_b} -  \frac 1{\rho_a \rho_{ab}}-   \frac 1{\rho_b \rho_{ab}} \right)\right]
\ee  
with the constraint $|\rho_a - \rho_b| \leq \rho_{ab} \leq \rho_a + \rho_b$.

It is convenient to make a variable change 
\be
\rho_a = \mu \rho, \quad \rho_b = \nu \rho, \quad \rho_{ab} = (1-\mu - \nu)\rho\,,
 \ee 
after which the integral over $\rho$ can be easily done, and we derive
 \be
I_W({\rm corner}) \  =  \ \frac {8\sqrt{3}}{\pi} \int_0^{1/2} d\mu \int_{1/2 -\mu}^{1/2} d\nu \,\frac {A(\mu,\nu)}{\mu\nu\kappa B(\mu,\nu)}\,, 
\ee
where $\kappa = 1 - \mu - \nu$ and
\be
\lb{for-Math}
\!\!\!\!\!\!\!A(\mu,\nu) &=& \mu^4 + \nu^4 + \kappa^4  + \mu\nu(\kappa^2 - \mu^2 - \nu^2) + \mu\kappa(\nu^2 - \mu^2 - \kappa^2)
+ \nu\kappa (\mu^2 - \nu^2 - \kappa^2)\,,  \nn
\!\!\!\!\!\!\!B(\mu, \nu) &=& \mu^2 \nu^2 + \mu^2 \kappa^2 + \nu^2 \kappa^2 - \mu \nu \kappa\,.
 \ee
Mathematica calculated for us the integral \p{for-Math} numerically, and this calculation produced an  integer answer: 
\be
\lb{I-cent}
I_W({\rm corner}) \ =\ 16\,.
 \ee

So far so good. Consider now the contributions of 42 symmetric regions. The corresponding ${\cal D}$ terms have the same form as in Eq.~\p{D-cent} with $\rho_a = |\vecg{a} - \vecg{a}_0|$ etc., but with the factor $-
1/2$ rather than 4. Also the ``magnetic fields" $\vecg{\cal H}^{(a)}, \vecg{\cal H}^{(b)}$ and $\vecg{\cal H}^{(ab)}$ entering the preexponential in \p{intind-DH}  are 8 times smaller than in the corner. As a result, the contribution of a symmetric region in the index is equal to $16/64 = 1/4$. And the contribution of all such regions is
\be
\lb{I-sym}
I_W^{\rm sym} \ =\ \frac{42}{4} \ =\ 10\frac 12 \,.
 \ee
It is {\it fractional}.

Consider now the 21 asymmetric regions. For example, a region around the singularity at $\vecg{a}_0 =  2\pi(1,0,0), \vecg{b}_0 = \vecg{0}$. The $\cal D$ terms are
\be
\lb{D-sym}
{\cal D}_{(a)} \ =\ \frac 12 \left( \frac 1{\rho_{ab}} - \frac 1{\rho_{a}} \right), \qquad {\cal D}_{(b)} \ =\  -\frac 1{2\rho_{ab}} - \frac 4{\rho_b} 
\ee
with $\rho_a = |\vecg{a} - \vecg{a}_0|, \rho_b = |\vecg{b}|$, and $\rho_{ab} = |\vecg{a} - \vecg{b} - \vecg{a}_0|$.
The contribution of this region to the index is\footnote{We have now set $\beta = 4$.}
\be
\lb{I-asym-int}
&&I_W^{\rm asym}  =  \  \frac 1 {2\pi \sqrt{3}} \iiint \frac { d \rho_a  d \rho_b  d \rho_{ab}}{\rho_a^3 \rho_b^3 \rho_{ab}^3 }
\left[ 64(\rho_a^4 +  \rho_{ab}^4) + \rho_b^4   \right. \nn
&&\left. 8\rho_a \rho_b (\rho_a^2 + \rho_b^2 - \rho_{ab}^2) +  64 \rho_a \rho_{ab} (\rho_{b}^2 -\rho_a^2 - \rho_{ab}^2  ) +
8\rho_b \rho_{ab} (\rho_b^2 + \rho_{ab}^2 - \rho_{a}^2) \right] \nn
&&\cdot \exp \left[    - \frac 23 \left(\frac 1 {\rho_a^2} + \frac 1 {\rho_{ab}^2} + \frac {64}{\rho_{b}^2}   - \frac 1{\rho_a \rho_{ab}} +  \frac 8 {\rho_a \rho_{b}} +  \frac 8{\rho_b \rho_{ab}} \right)\right].
\ee 
A numerical calculation gives the value $3$ for this integral.\footnote{The fact that a numerical calculation of these complicated integrals gives nicely looking simple fractions or even integers looks as a small mathematical miracle, of which I have no explanation. Moreover, consider a set of integrals 
\be
\lb{I-q}
&&I_{q}  =  \  \frac 1 {
2\pi \sqrt{3}} \iiint \frac { d \rho_a  d \rho_b  d \rho_{ab}}{\rho_a^3 \rho_b^3 \rho_{ab}^3 }
\left[ q^2(\rho_a^4 +  \rho_{ab}^4) + \rho_b^4   \right. \nn
&&\left. q\rho_a \rho_b (\rho_{ab}^2 - \rho_a^2 - \rho_b^2) +  q^2 \rho_a \rho_{ab} (\rho_{b}^2 - \rho_a^2 - \rho_{ab}^2) +
q\rho_b \rho_{ab} (\rho_{a}^2 - \rho_b^2 - \rho_{ab}^2) \right] \nn
&&\cdot \exp \left[    - \frac 23 \left(\frac 1 {\rho_a^2} + \frac 1 {\rho_{ab}^2} + \frac {q^2}{\rho_{b}^2}   - \frac 1{\rho_a \rho_{ab}} -  \frac q {\rho_a \rho_{b}}-   \frac q{\rho_b \rho_{ab}} \right)\right]
\ee 
with integer $q$. The ``experimental" values of these integrals are 
$$ I_{q \leq 0}  =\frac {1-q}3,  \qquad I(1) = \frac 14, \qquad I_{q >1} = \frac {2(q-1)}3 \,.
$$
I wonder if one can  derive these amusing results analytically.}  
The contribution of all 21 asymmetric regions is 
 \be
\lb{I-asym}
I_W^{\rm asym} \ =\ 63\,.
 \ee
Adding all the contributions, we obtain
\begin{empheq}[box=\fcolorbox{black}{white}]{align}
I_W \ =\ 89\frac 12\,.
\end{empheq}
Actually, the notation $I_W$ is not correct here, because the Witten index cannot be fractional in a system with discrete spectrum (the range of bosonic dynamic variables $\vecg{a}$ and $\vecg{b}$ in the Hamiltonian \p{H-chiral-nab} is finite, and the spectrum should be discrete). We have not calculated the index, but just quoted the value of the integral \p{IW-int-nab}. The fractional value of the integral is not a paradox by the reasons clarified on p. \pageref{IW-sing}: the effective Hamiltonian \p{H-chiral-nab} has singularities so that the Cecotti-Girardello method by which the index was evaluated above may not work. Unfortunately, we cannot say why it works for chiral SQED, but does not work for chiral SQCD.\footnote{Were the result of the calculation of the phase space integral integer, we would have to recall that not all the eigenfunctions of the  Hamiltonian \p{H-chiral-nab} contribute in the index of the original theory, but only the functions invariant under Weyl permutations: $\vecg{a} \leftrightarrow \vecg{b} - \vecg{a}$, etc. But the fractional result makes the discussion of this subtle question pointless. }   

\subsection{Minimal $SU(5)$ model}

The simplest chiral anomaly-free $SU(5)$ supersymmetric model includes a quintet $S_j$ and an antidecuplet $\Phi^{[jk]}$. 
The generators in the antisymmetric representation of an $SU(N)$ group read 
\be
\lb{Ta-SU5}
(T^a)_{ij}^{kl} \ =\ \frac 12 \left[ (t^a)_i^k \delta_j^l - (t^a)_i^l \delta_j^k - (t^a)_j^k \delta_i^l + (t^a)_j^l \delta_i^k \right]\,.
\ee
One can derive that 
\be
{\rm Tr} \{ T^{(a} T^b T^{c)} \} \ = \ (N-4) \, {\rm Tr} \{t^{(a} t^b t^{c)} \}\,,
\ee
and, for $N=5$, the contributions of the quintet and antidecuplet in the anomalous triangle cancel, indeed.

The Lagrangian reads
\begin{empheq}[box=\fcolorbox{black}{white}]{align}
\lb{chiral-SU5}
{\cal L} = \left( \frac 14 {\rm Tr} \int d^2\theta\, \hat W^\alpha \hat W_\alpha  + {\rm c.c.} \right) 
 + \frac  14 \int d^4\theta \left( \overline S e^{g\hat V} S + 
\Phi e^{-g\hat V  } \overline \Phi \right) .
 \end{empheq}

An important distinction of this model compared to the models considered in the previous sections is the {\it absence} of the scalar valley.
In other words, the equation system
 \be
\lb{Da-SU5}
\phi^{ij} (T^a)_{ij}{}^{kl} \phi^*_{kl} - s^{*j} (t^a)_j{}^k s_k \ =\ 0
 \ee
with $T^a$ written in Eq. \p{Ta-SU5} has no solutions. 

One can be convinced in that quite explicitly.\footnote{See Appendix A of the review \cite{Amati}.} 
Introduce the matrices $A_k{}^j = s_k s^{*j}$ and $B_k{}^j = \phi^*_{kl} \phi^{lj}$. Then \p{Da-SU5} boils down to Tr$\{(A + 2B) t^a \} = 0$, i.e. $A+2B = C \cdot \mathbb{1}$.
By a gauge rotation, $\phi^{ij}$ can be brought to the following canonical form:
  \be
\phi \ =\ \left(  \begin{array}{ccccc} 0&x&0&0&0 \\ -x&0&0&0&0 \\ 0&0&0&y&0 \\ 0&0&-y&0&0 \\ 0&0&0&0&0 \end{array} \right).
 \ee
This gives
$$ B = \phi^* \phi \ =\ -{\rm diag}\,(|x|^2, |x|^2, |y|^2, |y|^2, 0 ) \,. $$
Hence 
\be s_j s^{*k} \ =\ {\rm diag}\,(C+2|x|^2, C+ 2|x|^2, C+ 2|y|^2, C+ 2|y|^2, C) \,, \ee
and this relation treated as an equation for $s_j$ has no solutions.

Thus, we do not need here to introduce a Yukawa coupling to lock a nonexistent scalar valley (besides, we are not {\it able} to do so: in this case, one cannot write a trilinear gauge-invariant term in the superfield Lagrangian). The theory has only the gauge coupling and is asymptotically free. To study its vacuum structure, it suffices to study the gauge branch of classical vacua.

Take a constant Abelian  background 
\be
\lb{Cartan5}
\hat {\vecg{c}} \ =\ \frac 12 \,{\rm diag} (\vecg{a}, \vecg{b} - \vecg{a},  \vecg{c} - \vecg{b},  \vecg{d} - \vecg{c},- \vecg{d})\,.
\ee
The fast Hamiltonian describing its interaction with the zero Fourier modes of the scalar quintet components is
\be 
\lb{fast-fund-5}
&&H^{\rm fast}_{\rm quint} \ = 
 - \frac {\partial^2}{\partial  s^{*j} \partial s_j } 
+ \frac 14 \left[ \vecg{a}^2 |s_1|^2 \right. \nn 
&&\left. + (\vecg{b} - \vecg{a})^2 |s_2|^2 + (\vecg{c} - \vecg{b})^2 |s_3|^2 
+ (\vecg{d} - \vecg{c})^2 |s_4|^2  + \vecg{d}^2 |s_5|^2\right].
\ee
The scalar averages are 
\be
 &&\langle    |s_1|^2 \rangle = \frac 1{|\vecg{a}|}, \qquad  \langle  |s_2|^2   \rangle = \frac 1{|\vecg{b} - \vecg{a}|}, \qquad
\langle  |s_3|^2   \rangle = \frac 1{|\vecg{c} - \vecg{b}|}, \nn
 &&\quad \langle  |s_4|^2   \rangle = \frac 1{|\vecg{d} - \vecg{c}|}, \qquad
 \langle   |s_5|^2 \rangle = \frac 1{|\vecg{d}|}\,.
 \ee
The fast Hamiltonian describing the interaction of the gauge background with the zero Fourier modes of the scalar decuplet components is
\be 
\lb{fast-fund-10}
&&H^{\rm fast}_{\rm dec} \ = 
 2\left[ |\pi_{12}|^2 + \ldots + |\pi_{45}|^2 \right] 
+ \frac 12 \left[ \vecg{b}^2 |\phi^{12}|^2 + ( \vecg{a} +  \vecg{c} - \vecg{b})^2 |\phi^{13}|^2  \right.\nn
&& + (\vecg{a} +  \vecg{d} - \vecg{c})^2 |\phi^{14}|^2 
 + ( \vecg{a} - \vecg{d})^2 |\phi^{15}|^2   + (\vecg{a} -  \vecg{c})^2 |\phi^{23}|^2 \nn
&& + (\vecg{b} -  \vecg{a} + \vecg{d}- \vecg{c})^2 |\phi^{24}|^2 
 + (\vecg{b} -  \vecg{a} - \vecg{d})^2 |\phi^{25}|^2 
+ (\vecg{d} -\vecg{b})^2 |\phi^{34}|^2  \nn
 &&  \left. + (\vecg{c} -  \vecg{b} - \vecg{d})^2 |\phi^{35}|^2 +\vecg{c}^2 |\phi^{45}|^2  \right].
\ee

The scalar averages are 
\be
&&\langle |\phi^{12}|^2 \rangle = \frac 1 {|\vecg{b}|}, \quad \langle |\phi^{23}|^2 \rangle = \frac 1 {|\vecg{a} - \vecg{c}|}, \quad
 \langle |\phi^{15}|^2\rangle  = \frac 1 {|\vecg{a}  - \vecg{d} |}, \quad \langle |\phi^{45}|^2 \rangle = \frac 1 {|\vecg{c}|},\nn 
 &&   \langle |\phi^{13}|^2 \rangle = \frac 1 {|\vecg{a} + \vecg{c} - \vecg{b} |}, \ 
\langle |\phi^{14}|^2 \rangle = \frac 1 {|\vecg{a} + \vecg{d} - \vecg{c} |}, \
 \langle |\phi^{24}|^2\rangle  = \frac 1 {|\vecg{b} - \vecg{a} + \vecg{d}- \vecg{c} |}, \nn 
&&\langle |\phi^{25}|^2 \rangle = \frac 1 {|\vecg{b} - \vecg{a} - \vecg{d} |},   \ 
 \langle |\phi^{34}|^2 \rangle = \frac 1 {| \vecg{b}-  \vecg{d} |}, \
\langle |\phi^{35}|^2 \rangle = \frac 1 {|\vecg{c} - \vecg{b} - \vecg{d} |}. \nn     
\ee
\smallskip
Taking into account also higher Fourier harmonics of the matter fields and proceeding in the same way as we did  for the $SU(3)$ model, 
we can find the effective supercharges and  Hamiltonian \cite{Blok}.
\be
\lb{Q-chiral-5}
   \hat Q_\alpha^{\rm eff} \ &=& \ {\eta^{(a)}_\beta} \left[(\sigma_k)_\alpha{}^\beta (\hat P_k^{(a)} + {\cal A}^{(a)}_k) 
+i  \delta_\alpha^\beta {\cal D}_{(a)}    \right] \nn
&+& \  {\eta^{(b)}_\beta} \left[(\sigma_k)_\alpha{}^\beta (\hat P_k^{(b)} + {\cal A}^{(b)}_k) +i  \delta_\alpha^\beta {\cal D}_{(b)}    \right],  \nn
&+&  {\eta^{(c)}_\beta} \left[(\sigma_k)_\alpha{}^\beta (\hat P_k^{(c)} + {\cal A}^{(c)}_k) 
+i  \delta_\alpha^\beta {\cal D}_{(c)}    \right] \nn
&+& \  {\eta^{(d)}_\beta} \left[(\sigma_k)_\alpha{}^\beta (\hat P_k^{(d)} + {\cal A}^{(d)}_k) +i  \delta_\alpha^\beta {\cal D}_{(d)}    \right]
 \ee
and similarly for $ \hat {\bar Q}^{\alpha\, {\rm eff}}$ expressed via $\bar \eta^{(a)\alpha}$ etc. Here
\be
\eta_\alpha^{(a)} \ &=&\ \eta_\alpha^3 + \frac {\eta_\alpha^8}{\sqrt{3}} + \frac {\eta_\alpha^{15}}{\sqrt{6}} + \frac {\eta_\alpha^{24}}{\sqrt{10}}\,, \nn
\eta_\alpha^{(b)} \ &=&\  2\left(\frac {\eta_\alpha^8}{\sqrt{3}} + \frac {\eta_\alpha^{15}}{\sqrt{6}} + \frac {\eta_\alpha^{24}}{\sqrt{10}} \right), \nn
\eta_\alpha^{(c)} \ &=&\  3\left(\frac {\eta_\alpha^{15}}{\sqrt{6}} + \frac {\eta_\alpha^{24}}{\sqrt{10}} \right), \nn
\eta_\alpha^{(d)} \ &=&\ \frac {4\eta_\alpha^{24}}{\sqrt{10}}\,. 
 \ee

The fermion variables have the following anticommutators:
\be
\lb{anticom-eta-5}
&& \{\hat{\bar \eta}^{(a)\beta}, \eta^{(a)}_\alpha\} = \{\hat {\bar \eta}^{(d)\beta}, \eta^{(d)}_\alpha\} = \frac 85 \delta_\alpha^\beta,
 \quad \{\hat{\bar \eta}^{(b)\beta}, \eta^{(b)}_\alpha\} = \{\hat{\bar \eta}^{(c)\beta}, \eta^{(c)}_\alpha\} = \frac {12}5 \delta_\alpha^\beta\,, \nn
 &&\{\hat{\bar \eta}^{(a)\beta}, \eta^{(b)}_\alpha\} = \{\hat{\bar \eta}^{(b)\beta}, \eta^{(a)}_\alpha\} = \{\hat{\bar \eta}^{(c)\beta}, \eta^{(d)}_\alpha\} = \{\hat{\bar \eta}^{(d)\beta}, \eta^{(c)}_\alpha\} = \frac 65\delta_\alpha^\beta, \nn
&& \{\hat{\bar \eta}^{(a)\beta}, \eta^{(c)}_\alpha\} = \{\hat{\bar \eta}^{(c)\beta}, \eta^{(a)}_\alpha\} = \{\hat{\bar \eta}^{(b)\beta}, \eta^{(d)}_\alpha\} = \{\hat{\bar \eta}^{(d)\beta}, \eta^{(b)}_\alpha\} = \frac 45\delta_\alpha^\beta, \nn
 &&\{\hat{\bar \eta}^{(a)\beta}, \eta^{(d)}_\alpha\} = \{\hat{\bar \eta}^{(d)\beta}, \eta^{(a)}_\alpha\} = \frac 25 \delta_\alpha^\beta\, \quad
\{\hat{\bar \eta}^{(b)\beta}, \eta^{(c)}_\alpha\} = \{\hat{\bar \eta}^{(c)\beta}, \eta^{(b)}_\alpha\} = \frac 85 \delta_\alpha^\beta\,.\nn
 \ee

The supercharges $ \hat Q_\alpha^{\rm eff}$ and $ \hat {\bar Q}^{\alpha\, {\rm eff}}$ look similar to \p{Q-chiral}, but they now include extra dynamic variables ($\vecg{c}, \vecg{d}$ and their superpartners)  and represent sums of four rather than two terms.
 The induced ${\cal D}$ terms are 
\be
{\cal D}_{(a)} = \frac 12 \left(\langle |s_1|^2\rangle  -  \langle |s_2|^2\rangle  - \langle |\phi_{13}|^2\rangle  - \langle |\phi_{14}|^2\rangle 
-  \langle |\phi_{15}|^2\rangle +   \langle |\phi_{23}|^2\rangle +  \langle |\phi_{24}|^2 +  \langle |\phi_{25}|^2\rangle  
\right), \nn
{\cal D}_{(b)} = \frac 12 \left(\langle |s_2|^2\rangle  -  \langle |s_3|^2\rangle  - \langle |\phi_{21}|^2\rangle  - \langle |\phi_{24}|^2\rangle 
-  \langle |\phi_{25}|^2\rangle +   \langle |\phi_{31}|^2\rangle +  \langle |\phi_{34}|^2 +  \langle |\phi_{35}|^2\rangle  
\right), \nn
{\cal D}_{(c)} = \frac 12 \left(\langle |s_3|^2\rangle  -  \langle |s_4|^2\rangle  - \langle |\phi_{31}|^2\rangle  - \langle |\phi_{32}|^2\rangle 
-  \langle |\phi_{35}|^2\rangle +   \langle |\phi_{41}|^2\rangle +  \langle |\phi_{42}|^2 +  \langle |\phi_{45}|^2\rangle  
\right), \nn
{\cal D}_{(d)} = \frac 12 \left(\langle |s_4|^2\rangle  -  \langle |s_5|^2\rangle  - \langle |\phi_{41}|^2\rangle  - \langle |\phi_{42}|^2\rangle 
-  \langle |\phi_{43}|^2\rangle +   \langle |\phi_{51}|^2\rangle +  \langle |\phi_{52}|^2 +  \langle |\phi_{53}|^2\rangle  
\right)\!, 
\ee
which gives  (when we take into account the contributions of all the Fourier harmonics) 
\be
&&\!\!\!\!{\cal D}_{(a)}  = \frac 12 \sum_{\vecind{n}}\left[  \frac 1{|\vecg{a}_{2\vecind{n}}|} + \frac 1 {|(\vecg{a} - \vecg{c})_{2\vecind{n}|} } + \frac 1 
{|(\vecg{b}- \vecg{a} + \vecg{d}- \vecg{c}) _{2\vecind{n}} |} +  \frac 1 {|(\vecg{b} - \vecg{a} - \vecg{d})_{2\vecind{n}} | }  \right. \nn
&&\left.  - \frac 1{|(\vecg{a} - \vecg{b})_{2\vecind{n}}|} - \frac 1 {|(\vecg{a} + \vecg{c} - \vecg{b})_{2\vecind{n}}| } - \frac 1 {|(\vecg{a} + \vecg{d} - \vecg{c})_{2\vecind{n}} |} - \frac 1 {|(\vecg{a} - \vecg{d})_{2\vecind{n}} | } \right],\nn
&&\!\!\!\!{\cal D}_{(b)}  = \frac 12 \sum_{2\vecind{n}}\left[  \frac 1{|(\vecg{a}- \vecg{b})_{2\vecind{n}}|} + \frac 1 {|(\vecg{b} - \vecg{d})_{2\vecind{n}|} } + \frac 1 
{|(\vecg{a}+ \vecg{c} - \vecg{b}) _{2\vecind{n}} |} +  \frac 1 {|(\vecg{b} - \vecg{c} + \vecg{d})_{2\vecind{n}} | }  \right. \nn
&&\left.  - \frac 1{|\vecg{b}_{2\vecind{n}}|} - \frac 1 {|(\vecg{b} - \vecg{c})_{2\vecind{n}}| } - \frac 1 {|(\vecg{b} - \vecg{a} - \vecg{d})_{2\vecind{n}} |} - \frac 1 {|(\vecg{b} - \vecg{a} + \vecg{d} - \vecg{c})_{2\vecind{n}} | } \right],\nn
&&\!\!\!\!{\cal D}_{(c)}  = \frac 12 \sum_{\vecind{n}}\left[  \frac 1{|\vecg{c}_{2\vecind{n}}|} + \frac 1 {|(\vecg{b} - \vecg{c})_{2\vecind{n}|} } + \frac 1 
{|(\vecg{a}- \vecg{c} + \vecg{d}) _{2\vecind{n}} |} +  \frac 1 {|(\vecg{b} - \vecg{a} + \vecg{d} - \vecg{c})_{2\vecind{n}} | }  \right. \nn
&&\left.  - \frac 1{|(\vecg{d} - \vecg{c})_{2\vecind{n}}|} - \frac 1 {|(\vecg{a} - \vecg{c})_{2\vecind{n}}| } - \frac 1 {|(\vecg{a} + \vecg{c} - \vecg{b})_{2\vecind{n}} |} - \frac 1 {|(\vecg{c} - \vecg{b}- \vecg{d})_{2\vecind{n}} | } \right],\nn
&&\!\!\!\!{\cal D}_{(d)}  = \frac 12 \sum_{\vecind{n}}\left[  \frac 1{|(\vecg{d}- \vecg{c})_{2\vecind{n}}|} + \frac 1 {|(\vecg{d} - \vecg{a})_{2\vecind{n}|} } + \frac 1 
{|(\vecg{c}- \vecg{b} - \vecg{d}) _{2\vecind{n}} |} +  \frac 1 {|(\vecg{b} - \vecg{a} - \vecg{d})_{2\vecind{n}} | }  \right. \nn
&&\left.  - \frac 1{|\vecg{d} _{\vecind{2n}}|} - \frac 1 {|(\vecg{b}  - \vecg{d})_{2\vecind{n}}| } - \frac 1 {|(\vecg{a} + \vecg{d} - \vecg{c})_{2\vecind{n}} |} - \frac 1 {|(\vecg{b} - \vecg{a} +\vecg{d} - \vecg{c})_{2\vecind{n}} | } \right].\nn
\ee
In this case, the elementary cell of our 12-dimensional crystal has only one node.  The induced vector potentials are related to ${\cal D}_{(a,b,c,d)}$ by the relations \p{cond-nab} to which one should add similar relations involving ${\cal D}_{(c,d)}$ and $\vecg{\cal A}^{(c,d)}$:
\be
{\cal H}^{(c)}_j &\equiv& \varepsilon_{jkl}\, \pd_k^c {\cal A}^{(c)}_l \ =\ \pd_j^c {\cal D}_{(c)}\,, \nn
\varepsilon_{jkl} {\cal H}^{(ad)}_j  \ &\equiv& \ \pd_k^a {\cal A}^{(d)}_l  - \pd_l^d {\cal A}^{(a)}_k  \ =\   \varepsilon_{jkl}  \pd_j^a {\cal D}_{(d)}  \ =\   \varepsilon_{jkl}  \pd_j^d {\cal D}_{(a)}  \,.
 \ee
etc. 
Calculating the anticommutator $\{\hat {\bar Q}^{\alpha\, {\rm eff}}, \hat Q_\alpha^{\rm eff} \}$, one can derive an expression for the effective Hamiltonian. It represents a rather obvious generalization of   \p{H-chiral-nab}. For example, bearing in mind \p{anticom-eta-5}, one can write a contribution
 \be
&&\hat H^{\rm eff}_{\cal D} \ =\ \frac 15 \left[4\left(({\cal D}_{(a)}^2 +  {\cal D}_{(d)}^2\right) +  6\left({\cal D}_{(b)}^2 +  {\cal D}_{(c)}^2\right) + 6\left({\cal D}_{(a)} {\cal D}_{(b)} + {\cal D}_{(c)} {\cal D}_{(d)}\right)\right. \nn
&& \left.+  \ 4\left({\cal D}_{(a)} {\cal D}_{(c)} + {\cal D}_{(b)} {\cal D}_{(d)}\right) + 2{\cal D}_{(a)} {\cal D}_{(d)} + 8{\cal D}_{(b)} {\cal D}_{(c)}\right]\,.
 \ee
Another contribution represents an analogous quadratic form of  $\hat P_k^{(a,b,c,d)} + {\cal A}^{(a,b,c,d)}_k$. And there are 10 bifermion terms including $\vecg{\cal H}^{(a)},
 \vecg{\cal H}^{(bc)}$, etc.

One can also write and probably,  applying certain efforts and using a clever numerical method,\footnote{The problem boils down to calculating a rather complicated 10-dimensional integral, which is a nontrivial task even for Mathematica.}
calculate numerically the phase space integral \p{IW-int-final} for this model in an attempt to evaluate the index of the effective Hamiltonian. Bearing in mind the absence of the scalar vacuum valleys and the presence of only one relevant region in the elementary cell (two conceptial simplifications), it might coincide  with the index of the original field theory \p{chiral-SU5}. It cannot be excluded, however, that this complicated calculation would give a meaningless fractional result, as a similar calculation did  for $SU(3)$...

\subsection{Spontaneous supersymmetry breaking}

The main physical question to which the Witten index analysis helps to answer is whether supersymmetry is broken spontaneously in a given theory or not. In the gauge theories discussed so far, supersymmetry was not broken.
But there are also models involving chiral matter multiplets where supersymmetry {\it is}  broken. We are sure of that because the breaking occurs in the {\it weak coupling regime} where everything is under control!

The simplest such model (the so-called {\it 3-2 model}) is based on the gauge group
$SU(3) \times SU(2)$ \cite{ADS-ort} (see also the reviews \cite{Amati,versus,Dine}). It includes the following chiral matter multiplets: a multiplet $Q_{j=1,2,3; f=1,2}$ in the representation ({\bf 3}, {\bf 2}), 
two multiplets $U^j$ and $D^j$ in the representation ($\bar{\bf 3}$, {\bf 1}), and a multiplet $L_f$ in the representation ({\bf 1}, {\bf 2}). 

This matter content is similar to the fermion matter content of one generation of  the Standard Model. The fields $Q$ (their fermion components) may be associated with left quarks, the fields $U$ and $D$ with right antiquarks $\bar u_R$ and $\bar d_R$, and the fields $L$ with left electron and neutrino. There is no $U(1)$ group and no ``right electron". The model is anomaly-free in the same way as the Standard model is\footnote{Of course, we do not mean that this model has some phenomenological significance. Probably not. But for a physicist familiar with the SM, this remark may have a mnemonic value.} 

The component Lagrangian of the model includes the quartic scalar interaction terms [cf. Eqs. \p{L-SQED-comp}, \p{L-SQCD-massless}]:
\be
\lb{V-D2}
V \ \sim \ ({\cal D}^{a=1,\ldots,8})^2 + (\tilde {\cal D}^{A=1,2,3})^2,
\ee
where 
\be
\lb{D-aA} 
{\cal D}^a \ &=&\  (q_{j f})^*   (t^a)_j{}^k q_{k f} - u^j (t^a)_j{}^k u^*_k - d^j (t^a)_j{}^k  d^*_k, \nn
\tilde {\cal D}^A \ &=& \ ( q_{j f})^*   (t^A)_f{}^g q_{j g}  + (l_f)^* (t^A)_f{}^g l_g\,.
\ee
Here $t^a$ are the generators of $SU(3)$ and $t^A$ are the generators of $SU(2)$.
 
The important observation is that this model admits a  vacuum valley --- there are classical  field configurations for which the ${\cal D}$ terms \p{D-aA} and the potential \p{V-D2}  vanish.
Up to gauge rotations, a general solution to the equations ${\cal D}^a = \tilde {\cal D}^A = 0$ is
 \begin{empheq}[box=\fcolorbox{black}{white}]{align}
\lb{dolina-32}
q_{j f} \ =\ \left(\begin{array}{cc} \tau_1 & 0  \\ 0 & \tau_2 \\ 0&0  \end{array} \right), \qquad l_f \ =\ \left(  0 , \sqrt{|\tau_1|^2 - |\tau_2|^2}  \right), \nn
u^j \ =\ ( \tau_1,   0,  0), \qquad \qquad d^j \ =\ ( 0, \tau_2,  0),
\end{empheq}
 where $\tau_{1,2}$ are complex parameters ($|\tau_1| \geq |\tau_2|$).

We have already dealt with the vacuum valleys. 
 For the simplest nonchiral $SU(2)$ model, the valley equations were written in Eq. \p{Dreal}. But the solution to these equations,
 \be
 s_j \ =\ \left( \begin{array}{c} \tau \\ 0 \end{array} \right), \qquad \qquad t^j \ =\ (\tau, 0)\, 
 \ee
was not stable in the massless case due to the presence of the nonperturbative instanton-generated superpotential \p{W-ADS}.

The same happens in the 3-2 model. The instantons generate the following superpotential:
 \be
\lb{W-inst-32}
{\cal W}^{\rm inst}_{3-2} \ =\ \frac {\Lambda^7}{Q_{j f} Q_{k g} U^j D^k \varepsilon^{fg} }.
 \ee
As was the case in the nonchiral SQCD, the valley solution \p{dolina-32} is not stable under the scale transformations $\tau \to \lambda \tau$, and we are facing the phenomenon of run-away vacuum. In nonchiral SQCD, the vacuum was stabilized by giving a mass to the matter fields. In the chiral 3-2 model, we cannot do so, but we can instead include a Yukawa term
\be
{\cal L}_{\rm Yukawa} \ =\ h\, \varepsilon^{fg} Q_{j f} U^j L_g \,,
 \ee
which plays the same role --- it locks the valley and fixes the values of $\tau_{1,2}$ at
 \be
\lb{phi-char}
\tau_{1,2} \ =\ c_{1,2} \frac \Lambda {h^{1/7}},
 \ee
where $c_{1,2}$ are some particular irrelevant for us numerical constants.

And here comes an essential difference between the chiral and nonchiral models. In the nonchiral massive SQCD, supersymmetry is not broken and the  energy of the vacuum states is zero. But in the 3-2 model, the vacuum energy density ${\cal E}$ is not zero. To estimate it, it  suffices to substitute the characteristic value \p{phi-char} of the scalar fields in the quartic WZ potential:
 \begin{empheq}[box=\fcolorbox{black}{white}]{align}
{\cal E}  \sim \left( \frac {\pd {\cal W}_{\rm Yukawa}}{\pd \phi}   \right)^2  \sim h^2 \phi_{\rm char}^4  = c_3 h^{10/7} \Lambda^4\,.
 \end{empheq}
As the breaking occurs in the weak coupling regime, the numerical factor $c_3$ can also be evaluated \cite{ADS-ort}.

The fact that supersymmetry is broken in this case can be explained as follows. Both in  nonchiral SQCD and in the 3-2 model, instanton effects bring about a nonzero gluino condensate $\langle \lambda \lambda \rangle$. 
In  nonchiral SQCD, the gluino condensate was related to the scalar vev, $\langle \lambda \lambda \rangle \propto m \tau^2$, in such a way that the vacuum average of the RHS of the Konishi identity \p{Konishi} vanished. This implied that the average of the anticommutator in the LHS of \p{Konishi} vanished too, which meant that supersymmetry was not broken. 

But there is no mass  in the 3-2 model, and nonvanishing $\langle \lambda \lambda \rangle$ means that the anticommutator on the left-hand side of the corresponding Konishi identity (which is valid for all gauge supersymmetric theories including matter multiplets) has a nonzero vev. Hence $\hat Q_\alpha |{\rm vac} \rangle \neq 0$ and supersymmetry is spontaneously broken. The coupling constants in this theory are small and everything (including the vacuum energy) can be perturbatively calculated.

Spontaneous supersymmetry breaking may occur also in other models if 
\begin{itemize}
\item The model admits a vacuum valley.
\item Instanton-generated superpotential pushes the system along the valley.
\item The model is chiral and the mass term cannot be written. To lock the valley, one has to introduce the Yukawa terms.
\end{itemize}

In Secs. 6.1, 6.2, 6.3, we considered three different chiral models, but neither of them satisfied all the conditions above.
\begin{itemize}
\item The chiral SQED involves a valley, the equation \p{valley-SQED}
has a plenty of solutions, but there are no instantons and no instanton-generated superpotential.
\item The minimal chiral $SU(3)$ model with 7 triplets and an antisextet has a valley, has instantons, but the latter fail to generate the required superpotential  \cite{ADS-ort}.
\item The minimal chiral $SU(5)$ model has no valleys. 
\end{itemize}

 The next in complexity model, where supersymmetry is spontaneously broken in the weak coupling regime, is the $SU(5)$ chiral model with {\it two} quintets $S^{f=1,2}$ and two antidecuplets $\Phi^f$. In this case, the expression for ${\cal D}^a$ includes twice as much variables as in the minimal $SU(5)$ model and solutions to the equations ${\cal D}^a = 0$ can be found though their explicit form is not simple \cite{Veld}. The instanton superpotential {\it is} generated in this valley. It reads
\be
\lb{W-inst-SU5}
{\cal W}^{\rm inst}_{SU(5)} \ =\ \frac {\Lambda^{11}}{B^{fg}B_{fg}},
 \ee
where $B_{fg} = \varepsilon_{ff'}  \varepsilon_{gg'} B^{f'g'} $ and $B^{fg} = \varepsilon_{ijklp}  (\Phi^f)^{ij} (\Phi^h)^{kl}
(\Phi_h)^{pq} S^g_q$. 
To lock the valley, one introduces the Yukawa term $\propto (\Phi^1)^{ij} S^f_i S^g_j \varepsilon_{fg}$.

Many more models with spontaneous SUSY breaking have been studied in the literature. We address the reader to the original paper \cite{ADS-ort}, to the reviews \cite{Amati,versus} and to many papers written afterwards for further details.

I am indebted to Ken Konishi and Misha Vysotsky for valuable comments.

\section*{Appendix A. Nuts and bolts of group theory.}

\renewcommand{\theequation}{A.\arabic{equation}}
\setcounter{equation}0

We quote here some  very basic facts and definitions of the theory of simple Lie groups and Lie algebras.

For a  Lie algebra of rank $r$, we can introduce the {\it Cartan-Weyl basis}, which includes $r$ elements of the Cartan subalgebra $\mathfrak{C}$  and a set of 
{\it root vectors} $E_{\alpha_j}$. Then for any $h \in \mathfrak{C}$, 
\be 
 [h, E_{\alpha_j}] \ =\ \alpha_j(h) E_{\alpha_j} \,.
 \ee
The functions $\alpha_j(h)$ are linear forms on the Cartan subalgebra, which are called the  {\it roots}.  If we choose in $\mathfrak{C}$ an orthonormal basis
the roots are represented as $r$-dimensional vectors.

 The root system has the following properties:

  $\bullet$  For any root $\alpha$, $-\alpha$ is also a root. The corresponding root vectors are Hermitially conjugated to each other: $E_{-\alpha} = E^\dagger_\alpha$. 

 $\bullet$  Let $\alpha, \beta$ be two arbitrary  roots. Then, if $\alpha+\beta$ is also a root, 
 \be
\lb{propto}
[E_\alpha, E_\beta] \ \propto\  E_{\alpha + \beta}\,.
\ee
If $\alpha+\beta$ is not a root, the commutator vanishes. 

  $\bullet$  In the set of all roots, one can distinguish a subset of $r$ {\it simple} roots. 
All other roots represent linear combinations of simple roots with integer coefficients, which either are all positive (and in this case, we are dealing with the {\it positive} roots) 
or all negative for {\it negative} roots.  

$\bullet$  The notation $\alpha$ is convenient to reserve for positive roots and denote  the negative roots by $-\alpha$.
For the positive roots, the sum of the integer coefficients of their expansion is called the {\it level} of the root. A root with the maximal level is called the {\it highest} root.
The  coefficients of the expansion of the highest root are called {\it Dynkin labels} of the algebra.

$\bullet$ The commutator $[E_\alpha, E_{-\alpha}]$ belongs to $\mathfrak{C}$.  A {\it coroot} corresponding to the root $\alpha$ and denoted as  $\alpha^\vee$ is proportional to this commutator, $\alpha^\vee = c [E_\alpha, E_{-\alpha}]$, with the coefficient $c$ determined from the condition
\be
e^{2\pi i \alpha^\vee} \ =\ \mathbb{1}, \qquad {\rm but} \qquad  e^{ i\phi \alpha^\vee} \ \neq\ \mathbb{1} \quad {\rm if} \ \phi < 2\pi\,.
 \ee
In the convenient {\it Chevalley normalization} of the root vectors, $c = 1$.

$\bullet$ Let $E_\alpha$ be a simple root vector. A {\it fundamental coweight} $\omega_\alpha$ is an element of the Cartan subalgebra that commutes with all  simple root vectors besides $E_{\pm \alpha}$, while 
$ [\omega_\alpha, E_\alpha] = E_\alpha$.

$\bullet$ For a generic  Lie algebra, the roots (the $r$-dimensional vectors that represent them) may have different length. If the length of the longer roots is normalized to 1, the  lengths of the shorter roots may be $1/\sqrt{2}$ or  $1/\sqrt{3}$ (for $g_2$). Two different long simple roots may be orthogonal or form the angle $2\pi/3$ between them. The same concerns two short simple roots. The angle between a long simple root and a short simple root of length  $1/\sqrt{2}$ may be $\pi/2$ or $3\pi/4$. The algebra $g_2$ involves a long simple root and a short simple root of length  $1/\sqrt{3}$. They form the angle $5\pi/6$.

$\bullet$ A convenient way to represent a generic simple algebra $\mathfrak{g}$ [or a generic simple group ${\cal G} = \exp(\mathfrak{g})$] is to draw a {\it Dynkin diagram}. Many such diagrams were drawn in   Sec.~4.   The large circles there are the long simple roots and smaller circles represent short simple roots. If the roots are orthogonal, they are not related by a line. A single line means the angle $2\pi/3$, a double line the angle $3\pi/4$ and a triple line  the angle $5\pi/6$. The dashed circles stand for $-\theta$, where 
$\theta$ is  the highest root (that is why the diagrams drawn in Fig. \ref{Dyn7} etc. are {\it extended}; in the ordinary Dynkin diagrams, the highest root is not shown).  The numbers in the circles are the Dynkin labels.

To illustrate this, consider the algebra $spin(7)$. It has rank 3 and  3 simple roots: $\alpha, \beta$ and $\gamma$, two long and one short.  Its Dynkin diagram  was drawn in  Fig. \ref{Dyn7}. There are 6 other positive roots: four long roots $\alpha+\beta, \beta + 2\gamma, \alpha+\beta+2\gamma$ and the highest root $\theta = \alpha + 2\beta + 2\gamma$, and two short roots $\beta+\gamma$ and $\alpha+\beta+\gamma$. The simple coroots were written in Eq. \p{coroots}. The basis of $spin(7)$ includes 9 positive and 9 negative root vectors and 3 independent elements of $\mathfrak{C}$. The dimension of $spin(7)$ is $9 \cdot 2 + 3 = 21$.

 We give here three more definitions.  

{\it (i)} The {\it maximal torus} is an Abelian subgroup of $\cal G$ generated by its Cartan subalgebra. 

{\it (ii)}  The {\it centralizer} of an element $\Omega \in {\cal G}$ or of a set of elements $\{\Omega_i \in {\cal G}\}$ is the subgroup of $\cal G$ that commutes with $\Omega$ (resp. with all the elements $\Omega_i$ in the set).

{\it (iii)} We define the {\it Dynkin index} $I(R)$ of a representation $R$ in group $\cal G$  as\footnote{Sometimes, it is defined with an extra factor 2 \cite{Slansky}.}
\begin{empheq}[box=\fcolorbox{black}{white}]{align}
\lb{Dyn-ind}
{\rm Tr} \{ T^a T^b \} \ =\   I(R) \delta^{ab} \,,
\end{empheq}
where  $T^a$ are the generators in the representation $R$ normalized in a standard way so that $[T^a, T^b] = if^{abc} T^c$, with $ f_{abc}$ being the structure constants of the group.

In any simple group, the Dynkin index of the adjoint representation  coincides with  the adjoint Casimir eigenvalue. Another name for the latter important invariant is {\it dual Coxeter number}.
 $I=1/2$ for the fundamental representation in $SU(N)$, and $I=1$ for the vector representation in $SO(N \ge 5)$.

\subsection*{Octonions}
\begin{figure}
\bc
    \includegraphics[width=.5\textwidth]{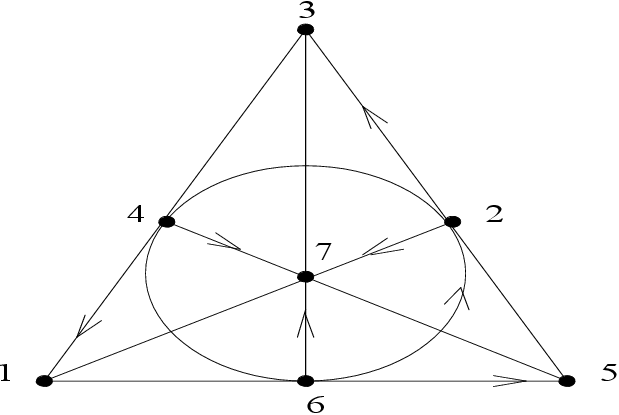}                  
     \ec
\caption{Fano plane}
\label{Fano}
\end{figure}
In   Sec.~5.3, we were discussing a sypersymmetric gauge theory based on the group $G_2$. This group can be defined as 
the group of automorphisms of {\it octonion} algebra. 
An octonion is an object $a_0 + a_\alpha e_\alpha$, where $a_0$ and $a_{1,\ldots,7}$ are real numbers and $e_\alpha$ satisfy the algebra
\be 
\lb{alg-oct}
 e_\alpha e_\beta \ =\ - \delta_{\alpha\beta} + f_{\alpha\beta\gamma} e_\gamma \,.
 \ee
One of the possible choices for the antisymmetric tensor $ f_{\alpha\beta\gamma}$ is
\be
  \label{f}
  f_{165} = f_{341} = f_{523} = f_{271} = f_{673} = f_{475} = f_{246} = 1
  \ee
  and all other nonzero components are restored by antisymmetry. The values
  (\ref{f}) can be mnemonized by looking at the {\it Fano plane} depicted in Fig. \ref{Fano}.
  $f_{\alpha\beta\gamma}$  is nonzero only for the indices lying on the
 same line, with the arrows indicating the order of indices when
  $f_{\alpha\beta\gamma}$ is positive.

\section*{Appendix B. The derivation of Eq. \p{Schr-rad}}
\renewcommand{\theequation}{B.\arabic{equation}}
\setcounter{equation}0

The Schr\"odinger equation for a scalar charged particle in a monopole field was solved by Igor Tamm\footnote{He was visiting Cambridge at that time and learned about Dirac's work directly from Dirac.} back in 1931 \cite{Tamm}. Let the electric charge of the particle be $e=1$. The angular momentum operator that commutes with the Hamiltonian 
\be
\hat H_0 \ =\ \frac 12 \left( \hat {\vecg{P}} + \vecg{\cal A}\right)^2\,,
 \ee
where $ \vecg{\cal A}$ is the vector potential of a monopole\footnote{Dirac and Tamm used an expression for $\vecg{\cal A}$ that involved a singularity on the {\it Dirac string} [the singularity at $\theta = \pi$ in Eq. \p{vecpot}], but later it was realized that the projection  $\vecg{\cal A} \cdot \vecg{n}$ represents a {\it fiber bundle} on $S^2$, which can be described by nonsingular expressions on two overlapping charts \cite{Wu-Yang}.} with the minimal magnetic charge $g= 1/2$,
has the form 
 \be
\hat {\vecg{L}_0} \ = \ \vecg{r} \times \left( \hat {\vecg{P}} + \vecg{\cal A}\right)  + \frac {\vecg{r}}{2r}\,.
 \ee
The second term is the angular momentum of electromagnetic field. The eigenvalues of $\hat {\vecg{L}_0}^2$ are $l(l+1)$ with r half-integer $l$.
The eigenvalues of  $\hat{L}_3^{(0)}$ are $m = -l,-l+1,\ldots, l$. The corresponding eigenfunction are the so-called 
{\it monopole harmonics} \cite{Wu-Yang}. 

We have 
\be
\hat H_0 \ =\ - \frac 1{2r^2} \frac \pd{\pd r} \left( r^2  \frac \pd{\pd r} \right) +  \frac 1{2r^2} \left[\vecg{r} \times (\hat{\vecg{P}} + \vecg{\cal A})  \right]^2 
=  - \frac 1{2r^2} \frac \pd{\pd r} \left( r^2  \frac \pd{\pd r} \right)  +  \frac {\hat {\vecg{L}_0}^2 - \frac 14}{2r^2} \,.
\ee
For the states with definite $l$, this gives
$$  \hat H_0 \ =\  - \frac 1{2r^2} \frac \pd{\pd r} \left( r^2  \frac \pd{\pd r} \right) 
\ + \  \frac {l(l+1) - \frac 14}{2r^2} \,.$$
The full Hamiltonian \p{H-chiral} commutes with
 \be
\lb{full-L}
\hat{\vecg{L}} \ =\ \hat{\vecg{L}}_0 + \frac 12 \hat{\bar \eta} \vecg{\sigma} \eta.
 \ee
In fact, \p{full-L} may be interpreted as a Hamiltonian describing the interaction of a spin $1/2$ particle with gyromagnetic ration 2 (twice as much as for electron) with a monopole \cite{Malkus}\footnote{One can also consider the interaction of a $W$ boson, which has spin 1 \cite{W-mon}. A bound state of a monopole and a   $W$ boson is a {\it dyon} \cite{dyon}.} and the second term in \p{full-L} is the spin part of the angular momentum. The matrix $\vecg{\sigma} \cdot \vecg{n}$ has the eigenvalues $\lambda = \pm 1$. Choosing $\lambda = -1$ (this corresponds to the total angular momentum $j = 0$), we are led to the radial Schr\"odinger equation
 \be
- \frac 1{2r} \frac {\pd^2}{\pd r^2} (r\psi) - \frac 1{4r^2} \psi \ =\ E\psi \,.
\ee
Adding the term $D^2/2$ in the potential, we reproduce \p{Schr-rad}.

\end{document}